\documentclass[12pt,a4paper]{article}
\topmargin=0mm \evensidemargin=0mm \oddsidemargin=0mm \headsep=0mm
\textwidth=16.55cm \textheight=23.5cm

\parskip=0mm
\parindent=2ex

\usepackage[T1]{fontenc}
\usepackage[latin1]{inputenc}
\usepackage{color}


\usepackage{booktabs}

\usepackage{graphicx}


\usepackage{amsfonts}
\usepackage{amsmath}
\usepackage{amssymb,stmaryrd}
\usepackage{theorem}
\usepackage{mathtools}


\theoremstyle{change}
\theorembodyfont{\slshape}
\newtheorem{satz}{Theorem}[section]
\newtheorem{thm}[satz]{Theorem}

\newtheorem{prop}[satz]{Proposition}

\theorembodyfont{\upshape\small}
\newtheorem{bsp}[satz]{Example}
\newtheorem{bem}[satz]{Remark}

\theorembodyfont{\ttfamily}

\newenvironment{proof}{\list{}{\itemindent-\leftmargin}%
                \item\textbf{Proof: }\small}{\hbox{}\hfill\#\newline\endlist\vspace{-2mm}}

\numberwithin{equation}{section}
\newcommand{\ba}{\begin{equation}}
\newcommand{\ea}{\end{equation}}

\newcommand{\0}{\mbox{\boldmath $0$}}

\newcommand{\mG}{\mbox{\textup{\textbf{G}}}}
\newcommand{\mJ}{\mbox{\textup{\textbf{J}}}}

\newcommand{\ftheta}{\mbox{\boldmath $\theta$}}
\newcommand{\fTheta}{\mbox{\boldmath $\Theta$}}
\newcommand{\fthetai}{\mbox{\scriptsize\boldmath $\theta$}}
\newcommand{\fThetai}{\mbox{\scriptsize\boldmath $\Theta$}}

\newcommand{\bin}{\textup{Bin}}
\newcommand{\poi}{\textup{Poi}}

\newcommand{\zip}{\textup{ZIP}}
\newcommand{\nb}{\textup{NB}}
\newcommand{\norm}{\textup{N}}
\newcommand{\pgf}{\textup{pgf}}

\newcommand{\bbn}{\mathbb{N}}

\newcommand{\bbz}{\mathbb{Z}}



\newcommand{\iid}{i.\,i.\,d.}
\newcommand{\ie}{i.\,e., }
\newcommand{\eg}{e.\,g., }

\usepackage{dsfont}
\newcommand{\indfkt}{\mathds{1}}

\usepackage{natbib}


\usepackage{hyperref}


\allowdisplaybreaks
\usepackage{indentfirst}
\begin{document}



\parindent 0.6cm

\title{Conditional-mean Multiplicative Operator Models for Count Time Series}
\author{
Christian H.\ Wei\ss{}\thanks{
Helmut Schmidt University, Department of Mathematics and Statistics, Hamburg, Germany. E-Mail: \href{mailto:weissc@hsu-hh.de}{\nolinkurl{weissc@hsu-hh.de}}. ORCID: \href{https://orcid.org/0000-0001-8739-6631}{\nolinkurl{0000-0001-8739-6631}}.}
\and
Fukang Zhu\thanks{
School of Mathematics, Jilin University, 2699 Qianjin, Changchun 130012, China. E-Mail: \href{mailto:zfk8010@163.com}{\nolinkurl{zfk8010@163.com}}. ORCID: \href{https://orcid.org/0000-0002-8808-8179}{\nolinkurl{0000-0002-8808-8179}}.}\ \thanks{Corresponding author.}
}

\maketitle

\begin{abstract}
\noindent
Multiplicative error models (MEMs) are commonly used for real-valued time series, but they cannot be applied to discrete-valued count time series as the involved multiplication would not preserve the integer nature of the data. Thus, the concept of a multiplicative operator for counts is proposed (as well as several specific instances thereof), which are then used to develop a kind of MEMs for count time series (CMEMs). If equipped with a linear conditional mean, the resulting CMEMs are closely related to the class of so-called integer-valued generalized autoregressive conditional heteroscedasticity (INGARCH) models and might be used as a semi-parametric extension thereof. Important stochastic properties of different types of INGARCH-CMEM as well as relevant estimation approaches are derived, namely types of quasi-maximum likelihood and weighted least squares estimation. The performance and application are demonstrated with simulations as well as with two real-world data examples.

\medskip
\noindent
\textsc{Key words:}
Count time series; INGARCH models; multiplicative error model; multiplicative operator; semi-parametric estimation.
\end{abstract}

\section{Introduction}
\label{Introduction}
Since their introduction by \citet{engle02}, the \emph{multiplicative error models} (MEMs) received considerable research interest in the literature, see \citet{brownlees12}, \citet{cipollini22}, and the references therein. MEMs are defined for a positively real-valued process $(Y_t)_{t\in\bbz=\{\ldots,-1,0,1,\ldots\}}$; let $\mathcal F_{s}$ denote the information being available up to time~$s$. Then, $(Y_t)$ follows a MEM if it satisfies the equation
\ba
\label{MEMdef}
Y_t\ =\ \mu_t\cdot\varepsilon_t,
\ea
where $\mu_t|\mathcal F_{t-1}$ is a deterministic, truly positive number, and where~$\varepsilon_t$ is a positively real-valued random variable that is generated independently of~$\mathcal F_{t-1}$ with (conditional) mean $E[\varepsilon_t\ |\ \mathcal F_{t-1}]=E[\varepsilon_t]=1$ and variance $\sigma_t^2 := V[\varepsilon_t\ |\ \mathcal F_{t-1}]= V[\varepsilon_t]\in (0,\infty)$. The observations $(Y_t)$ of the MEM satisfy
\ba
\label{MEMmom}
E[Y_t\ |\ \mathcal F_{t-1}]\ =\ \mu_t,\qquad
V[Y_t\ |\ \mathcal F_{t-1}]=\sigma_t^2\,\mu_t^2.
\ea
If $(\varepsilon_t)$  are \iid\ with mean~1 and variance~$\sigma^2$, and if the conditional means $(\mu_t)$ satisfy the equation
\ba
\label{ACDdef}
\mu_t\ = a_0+\sum_{i=1}^p a_i\, Y_{t-i}+\sum_{j=1}^q b_j\, \mu_{t-j},
\ea
then the resulting MEM is said to be an autoregressive conditional duration (ACD) model of order $(p,q)$, see \citet{engle98}. Here, the constraints $a_0>0$, $a_1,\ldots,a_{p},b_1,\ldots,b_{q}\geq 0$, and $\sum_{i=1}^p a_i+\sum_{j=1}^q b_j < 1$ have to be satisfied to ensure the existence of a truly positive conditional mean.
The construction \eqref{MEMdef} is also closely related to that of the generalized autoregressive conditional heteroscedasticity (GARCH) models \citep{bollerslev86}.

\medskip
The conditional-mean equation \eqref{ACDdef} of the ACD model gave rise to \citet{heinen03} to define an integer-valued counterpart for a count process $(X_t)$, \ie where the~$X_t$ are (unbounded) count random variables with range $\bbn_0=\{0,1,\ldots\}$. With $M_t=E[X_t\ |\ \mathcal{F}_{t-1}]$ denoting the conditional mean (thus, now, $\mathcal{F}_{t-1}$ is the $\sigma$-field generated by $\{(X_{t-1},M_{t-1}),(X_{t-2},M_{t-2}),\ldots\}$), \citet{heinen03} considered the model equation
\ba
\label{INGARCHmodels}
M_t\ =\ a_0+\sum_{i=1}^p a_i\, X_{t-i}+\sum_{j=1}^q b_j\, M_{t-j},
\ea
where~$a_0>0,a_1,\ldots,a_p\geq0,b_1,\ldots,b_q\geq0$, and
$X_t|\mathcal{F}_{t-1}$ is generated by a conditional count distribution having mean~$M_t$, such as the Poisson (Poi) distribution $\poi(M_t)$. \citet{heinen03} referred to this model as the autoregressive conditional Poisson (ACP) model, while \citet{ferland06} used the term integer-valued GARCH (INGARCH) model. Although there is still a debate about the appropriate terminology for models of the type \eqref{INGARCHmodels}, see Remark~4.1.2 in \citet{weiss18}, it seems that the name INGARCH model is most often used in the literature. Thus, in the sequel, we shall also refer to count models satisfying \eqref{INGARCHmodels} as INGARCH-type models.

\medskip
In this article, our aim is to define MEM-type models for count time series, but not starting at \eqref{ACDdef} as done by \citet{heinen03}, but one step earlier right with the original definition \eqref{MEMdef} of a MEM. But since the multiplication $\alpha\cdot\epsilon$ of a count random variable~$\epsilon$ by a positive real number~$\alpha\in(0,\infty)$ does usually not lead to a count value anymore \citep[``multiplication problem'', see][p.~16]{weiss18}, an integer-valued substitute of the multiplication is required. This multiplication problem is well known from the adaption of the autoregressive moving-average (ARMA) model to the count-data case. The so-called integer-valued ARMA (INARMA) model modifies the ordinary ARMA recursion by substituting its multiplications by so-called ``thinning operators'' (plus several further assumptions, see \citet{weiss18} for details).
During the last decades, a huge number of different types of thinning operators have been proposed in the literature, see \citet{weiss08} and \citet{scotto15} for surveys. Generally, a thinning operator ``$\textup{thinn}$'' is a random operator from~$\bbn_0$ to~$\bbn_0$,
whose conditional mean satisfies $E[\textup{thinn}(\epsilon)\ |\ \epsilon] \leq \epsilon$. Many common thinning operators are defined in an additive way and fall within the family of \emph{generalized thinning} operators proposed by \citet{latour98}:
\ba
\label{genthinning}
\alpha\bullet_{\nu} \epsilon\ :=\ \sum_{j=1}^{\epsilon}\ Z_j,
\ea
where the~$Z_j$ are \iid\ count random variables (\emph{counting series}, generated independently of~$\epsilon$) with mean~$\alpha\in (0,1)$ and variance~$\nu>0$. As a result,
$E[\alpha\bullet_{\nu} \epsilon\ |\ \epsilon] = \alpha\cdot\epsilon\ \leq \epsilon$, which justifies its use as an integer-valued substitute of the multiplication.

\begin{bsp}
\label{examThinnings}
Popular examples of generalized thinning are

\begin{itemize}
	\item[(i)] \emph{binomial thinning} dating back to \citet{steutel79} (and commonly denoted by ``$\alpha\circ$''), where the~$Z_j$ are Bernoulli random variables, $Z_j\sim\bin(1,\alpha)$ (thus $\nu=\nu(\alpha)=\alpha(1-\alpha)$), such that $\alpha\circ \epsilon|\epsilon\ \sim\bin(\epsilon,\alpha)$ with range $\{0,\ldots,\epsilon\}$;
	
	\item[(ii)] \emph{negative-binomial (NB) thinning} proposed by \citet{ristic09} (and commonly denoted by ``$\alpha *$''), where the $Z_j$ are geometrically distributed with parameter $\pi:=1/(1+\alpha)$, $Z_j\sim\nb(1,\pi)$ (hence $E[Z_j]=\alpha$ and $\nu(\alpha)=\alpha(1+\alpha)$), such that $\alpha * \epsilon|\epsilon\ \sim \nb\big(\epsilon, 1/(1+\alpha)\big)$ with range $\bbn_0$;
	
	\item[(iii)] \emph{Poisson thinning} \citep{scotto15} with $Z_j\sim\poi(\alpha)$ (thus $\nu(\alpha)=\alpha$) such that $\alpha\bullet_{\nu} \epsilon|\epsilon\ \sim \poi(\epsilon\cdot\alpha)$ with range $\bbn_0$.
\end{itemize}
\end{bsp}
In this article, we propose multiplicative operators for counts, which arise, for example, by generalizing operator \eqref{genthinning}. Section~\ref{Multiplicative Operators for Counts} introduces new multiplicative operators and discusses their basic properties.
Using such multiplicative operator, we adapt the MEM \eqref{MEMdef} to count time series, leading to a \emph{count MEM} (CMEM). In particular, we propose CMEMs under the framework of INGARCH model, referred to as INGARCH-CMEMs.
Two specific cases are studied in Section~\ref{MEMs for Count Time Series}, namely CMEMs based on the compounding operator and the binomial multiplicative operator, respectively. Stationarity and ergodicity of the new models are established, and some probabilistic properties are also discussed. Note that our novel INGARCH-CMEMs differ from the thinning-based MEM adaption recently proposed by \citet{aknouche22}, see Remark~\ref{remCMEMaknouche} for details.
In Section~\ref{Parameter Estimation}, two kinds of semi-parametric estimation methods are considered: quasi-maximum likelihood estimation (QMLE) and two-stage weighted least squares estimation (2SWLSE).
The finite-sample performance of these methods is investigated by simulations in Section~\ref{Simulations}.
Section~\ref{Real-World Data Examples} then illustrates the application of our novel CMEMs to two real-world data examples, namely to count time series referring to infectious diseases and financial transactions, respectively.
Finally, Section~\ref{Conclusions} concludes the article and outlines topics for future research.

\section{Multiplicative Operators for Counts}
\label{Multiplicative Operators for Counts}
For being able to adapt \eqref{MEMdef} in full generality, the restriction $E[\textup{thinn}(\epsilon)\ |\ \epsilon] \leq \epsilon$ being inherent to any thinning operator is too limiting, also the case ``$>\epsilon$'' (\ie also a ``thickening'') should be permitted.
Thus, any operator ``$\alpha\odot: \bbn_0\to\bbn_0$'' satisfying
\ba
\label{mult_op}
\textstyle
E[\alpha\odot \epsilon\ |\ \epsilon]\ =\ \alpha\cdot\epsilon
\qquad\text{for all } \alpha>0
\ea
is referred to as an integer-valued \emph{multiplicative operator}.
Note that the multiplicative property refers to the conditional mean of~$\alpha\odot \epsilon$, but to keep it simple, we use the terminology ``multiplicative operator'' for ``$\alpha\odot$''.
By \eqref{mult_op}, also $E[\alpha\odot \epsilon] = \alpha\, E[\epsilon]$ is satisfied. To compute the (conditional) variance of $\alpha\odot \epsilon$, further assumptions on ``$\alpha\odot$'' are necessary. In any of the subsequent special cases, also the conditional variance of $\alpha\odot \epsilon | \epsilon$ is proportional to~$\epsilon$, with the constant of proportionality given by some $\nu=\nu(\alpha)>0$. In this special case, we immediately obtain
\ba
\label{mult_op_var}
V[\alpha\odot \epsilon\ |\ \epsilon]\ =\ \nu\cdot\epsilon,
\qquad
V[\alpha\odot \epsilon]
\ =\ \nu\, E[\epsilon] + \alpha^2\,V[\epsilon].
\ea
While moment calculations can be done solely based on \eqref{mult_op} and \eqref{mult_op_var}, likelihood calculations or the simulation of ``$\alpha\odot$'' require to fully specify the conditional distribution of $\alpha\odot \epsilon | \epsilon$. This is done in the subsequent special cases.

\medskip
The requirements imposed by \eqref{mult_op} are easily achieved with an operator of the form \eqref{genthinning}, namely by requiring that~$\alpha$ can take any value in $(0,\infty)$. To not confuse the terminology, we refer to
\ba
\label{comp_op}
\begin{array}{@{}l}
\alpha\bullet_{\nu} \epsilon\ :=\ \sum_{j=1}^{\epsilon}\ Z_j\qquad \text{with \iid\ counting series } (Z_j),\\[1ex]
 \text{which has arbitrary mean~$\alpha>0$ and variance~$\nu>0$},
\end{array}
\ea
as a \emph{compounding operator}.
Important (conditional) moments of ``$\alpha\bullet_{\nu}$'' are
\ba
\label{CompoundingCMom}
\begin{array}{@{}l@{\qquad}l}
E[\alpha\bullet_{\nu} \epsilon\ |\ \epsilon]\ =\ \alpha\cdot\epsilon,
&
V[\alpha\bullet_{\nu} \epsilon\ |\ \epsilon]\ =\ \nu\cdot\epsilon,
\qquad
E\big[(\alpha\bullet_{\nu} \epsilon)^2\ \big|\ \epsilon\big]\ =\ \nu\,\epsilon + \alpha^2\,\epsilon^2,
\\[1ex]
E[\alpha\bullet_{\nu} \epsilon]\ =\ \alpha\, E[\epsilon],
&
V[\alpha\bullet_{\nu} \epsilon]\ =\ \nu\, E[\epsilon] + \alpha^2\,V[\epsilon],
\end{array}
\ea
see \eqref{mult_op_var}. Also the probability generating function (pgf) of \eqref{comp_op} is easily expressed. Denoting the pgf of the counting series by $\pgf_Z(u) := E[u^Z]$, and that of~$\epsilon$ by $\pgf_\epsilon(u)$, it holds that
\ba
\label{CompoundingPgf}
\pgf_{\alpha\bullet_{\nu} \epsilon}(u)\ =\
\pgf_{\epsilon}\big(\pgf_Z(u)\big).
\ea
Thus, the distribution of~$\alpha\bullet_{\nu} \epsilon$ can be expressed as the convolution
\ba
\label{CompoundingPmf}
P(\alpha\bullet_{\nu} \epsilon=k)\ =\ \sum_{l=0}^\infty P(\epsilon=l)\cdot P(\alpha\bullet_{\nu} l=k).
\ea
Obviously, a compounding operator with Poisson or geometric counting series (recall Example~\ref{examThinnings}) is still well-defined, with
$$
\alpha\bullet_{\nu} l\ \sim\ \left\{\begin{array}{ll}
\poi(l\,\alpha) & \text{if } Z_j\sim\poi(\alpha),\\[.5ex]
\nb\big(l, 1/(1+\alpha)\big) & \text{if } Z_j\sim\nb\big(1, 1/(1+\alpha)\big).\\
\end{array}\right.
$$
However, a Bernoulli counting series with $\alpha>1$ does not exist, and the extension of the binomial thinning operator ``$\alpha \circ$'' to a full multiplicative operator is not obvious. In what follows, we propose a generalized definition of the ``$\alpha \circ$'' operator, which contains both thinning (if $\alpha<1$) and thickening (if $\alpha>1$):
\ba
\label{bin_op}
\alpha\otimes \epsilon =\ \lfloor\alpha\rfloor\cdot \epsilon\ +\ \bin\big(\epsilon,\ \alpha-\lfloor\alpha\rfloor\big).
\ea
Here, $\lfloor\alpha\rfloor$ denotes the greatest integer less than or equal to $\alpha$, $\bin\big(\epsilon,\ \alpha-\lfloor\alpha\rfloor\big)$ expresses a count random variable being generated according to this (conditional) distribution, and the two terms in the above addition are conditionally independent.
The operator $\otimes$ in \eqref{bin_op} is referred to as the \emph{binomial multiplicative operator}, and it differs from the (purely deterministic and non-multiplicative) rounding operator of \citet{kachour09}.
Note that if $\alpha>1$, it is guaranteed that $\alpha\otimes \epsilon$ gives a value $\geq \epsilon$, which differs from the compounding operator $\alpha\bullet_{\nu}$ with $\alpha>1$. $\alpha=1$ corresponds to a (non-random) preservation of~$\epsilon$.
Further stochastic properties of the binomial multiplicative operator are as follows.

\begin{prop}
\label{prop_bin_mult}
Operator \eqref{bin_op} has the conditional range $\big\{\lfloor\alpha\rfloor\, \epsilon, \ldots, (\lfloor\alpha\rfloor+1)\, \epsilon\big\}$, the conditional moments
$$
E[\alpha\otimes \epsilon\ |\ \epsilon]\ =\ \alpha\cdot \epsilon,
\qquad
V[\alpha\otimes \epsilon\ |\ \epsilon]\ =\ (\alpha-\lfloor\alpha\rfloor)(1-\alpha+\lfloor\alpha\rfloor)\,\epsilon\ =: \nu(\alpha)\,\epsilon,
$$
the unconditional moments
$$
E[\alpha\otimes \epsilon]
\ =\ \alpha\cdot E[\epsilon],
\qquad
V[\alpha\otimes \epsilon]\ =\
\alpha^2\cdot V[\epsilon]\ +\ (\alpha-\lfloor\alpha\rfloor)(1-\alpha+\lfloor\alpha\rfloor)\cdot E[\epsilon],
$$
and the pgf
$$
\pgf_{\alpha\otimes \epsilon}(u)\ =\
\pgf_{\epsilon}\Big(u^{\lfloor\alpha\rfloor}\, \big(1-(\alpha-\lfloor\alpha\rfloor)+(\alpha-\lfloor\alpha\rfloor)\,u\big)\Big).
$$
\end{prop}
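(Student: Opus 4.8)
The plan is to exploit the additive decomposition in the definition \eqref{bin_op}, writing $\alpha\otimes\epsilon = \lfloor\alpha\rfloor\,\epsilon + B$, where, conditionally on $\epsilon$, the summand $B$ follows a $\bin(\epsilon,\alpha-\lfloor\alpha\rfloor)$ distribution and (by assumption) is independent of the term $\lfloor\alpha\rfloor\,\epsilon$, which is deterministic given $\epsilon$. Abbreviating $\beta := \alpha-\lfloor\alpha\rfloor \in [0,1)$ keeps the bookkeeping light. The conditional range is then immediate: since $B$ ranges over $\{0,\ldots,\epsilon\}$, the shifted variable $\lfloor\alpha\rfloor\,\epsilon + B$ ranges over $\{\lfloor\alpha\rfloor\,\epsilon,\ldots,(\lfloor\alpha\rfloor+1)\,\epsilon\}$.

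For the conditional moments, I would substitute the standard binomial formulas $E[B\mid\epsilon]=\epsilon\,\beta$ and $V[B\mid\epsilon]=\epsilon\,\beta(1-\beta)$. Adding the deterministic shift $\lfloor\alpha\rfloor\,\epsilon$ to the mean gives $E[\alpha\otimes\epsilon\mid\epsilon]=\lfloor\alpha\rfloor\,\epsilon+\epsilon\,\beta=\alpha\,\epsilon$, while the shift leaves the variance unchanged, so $V[\alpha\otimes\epsilon\mid\epsilon]=\epsilon\,\beta(1-\beta)=\nu(\alpha)\,\epsilon$, which matches the stated $(\alpha-\lfloor\alpha\rfloor)(1-\alpha+\lfloor\alpha\rfloor)\,\epsilon$ after resubstituting $\beta$. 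The unconditional moments then follow by the usual smoothing arguments: the tower rule gives $E[\alpha\otimes\epsilon]=E[\alpha\,\epsilon]=\alpha\,E[\epsilon]$, and the law of total variance $V[\,\cdot\,]=E\big[V[\,\cdot\mid\epsilon]\big]+V\big[E[\,\cdot\mid\epsilon]\big]$ yields $V[\alpha\otimes\epsilon]=\nu(\alpha)\,E[\epsilon]+\alpha^2\,V[\epsilon]$, since the inner variance equals $\nu(\alpha)\,\epsilon$ and the inner mean equals $\alpha\,\epsilon$.

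For the pgf I would again condition on $\epsilon$ and use the conditional independence of the two summands to factor the conditional expectation, $E[u^{\alpha\otimes\epsilon}\mid\epsilon]=u^{\lfloor\alpha\rfloor\,\epsilon}\,E[u^{B}\mid\epsilon]=u^{\lfloor\alpha\rfloor\,\epsilon}(1-\beta+\beta u)^{\epsilon}=\big(u^{\lfloor\alpha\rfloor}(1-\beta+\beta u)\big)^{\epsilon}$, invoking the binomial pgf $E[u^B\mid\epsilon]=(1-\beta+\beta u)^{\epsilon}$. Taking the outer expectation identifies the result as $\pgf_\epsilon$ evaluated at $u^{\lfloor\alpha\rfloor}(1-\beta+\beta u)$, which is precisely the claimed expression once $\beta=\alpha-\lfloor\alpha\rfloor$ is substituted back. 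None of the individual steps poses a genuine obstacle — everything reduces to elementary binomial identities combined with conditioning — so the only point requiring care is to keep the deterministic term $\lfloor\alpha\rfloor\,\epsilon$ and the binomial term cleanly separated, remembering in particular that the deterministic shift contributes to the conditional mean and to the pgf (through the factor $u^{\lfloor\alpha\rfloor\,\epsilon}$) but not to the conditional variance.
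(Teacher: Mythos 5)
Your proof is correct and follows essentially the same route as the paper's: decompose $\alpha\otimes\epsilon$ into the conditionally deterministic term $\lfloor\alpha\rfloor\,\epsilon$ plus the conditionally independent binomial term, read off the conditional moments from standard binomial formulas, obtain the unconditional moments by the law of total expectation/variance (which is exactly what the paper's citation of \eqref{mult_op_var} encodes), and derive the pgf by conditioning on $\epsilon$ and factoring. The only difference is cosmetic: you explicitly justify the conditional range, which the paper treats as immediate.
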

\begin{proof}
The conditional mean of the operator is
\begin{eqnarray*}
E[\alpha\otimes \epsilon\ |\ \epsilon] &=&
E\big[\lfloor\alpha\rfloor\, \epsilon\ |\ \epsilon]\ +\ E\big[\bin\big(\epsilon,\ \alpha-\lfloor\alpha\rfloor\big)\ |\ \epsilon\big]
\\
&=&
\lfloor\alpha\rfloor\, \epsilon\ +\ (\alpha-\lfloor\alpha\rfloor)\,\epsilon
\ =\ \alpha\cdot \epsilon.
\end{eqnarray*}
The conditional variance is
\begin{eqnarray*}
V[\alpha\otimes \epsilon\ |\ \epsilon] &=&
V\big[\underbrace{\lfloor\alpha\rfloor\, \epsilon}_{\text{(*)}}\ +\ \bin\big(\epsilon,\ \alpha-\lfloor\alpha\rfloor\big)\ |\ \epsilon\big]
\\
&=&
V\big[\bin\big(\epsilon,\ \alpha-\lfloor\alpha\rfloor\big)\ |\ \epsilon\big]
\ =\ (\alpha-\lfloor\alpha\rfloor)(1-\alpha+\lfloor\alpha\rfloor)\,\epsilon,
\end{eqnarray*}
where at (*), we used that $\lfloor\alpha\rfloor\, \epsilon \big| \epsilon$ is conditionally constant. So the unconditional moments immediately follow from \eqref{mult_op_var}.
The pgf is
\begin{eqnarray*}
\pgf_{\alpha\otimes \epsilon}(u) &=&
E\big[u^{\alpha\otimes \epsilon}\big]\ =\
E\Big[E\big[u^{\lfloor\alpha\rfloor\cdot \epsilon\ +\ \bin(\epsilon,\ \alpha-\lfloor\alpha\rfloor)}\ \big|\ \epsilon\big]\Big]
\\
&=&
E\Big[u^{\lfloor\alpha\rfloor\cdot \epsilon}\cdot E\big[u^{\bin(\epsilon,\ \alpha-\lfloor\alpha\rfloor)}\ \big|\ \epsilon\big]\Big]
\\
&=&
E\Big[u^{\lfloor\alpha\rfloor\cdot \epsilon}\cdot \big(1-(\alpha-\lfloor\alpha\rfloor)+(\alpha-\lfloor\alpha\rfloor)\,u\big)^\epsilon\Big]
\\
&=&
\pgf_{\epsilon}\Big(u^{\lfloor\alpha\rfloor}\, \big(1-(\alpha-\lfloor\alpha\rfloor)+(\alpha-\lfloor\alpha\rfloor)\,u\big)\Big),
\end{eqnarray*}
which completes the proof of Proposition~\ref{prop_bin_mult}.
\end{proof}
While our main interest is in CMEMs based on the compounding operator \eqref{comp_op} and on the binomial multiplicative operator \eqref{bin_op}, respectively, let us conclude Section~\ref{Multiplicative Operators for Counts} by pointing out that multiplicative operators do not need to be of such an additive structure. Actually, any count distribution with closed-form mean might give rise to a multiplicative operator. For illustration, let us take the zero-inflated Poisson (ZIP) distribution $\zip(\lambda,\omega)$ with $\lambda>0$ and $\omega\in (0,1)$ as the starting point. Its mean~$\mu$ and variance~$\sigma^2$ are given by the relations $\mu=(1-\omega)\,\lambda$ and $\sigma^2/\mu = \lambda-\mu+1$. Thus, a ZIP-multiplicative operator ``$\alpha\varoast_\kappa$'' with $\alpha>0$ and $\kappa>1$ might be defined by requiring that $\alpha\varoast_\kappa \epsilon|\epsilon$ is conditionally ZIP-distributed with mean $E[\alpha\varoast_\kappa \epsilon\,|\,\epsilon] = \alpha\,\epsilon$ and variance $V[\alpha\varoast_\kappa \epsilon\,|\,\epsilon] = \kappa\cdot E[\alpha\varoast_\kappa \epsilon\,|\,\epsilon]$, \ie with conditional ZIP-parameters $\lambda=\alpha\,\epsilon+\kappa-1$ and $\omega=(\kappa-1)/(\alpha\,\epsilon+\kappa-1)$. Note that \eqref{mult_op_var} also applies to ``$\alpha\varoast_\kappa$''.

\section{MEMs for Count Time Series}
\label{MEMs for Count Time Series}
Inspired by Equation \eqref{MEMdef}, a CMEM for a process $(X_t)$ consisting of unbounded counts is defined by
\ba
\label{CountMEMdef}
X_t\ =\ M_t\odot\epsilon_t,
\ea
where $M_t|\mathcal F_{t-1}$ is a deterministic truly positive number. $(\epsilon_t)$ is \iid, and $\epsilon_t$ is a count random variable that is generated independently of~$\mathcal F_{t-1}$ with  mean $E[\epsilon_t]=1$ and variance $V[\epsilon_t]=\sigma^2$. The multiplicative operator ``$\odot$'' defined in \eqref{mult_op} is executed independently of~$\mathcal F_{t-1}$ and~$\epsilon_t$.

\smallskip
With these assumptions, the conditional mean of \eqref{CountMEMdef} satisfies
\ba
\label{CountMEMcmean}
\begin{array}{@{}rl}
E[X_t\ |\ \mathcal F_{t-1}]
\ =&
E\big[E[M_t\odot\epsilon_t\ |\ \epsilon_t, \mathcal F_{t-1}]\ \big|\ \mathcal F_{t-1}\big]
\\[1ex]
=&
E[M_t\cdot\epsilon_t\ |\ \mathcal F_{t-1}]\ =\ M_t\cdot 1\ =\ M_t.
\end{array}
\ea
Thus, if the conditional mean~$M_t$ follows a recursive scheme of the form
\ba
\label{INGARCHgeneral}
M_t\ =\ f(X_{t-1},\ldots, X_{t-p},\ M_{t-1},\ldots, M_{t-q})
\quad\text{with } p\geq 1,\ q\geq 0,
\ea
then the CMEM becomes an INGARCH-type model of order $(p,q)$, denoted as \emph{INGARCH$(p,q)$-CMEM}.
The most well-known special case of \eqref{INGARCHgeneral} is the linear specification \eqref{INGARCHmodels}.
But also non-linear mean specifications are possible for~$M_t$, \eg by adapting the log-linear structure discussed by \citet{fokianos11} or the approximately linear softplus structure proposed by \citet{weissetal22}, both allowing for negative model parameters (also see Section~\ref{Simulations}).

\begin{bem}
\label{bemStochEq}
One of the formal advantages of the proposed model is that it is defined through a stochastic equation and not by a conditional distribution. In particular, the standard Poi-INGARCH model of \citet{ferland06} is nested in this model, see Section~\ref{Count-MEMs using a Compounding Operator} for details, and is thus defined by an equation instead of a conditional distribution.
\end{bem}

\begin{bem}
\label{bemMixtures}
We have already mentioned some specific operators before. An important class of operators defined by (2.1) is given by a hierarchical mixture of INGARCHs. We already indicated that the negative binomial INGARCH is a specific case of our model. Indeed, the mixed Poisson autoregression $\{Y_t,t\in\mathbb{Z}\}$ given by
$$
Y_t=N_t(\varepsilon_t M_t), ~~M_t=h(Y_{t-1},M_{t-1}),~~t\in\mathbb{Z},
$$
is a specific case of \eqref{CountMEMdef}, where $(N_t)$ is an independent sequence of homogeneous Poisson processes with unit intensity. More generally, any hierarchical mixture process given by
\begin{align*}
  Y_t|\mathcal{F}_{t-1},\varepsilon_t&\sim F_{M_t,\varepsilon_t},\\
  \varepsilon_t&\sim F_{\varepsilon_t},
\end{align*}
($F_{M_t,\varepsilon_t}$ and $F_{\varepsilon_t}$ being given distributions)  is a particular instance of model (2.1). So
$$
f_{Y_t|\mathcal{F}_{t-1}}(y)=\sum_{\varepsilon}f_{M_t,\varepsilon_t}(y,\varepsilon)f_{\varepsilon_t}(\varepsilon)
$$
is a finite or infinite mixture of distributions over $\varepsilon_t$ depending on whether the range of $\varepsilon_t$ is finite or infinite. In this sense, the proposed model \eqref{CountMEMdef} is deeply related to a mixture of INGARCH representations, see \citet{zhuetal10}, \citet{mao20}, \citet{doukhan21}, among others.
\end{bem}

\smallskip
In the special case, where the multiplicative operator ``$\odot$'' also satisfies $V[\alpha\odot \epsilon\, |\, \epsilon] = \nu(\alpha)\cdot\epsilon$, we conclude from \eqref{mult_op_var} that
\ba
\label{CountMEMcvar}
\begin{array}{@{}rl}
V[X_t\ |\ \mathcal F_{t-1}]
\ =&
E\big[V[M_t\odot\epsilon_t\ |\ \epsilon_t, \mathcal F_{t-1}]\ \big|\ \mathcal F_{t-1}\big]
 +
V\big[E[M_t\odot\epsilon_t\ |\ \epsilon_t, \mathcal F_{t-1}]\ \big|\ \mathcal F_{t-1}\big]
\\[1ex]
=&
E[\nu(M_t)\cdot\epsilon_t\ |\ \mathcal F_{t-1}] + V[M_t\cdot\epsilon_t\ |\ \mathcal F_{t-1}]\ =\  \nu(M_t) + \sigma^2 M_t^2.
\end{array}
\ea
So the conditional variance is caused by two sources: the variance of the operator ``$\odot$'' and the one of the innovation term~$\epsilon_t$.

\smallskip
To establish the existence of an INGARCH$(p,q)$-CMEM, one can benefit from the existing INGARCH literature. To apply Theorem~2.1 of \citet{doukhan19}, for example, one has to ensure that on the one hand, assumptions (A$1'$) and (A2) of \citet{doukhan19} hold for the given INGARCH specification \eqref{INGARCHgeneral}, and on the other hand, certain distributional requirements for the multiplicative operator ``$\odot$'' have to be satisfied. More precisely, if also the conditional distribution of $X_t|\mathcal{F}_{t-1} = M_t\odot\epsilon_t|M_t =: Q(M_t)$ satisfies (A3) of \citet{doukhan19}, then their Theorem~2.1 can be used to conclude on the existence of a stationary solution to \eqref{CountMEMdef} and \eqref{INGARCHgeneral}, as well as on mixing properties.
This, certainly, requires to first fix a specific choice of the multiplicative operator ``$\odot$''.
In what follows, we discuss two kinds of CMEMs \eqref{CountMEMdef} with the linear specification \eqref{INGARCHmodels}, namely based on the compounding operator \eqref{comp_op} and on the binomial multiplicative operator \eqref{bin_op}, respectively.

\numberwithin{satz}{subsection}

\subsection{CMEMs Using a Compounding Operator}
\label{Count-MEMs using a Compounding Operator}
In this section, let us consider the special case where the CMEM \eqref{CountMEMdef} uses the compounding operator \eqref{comp_op} $\alpha\bullet_{\nu}$ with arbitrary $\alpha,\nu>0$ as the multiplicative operator, \ie
\ba
\label{CompCountMEMdef}
X_t\ =\ M_t\bullet_{\nu} \epsilon_t,
\ea
where $M_t$ is given in \eqref{INGARCHmodels} and $\epsilon_t$ is the same one defined in \eqref{CountMEMdef}.
Note that $X_t$ can be written as $\sum_{j=1}^{\epsilon_t}Z_{tj}$, where $(Z_{tj})$ are count random variables with $E[Z_{tj}\ |\ \mathcal F_{t-1}]=M_t$, and $(\epsilon_t)$ are independent of $(Z_{tj})$. Given $\mathcal F_{t-1}, Z_{sj}$ are conditionally independent for $s\geq t$ and different $j$. Furthermore, if $t$ is fixed, then the $Z_{tj}$ are conditionally \iid\ with some $Z_t$ for different $j$, and more discussions about $Z_t$ will be given later in the context of Theorem~\ref{thm_stat_compCMEM}.

\medskip
The formulae \eqref{CompoundingCMom}--\eqref{CompoundingPgf} are now used to derive properties of the compounding CMEM \eqref{CompCountMEMdef}. While still $E[X_t\ |\ \mathcal F_{t-1}] = M_t$ holds, we get
$$
\begin{array}{@{}rl}
E[X_t^2\ |\ \mathcal F_{t-1}]
\ =&
E\big[E[(M_t\bullet_{\nu}\epsilon_t)^2\ |\ \epsilon_t, \mathcal F_{t-1}]\ \big|\ \mathcal F_{t-1}\big]
\\[1ex]
\overset{\eqref{CompoundingCMom}}{=}&
E\big[\nu(M_t)\,\epsilon_t + M_t^2\,\epsilon_t^2\ \big|\ \mathcal F_{t-1}\big]
\ =\ \nu(M_t) +(\sigma^2+1) M_t^2,
\end{array}
$$
where $\nu=\nu(M_t)$. Thus,
\ba
\label{CompCountMEMcvar}
V[X_t\ |\ \mathcal F_{t-1}]\ =\ \nu(M_t) + \sigma^2 M_t^2,
\ea
also see \eqref{CountMEMcvar}.
In the special case of a Poi-counting series, we have $\nu(M_t) = M_t$, whereas a NB-counting series leads to $\nu(M_t) = M_t(1+M_t)$, recall Example~\ref{examThinnings}.
Compared with an ordinary INGARCH model with conditional mean $M_t$, the main difference in conditional mean and variance between the CMEM \eqref{CompCountMEMdef} and the INGARCH model is as follows: while having the same conditional mean $M_t$, there is an additional term $\sigma^2 M_t^2$ in the conditional variance of the CMEM, which makes the CMEM more flexible for applications in practice. Only in the boundary case where $\epsilon_t=1$ deterministically (thus $\sigma^2=0$), the CMEM specification reduces to the ordinary INGARCH approach.
In particular, the special case $\epsilon_t\equiv 1$ and Poi-counting series just agrees with the ordinary Poi-INGARCH model of \citet{ferland06}.

\begin{bem}
\label{remCMEMaknouche}
\citet{aknouche22} proposed a multiplicative thinning-based INGARCH model defined by
\begin{equation}\label{mul-thin}
X_t=\lambda_t\cdot\epsilon_t,~~~\lambda_t=1+\omega\circ m+\sum_{i=1}^q\alpha_i\circ Y_{t-i}+\sum_{j=1}^p\beta_j\circ\lambda_{t-j},
\end{equation}
where $\epsilon_t$ is given like in \eqref{CountMEMdef}, and where $\circ$ is the binomial thinning operator discussed in Example~\ref{examThinnings}\,(i). Model \eqref{mul-thin} has a similar mean structure as model \eqref{CompCountMEMdef}, and the inclusion of ``$\omega\circ$'' makes model \eqref{mul-thin} analogous to \eqref{CompCountMEMdef}.
But there are also substantial differences between both approaches.
First, the positive integer~$m$ in \eqref{mul-thin} is not treated as a parameter (for practice, \citet{aknouche22} suggest to choose~$m$ as the integer part of the sample mean).
Second, the conditional variance of model \eqref{mul-thin} is a function of two recursive items, see equation (2.11) in \citet{aknouche22}, while the conditional variance of our model \eqref{CountMEMdef} is just a function of $M_t$, see \eqref{CompCountMEMcvar}. The complexity of the conditional variance also affects the estimation method discussed in Section~\ref{Parameter Estimation}.
\end{bem}

The conditional pgf of the compounding CMEM \eqref{CompCountMEMdef} satisfies
\ba
\label{CompCountMEMcpgf}
\pgf_{X_t|\mathcal F_{t-1}}(u)
\ =\
E[u^{M_t\bullet_{\nu}\epsilon_t}\ |\ \mathcal F_{t-1}]
\ \overset{\eqref{CompoundingPgf}}{=}\ \pgf_{\epsilon}\big(\pgf_{Z|\mathcal F_{t-1}}(u)\big),
\ea
where $Z|\mathcal F_{t-1}$ refers to the counting series of the operator ``$M_t\bullet_{\nu}$'' given~$\mathcal F_{t-1}$, recall \eqref{CompCountMEMdef}.

\begin{bsp}
\label{examSpecialCases}
In the special case of Poisson-distributed innovations, $\epsilon_t\sim\poi(1)$, \eqref{CompCountMEMcpgf} implies a conditional compound-Poisson (CP) distribution, namely $\pgf_{X_t|\mathcal F_{t-1}}(u) = \exp\big(\pgf_{Z|\mathcal F_{t-1}}(u)-1\big)$. Thus, if~$M_t$ also satisfies \eqref{INGARCHmodels}, we get a type of CP-INGARCH model as discussed by \citet{goncalves15}.
If, in addition, we have a Poisson-counting series $Z|\mathcal F_{t-1}\sim\poi(M_t)$ with $\pgf_{X_t|\mathcal F_{t-1}}(u) = \exp\big(M_t\,(u-1)\big)$, \eqref{CompCountMEMcpgf} implies $\pgf_{X_t|\mathcal F_{t-1}}(u) = \exp\big(\exp\big(M_t\,(u-1)\big)-1\big)$, \ie we get a Neyman type-A (NTA) INGARCH model as discussed by \citet{goncalves15b}.

\smallskip
Another special case is $\epsilon_t\sim\zip(\frac{1}{1-\omega},\omega)$ with $\omega\in (0,1)$, \ie a ZIP distribution with mean~1 and $\pgf_{\epsilon}(u)=\omega+(1-\omega)\,\exp(\frac{u-1}{1-\omega})$, which leads to a zero-inflated CP-INGARCH model as discussed by \citet{goncalves16}. Generally, there are two possible reasons for the compounding operator $\alpha\bullet_{\nu} \epsilon = \sum_{j=1}^{\epsilon} Z_j$ from \eqref{comp_op} to become zero: either $\epsilon=0$, or $\epsilon>0$ but all $Z_1=\ldots=Z_{\epsilon}=0$, \ie
$$
\textstyle
P(\alpha\bullet_{\nu} \epsilon = 0)\ =\ P(\epsilon=0)\ +\ \sum_{i=1}^\infty P(\epsilon=i)\,P(Z=0)^i.
$$
Thus, if $\epsilon_t\sim\zip(\frac{1}{1-\omega},\omega)$, we get
$$
\begin{array}{@{}rl}
P(X_t=0\ |\ \mathcal F_{t-1})
\ =& \omega\ +\ (1-\omega)\,{\rm e}^{-1/(1-\omega)}\,\sum\limits_{i=0}^\infty \frac{(1-\omega)^{-i}}{i!}\,\pgf_{Z|\mathcal F_{t-1}}(0)^i
\\[1ex]
\ =& \omega\ +\ (1-\omega)\,\exp\Big(\frac{\pgf_{Z|\mathcal F_{t-1}}(0)-1}{1-\omega}\Big),
\end{array}
$$
in accordance with the expression for $\pgf_{X_t|\mathcal F_{t-1}}(0)$.
\end{bsp}
The following proposition gives a necessary and sufficient condition for the first-order stationarity of the
model introduced in \eqref{CompCountMEMdef}, which is analogous to the corresponding property of \citet{ferland06}. The proof of Proposition~\ref{first-order-stationarity} is done with the same arguments used in Proposition~1 of \citet{ferland06}, so we omit the details.

\begin{prop}\label{first-order-stationarity}
The process $(X_t)$ defined in \eqref{CompCountMEMdef} is first-order stationary if and only if $\sum_{i=1}^pa_i+\sum_{j=1}^qb_j<1$.
\end{prop}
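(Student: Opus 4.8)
The plan is to reduce the question of first-order stationarity to the analysis of a deterministic linear difference equation for the unconditional mean, exactly along the lines of Proposition~1 of \citet{ferland06}. The crucial structural fact is the mean-preservation identity $E[X_t\mid\mathcal F_{t-1}]=M_t$ recorded in \eqref{CountMEMcmean}, which continues to hold for the compounding specification \eqref{CompCountMEMdef}. By the law of total expectation this gives $E[X_t]=E[M_t]$ for every~$t$, so the observation layer and the conditional-mean layer share the same unconditional mean, and the compounding operator contributes nothing beyond its conditional mean at first order.

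First I would take unconditional expectations in the conditional-mean recursion \eqref{INGARCHmodels}. Writing $\mu_t:=E[X_t]=E[M_t]$ and using $E[X_{t-i}]=\mu_{t-i}$ together with $E[M_{t-j}]=\mu_{t-j}$, linearity yields the deterministic recursion
\ba
\mu_t\ =\ a_0+\sum_{i=1}^p a_i\,\mu_{t-i}+\sum_{j=1}^q b_j\,\mu_{t-j}. \nonumber
\ea
This is precisely the mean equation of the ordinary Poi-INGARCH model, which is why the argument of \citet{ferland06} transfers without change.

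For necessity I would assume $(X_t)$ is first-order stationary, so $\mu_t\equiv\mu$ is a finite, nonnegative constant. Substituting into the recursion and abbreviating $S:=\sum_{i=1}^p a_i+\sum_{j=1}^q b_j$ gives $\mu\,(1-S)=a_0$. Since $a_0>0$ while $\mu\geq0$ is finite (a count-process mean), the factor $1-S$ must be strictly positive: $S=1$ would force $0=a_0>0$, and $S>1$ would force $\mu=a_0/(1-S)<0$, both impossible. Hence $S<1$. For sufficiency I would argue conversely that, when $S<1$, the constant $\mu:=a_0/(1-S)$ is a well-defined positive finite number solving the recursion; the non-negativity of the coefficients $a_i,b_j$ together with $S<1$ forces the dominant (Perron) root of the associated characteristic polynomial to lie strictly inside the unit circle, so the recursion is stable and this constant is its unique bounded solution, whence the process admits a first-order stationary version with $E[X_t]=\mu$.

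The only genuinely delicate point is the sufficiency direction: passing from ``a constant solution of the mean recursion exists'' to ``the process itself is first-order stationary'' requires knowing that the difference equation is stable, so that transients decay and the constant mean is actually attained rather than merely admissible. This is where the sign constraints $a_0>0$, $a_i,b_j\geq0$ and $S<1$ do the real work, via the location of the characteristic roots. The necessity direction, by contrast, is immediate from $a_0>0$ and $\mu\geq0$ and needs no root analysis at all.
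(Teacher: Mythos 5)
Your proof is correct and takes essentially the same route as the paper: the paper's proof simply defers to Proposition~1 of \citet{ferland06}, i.e., it uses the conditional-mean identity \eqref{CountMEMcmean} to note that $E[X_t]=E[M_t]$, reduces first-order stationarity to the linear difference equation $\mu_t = a_0+\sum_{i=1}^p a_i\,\mu_{t-i}+\sum_{j=1}^q b_j\,\mu_{t-j}$, and invokes the standard stability theory for difference equations with non-negative coefficients. Your explicit split into necessity (algebraic contradiction from $a_0>0$ and $\mu\geq 0$) and sufficiency (all characteristic roots strictly inside the unit circle when $\sum_i a_i+\sum_j b_j<1$) is just that same argument written out in detail.
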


As the next step, we discuss the second-order stationarity. We make the following assumption
\begin{equation}\label{var-mean}
\frac{V[X_t\ |\ \mathcal F_{t-1}]}{E[X_t\ |\ \mathcal F_{t-1}]}=v_0+v_1M_t,
\end{equation}
with $v_0\geq0$ and $v_1\geq0$ not being simultaneously zero.  It is obvious that the Poisson- and NB-counting series satisfy the above assumption, recall \eqref{CompCountMEMcvar}, leading to $v_1=\sigma^2$ and $v_1=1+\sigma^2$, respectively.
Then, the results of Section~3.2 in \citet{goncalves15} hold for the model \eqref{CompCountMEMdef}. In particular, we have the following proposition.

\begin{prop}\label{second-order-stationarity}
Consider a first-order stationary INGARCH$(1,1)$-CMEM defined according to \eqref{CompCountMEMdef}, which also satisfies \eqref{var-mean}. A necessary and
sufficient condition for second-order stationarity is $(a_1+b_1)^2+v_1a_1^2<1$.
\end{prop}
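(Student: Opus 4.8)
The plan is to reduce second-order stationarity to the convergence of a single scalar linear recursion for the second moment $E[M_t^2]$. First I would exploit the assumed first-order stationarity to pin down the common mean: taking expectations in the INGARCH$(1,1)$ recursion $M_t=a_0+a_1X_{t-1}+b_1M_{t-1}$ and using $E[X_{t-1}]=E[M_{t-1}]$ gives $\mu:=E[X_t]=E[M_t]=a_0/(1-a_1-b_1)$, which is constant in~$t$ and strictly positive.

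Next I would convert the variance--mean assumption \eqref{var-mean} into a usable expression for the conditional second moment. Since $E[X_t\,|\,\mathcal F_{t-1}]=M_t$, assumption \eqref{var-mean} reads $V[X_t\,|\,\mathcal F_{t-1}]=v_0M_t+v_1M_t^2$, and therefore
$$
E[X_t^2\,|\,\mathcal F_{t-1}]=v_0M_t+(v_1+1)M_t^2,\qquad E[X_t^2]=v_0\,\mu+(v_1+1)\,E[M_t^2].
$$
The core step is then to square the mean recursion, take expectations, and close the resulting equation. Squaring $M_t=a_0+a_1X_{t-1}+b_1M_{t-1}$ produces the terms $X_{t-1}^2$, $M_{t-1}^2$, and the cross term $X_{t-1}M_{t-1}$, together with first-order terms that are constant by the previous step. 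The cross term is handled via the tower property: because $M_{t-1}$ is $\mathcal F_{t-2}$-measurable and $E[X_{t-1}\,|\,\mathcal F_{t-2}]=M_{t-1}$, one obtains $E[X_{t-1}M_{t-1}]=E[M_{t-1}^2]$. Substituting the expression for $E[X_{t-1}^2]$ obtained above then collapses everything into the scalar recursion
$$
m_t=\rho\,m_{t-1}+C,\qquad m_t:=E[M_t^2],\quad \rho:=(a_1+b_1)^2+v_1a_1^2,
$$
where $C=a_0^2+a_1^2v_0\mu+2a_0(a_1+b_1)\mu>0$ gathers the constant contributions. The coefficient $\rho$ emerges precisely as $a_1^2(v_1+1)+2a_1b_1+b_1^2=(a_1+b_1)^2+v_1a_1^2$, i.e.\ the stated threshold quantity.

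Finally I would read off both directions from this recursion. For sufficiency, $\rho<1$ makes the affine map a contraction, so $m_t\to m^*=C/(1-\rho)<\infty$; the second moments of $M_t$ and hence of $X_t$ are finite and time-invariant, and the autocovariances then inherit a stationary ARMA$(1,1)$-type structure, yielding second-order stationarity. For necessity, if the process is second-order stationary then $m_t\equiv m^*$ is finite and constant, so the fixed-point identity $m^*(1-\rho)=C>0$ forces $\rho<1$. I expect the main obstacle to be purely at the level of existence bookkeeping: writing the second-moment recursion implicitly presupposes finite second moments, so the necessity argument must begin from the stationarity hypothesis (which already grants a finite variance) and use the strict positivity of $C$ to exclude $\rho\geq1$, rather than manipulating possibly-infinite quantities. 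This is also the point at which the identification with the CP-INGARCH framework of \citet{goncalves15} can be invoked to guarantee that the contraction condition $\rho<1$ not only is necessary for a finite stationary variance but genuinely delivers a second-order stationary solution.
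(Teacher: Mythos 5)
Your derivation is correct and, in fact, more explicit than what the paper offers: the paper does not prove Proposition~\ref{second-order-stationarity} at all, but disposes of it by observing that assumption \eqref{var-mean} places the model \eqref{CompCountMEMdef} inside the framework of Section~3.2 of \citet{goncalves15}, whose second-order stationarity theorem is then quoted verbatim. Your scalar recursion $m_t=\rho\,m_{t-1}+C$ with $\rho=(a_1+b_1)^2+v_1a_1^2$ and $C=a_0^2+a_1^2v_0\mu+2a_0(a_1+b_1)\mu>0$ is precisely the computation that underlies that cited result: the squaring of the mean recursion, the tower-property identity $E[X_{t-1}M_{t-1}]=E[M_{t-1}^2]$, and the substitution $E[X_{t-1}^2]=v_0\mu+(v_1+1)E[M_{t-1}^2]$ all check out, and the coefficient collapses to the stated threshold. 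Your necessity argument is fully self-contained and clean: under second-order stationarity, $E[M_t^2]\leq E[X_t^2]<\infty$ by Jensen, so the fixed-point identity $(1-\rho)m^*=C>0$ with $m^*\geq\mu^2>0$ forces $\rho<1$. What your route buys over the paper's is transparency (the reader sees exactly where $(a_1+b_1)^2+v_1a_1^2$ comes from); what the citation buys is a rigorous treatment of the one point your sketch leaves open.

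That open point is the sufficiency direction, and you have correctly located it yourself. First-order stationarity alone does not guarantee that any $m_{t_0}$ is finite; since the recursion propagates $m_{t-1}=\infty$ to $m_t=\infty$ (and backwards as well), a first-order stationary solution could in principle have identically infinite second moments even when $\rho<1$, and moreover convergence $m_t\to m^*$ from a finite seed is not the same as the time-invariance $m_t\equiv m^*$ that second-order stationarity requires. Closing this requires the constructive argument of \citet{ferland06} and \citet{goncalves15}, where the stationary solution is obtained as a monotone limit of approximating processes with uniformly bounded second moments, so that $m_t\equiv C/(1-\rho)$ holds exactly. Since you explicitly invoke that framework for this step, your proposal amounts to a correct and more informative expansion of the paper's citation-only proof, with the division of labour (explicit recursion for necessity and for identifying the threshold, literature for existence bookkeeping) made honest rather than hidden.
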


For second-order stationarity of the INGARCH$(p,q)$-CMEM model with general order~$(p,q)$, we refer to Theorem~4 in \citet{goncalves15}, which will be not repeated here.

If $\sum_{i=1}^pa_i+\sum_{j=1}^qb_j<1$, we obtain the unconditional mean of $X_t$ as
\ba
\label{INGARCH_mean}
\mu:=E[X_t]=\frac{a_0}{1-\sum_{i=1}^pa_i-\sum_{j=1}^qb_j}.
\ea
Next, we derive the unconditional variance of $X_t$. From Theorem~1 in \citet{weiss09}, we know that
the autocovariances $\gamma_X(k):= {\rm Cov}[X_t,X_{t-k}]$ and $\gamma_M(k):= {\rm Cov}[M_t, M_{t-k} ]$
satisfy the linear equations
\ba
\label{INGARCH_yw}
\begin{array}{@{}l}
\displaystyle
  \gamma_X(k) = \sum_{i=1}^pa_i \gamma_X(|k-i|)+\sum_{j=1}^{\min(k-1,q)}b_j\gamma_X(k-j)+\sum_{j=k}^qb_j\gamma_M(j-k),\\
	\\[-2ex]
\displaystyle
    \gamma_M(k) = \sum_{i=1}^{\min(k,p)}a_i \gamma_M(|k-i|)+\sum_{i=k+1}^pa_i\gamma_X(i-k)+\sum_{j=1}^qb_j\gamma_M(|k-j|),
\end{array}
\ea
for $k\geq1$ and $k\geq0$, respectively. Then, using \eqref{CompCountMEMcvar}, the variance can be obtained from
\ba\label{var-compCMEM}
V[X_t]=E[\nu(M_t) + \sigma^2M_t^2]+V[M_t]
\ =\ E[\nu(M_t)] + \mu^2 \sigma^2+(\sigma^2+1)\,V[M_t],
\ea
where $V[M_t]=\sum_{i=1}^pa_i\gamma_X(i)+\sum_{j=1}^qb_j\gamma_M(j)$.
For the Poisson-counting case, we have $\nu(M_t) = M_t$, thus
\ba\label{var-poi}
V[X_t]=\mu+\mu^2\sigma^2+(\sigma^2+1)\,V[M_t],
\ea
while for the NB-counting case with $\nu(M_t) = M_t(1+M_t)$, we have
\ba\label{var-nb}
V[X_t]=E[M_t+M_t^2] + \mu^2 \sigma^2+(\sigma^2+1)\,V[M_t]=\mu+\mu^2(\sigma^2+1)+(\sigma^2+2)\,V[M_t].
\ea
Let us illustrate the required computations for the case $p=q=1$, which is most common in applications.

\begin{bsp}
\label{examINGARCH11}
If $p=q=1$, we get from \eqref{INGARCH_yw}, after few calculations, that
$$
\big(1-(a_1+b_1)^2 + a_1^2\big)\,V[M_t] = a_1^2\,V[X_t].
$$
This can be used to obtain $V[X_t]$ after substituting $V[M_t]$ into \eqref{var-poi} or \eqref{var-nb}, respectively. The autocorrelation function (ACF) $\rho(k)=Corr[X_t,X_{t-k}]$ takes the usual form \citep[see][]{weiss09}
$$
\rho(k) = (a_1+b_1)^{k-1}\,\frac{a_1\, (1-b_1(a_1+b_1))}{1-(a_1+b_1)^2+a_1^2}.
$$
Together with \eqref{INGARCH_mean}, it can be utilized for deriving method-of-moment estimators.
\end{bsp}

In practice, equations \eqref{INGARCH_yw} can be used together with the sample ACF for the identification of the model order $(p,q)$ as well as for getting initial parameter estimates (method of moments). This is discussed in some more details in Sections~\ref{Approaches for Model Diagnostics} and~\ref{Real-World Data Examples}.

\smallskip
Before discussing the stationary properties of $(X_t)$, we first recall the definition of the CP-INGARCH model given in \citet{goncalves15}.
The count process $(Z_t)$ is said to satisfy a CP-INGARCH$(p,q)$ model if the conditional characteristic function of $Z_t\ |\ \mathcal F_{t-1}$ is $\Phi_{Z_t | \mathcal F_{t-1}}(u)=\exp\{{\rm i}M_t^*[\varphi_t(u)-1]/\varphi'_t(0)\}$ with $u\in\mathbb{R}$ and $t\in\mathbb{Z}$, and if $M_t^*$ is defined like in \eqref{INGARCHmodels}, \ie $M_t^* = a_0+\sum_{i=1}^p a_i Z_{t-i}+\sum_{j=1}^q b_j M_{t-j}^*$.
Here, $(\varphi_t)$ is a family of characteristic functions on $\mathbb{R}$, and ``i'' denotes the imaginary unit.

The following theorem establishes the stationarity and ergodicity of the compounding CMEM $(X_t)$ if the counting series arises from the CP-family. Recall from \eqref{CompCountMEMdef} that $X_t=\sum_{j=1}^{\epsilon_t}Z_{tj}$, where $(Z_{tj})$ is supposed to be identically distributed with some $(Z_t)$ that satisfies $E[Z_t\ |\ \mathcal F_{t-1}] =E[X_t\ |\ \mathcal F_{t-1}] = M_t$.

\begin{thm}\label{thm_stat_compCMEM}
Assume that $(Z_t)$ satisfies a CP-INGARCH process.
If $\varphi_t$ is deterministic and independent of $t$, and if $\sum_{i=1}^pa_i+\sum_{j=1}^qb_j<1$ holds,
then the process $(X_t)$ defined according to the compounding CMEM \eqref{CompCountMEMdef} is stationary and ergodic.
\end{thm}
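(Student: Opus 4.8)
The plan is to exhibit the compounding CMEM $(X_t)$ as a measurable, time-invariant functional of a strictly stationary and ergodic sequence of innovations, so that strict stationarity and ergodicity are inherited automatically. The starting observation is that, once $\varphi_t\equiv\varphi$ is deterministic and does not depend on $t$, the conditional law of the counting series given $\mathcal F_{t-1}$ is a \emph{fixed} Markov kernel in $M_t$: by the CP-INGARCH assumption, $\Phi_{Z_t|\mathcal F_{t-1}}(u)=\exp\{\mathrm i\,M_t[\varphi(u)-1]/\varphi'(0)\}$ depends on the past only through $M_t$. Since the $Z_{tj}$ are conditionally i.i.d.\ copies of $Z_t$ and $(\epsilon_t)$ is an independent i.i.d.\ sequence, the random sum $X_t=\sum_{j=1}^{\epsilon_t}Z_{tj}$ has a conditional distribution $X_t|\mathcal F_{t-1}\sim Q(M_t)$ that again depends on $\mathcal F_{t-1}$ only through $M_t$; its conditional pgf is the one displayed in \eqref{CompCountMEMcpgf}. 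First I would record this reduction to a fixed-kernel, linear-mean observation-driven model.

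Second, I would construct the stationary solution by backward iteration, following the route used by \citet{goncalves15} for CP-INGARCH and by \citet{ferland06} for the Poisson case. Because $\e[\epsilon_t]=1$, the conditional-mean recursion of the CMEM coincides exactly with the INGARCH recursion \eqref{INGARCHmodels}, $M_t=a_0+\sum_{i=1}^p a_i X_{t-i}+\sum_{j=1}^q b_j M_{t-j}$, so the mean dynamics are governed by the same linear operator as in the genuine CP-INGARCH model. The condition $\sum_{i=1}^p a_i+\sum_{j=1}^q b_j<1$ makes this operator a contraction; iterating from the infinite past and using the fixed kernel $Q(\cdot)$ to generate $X_t$ from $M_t$ through an independent i.i.d.\ stream (one coordinate producing $\epsilon_t$, the others the CP counting series), I would obtain an a.s.\ and $L^1$-convergent construction of a process $(X_t,M_t)$ satisfying \eqref{CompCountMEMdef} and \eqref{INGARCHmodels}.

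Third, I would conclude. The construction realises $(X_t)$ as $X_t=F\big(\ldots,\eta_{t-1},\eta_t\big)$ for a single measurable map $F$ applied to shifts of an i.i.d.\ sequence $(\eta_t)$; equivalently, one may invoke Theorem~2.1 of \citet{doukhan19} with the fixed kernel $Q(M_t)$ playing the role of their assumption (A3) and the contraction condition supplying (A$1'$) and (A2). A Bernoulli shift is strictly stationary and ergodic, and these properties are preserved under the measurable map $F$, which yields the claim for $(X_t)$.

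The main obstacle is the convergence/contraction step, and specifically the verification of the distributional Lipschitz bound for the compounding kernel $Q(\cdot)$. Unlike the one-layer CP-INGARCH kernel, here $X_t$ is a random sum over $\epsilon_t$ of CP counting variables, so two independent sources of randomness must be controlled simultaneously; I would need to show that the law of $\sum_{j=1}^{\epsilon_t}Z_{tj}$ varies Lipschitz-continuously in $M_t$ (in total-variation or an $L^1$-Wasserstein sense) with a constant that, after composition with the linear mean map, leaves the overall contraction rate at $\sum_{i=1}^p a_i+\sum_{j=1}^q b_j<1$. The finiteness of $\e[\epsilon_t]=1$ and the linearity of the CP conditional mean in $M_t$ are exactly the ingredients that keep this constant from being inflated by the extra compounding layer, and making that estimate precise is the technical heart of the argument.
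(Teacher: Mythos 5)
Your proposal is correct in outline, but it follows a genuinely different route from the paper's proof of this theorem. The paper performs no backward-iteration/contraction construction here. Instead, it exploits the hypothesis that the counting-series process $(Z_t)$ itself satisfies a CP-INGARCH model: writing $M_t$ as a constant plus an infinite moving average $\sum_{k\geq 1}\psi_k X_{t-k}$ and comparing it with the analogous expansion in the $Z_{t-k}$, it argues (via the quantities $r^{(n)}=|\psi_n|\,E|X_{t-n}-Z_{t-n}|$) that the CMEM mean and the CP-INGARCH mean coincide in the limit, so that Theorem~5 of \citet{goncalves15} yields stationarity and ergodicity of $(Z_t)$; it then represents $X_t=\xi\big((Z_{tj},\epsilon_t),\,t\in\bbz,\,j\in\bbn\big)$ as a single time-invariant measurable functional of this stationary ergodic family and concludes by Theorem~36.4 of \citet{bill95}, which preserves stationarity and ergodicity under measurable maps. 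Your Bernoulli-shift route via \citet{doukhan19} (or \citet{doukhan08}) is in fact the general strategy the paper sketches at the beginning of Section~3 and actually carries out for the \emph{binomial} multiplicative operator in Section~3.2, where the coefficients $\theta_i$ with $\theta(1)<1$ and a coupling lemma in the spirit of \citet{chen22} play exactly the roles you assign to them. What your route buys is self-containedness and generality: it needs only that the kernel $Q(\cdot)$ is Lipschitz in the $L^1$-Wasserstein sense, not that $(Z_t)$ forms a CP-INGARCH process a priori, and it delivers weak dependence as a by-product; the paper's route is shorter because it delegates the hard work to the known CP-INGARCH result, at the price of that extra structural assumption and the delicate identification of the two mean recursions. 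Finally, the estimate you flag as the technical heart does go through, for precisely the reason you give: with $\varphi$ fixed, the CP family is additively closed in its mean parameter (the conditional characteristic functions multiply), so $\cp(m)$ and $\cp(m')$ can be coupled with $E|Z-Z'|=|m-m'|$; applying this coupling to each $Z_{tj}$ and using Wald's identity with $E[\epsilon_t]=1$ gives $E|X-X'|\leq |m-m'|$, so the compounding layer does not inflate the constant and the overall contraction rate remains $\sum_{i=1}^p a_i+\sum_{j=1}^q b_j<1$.
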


\begin{proof}
Using arguments in \citet{goncalves15}, we know that $M_t$ can be written as $M_t=a_0/\sum_{j=1}^qb_j+\sum_{k=1}^\infty \psi_kX_{t-k}$, where the coefficient $\psi_k$ is given by the power expansion of a rational function in the neighbourhood of 0.
Based on the construction method for proving stationarity and ergodicity in \citet{ferland06} and \citet{goncalves15},
let $M_t^{(n)}=a_0/\sum_{j=1}^qb_j+\sum_{k=1}^n \psi_kX_{t-k}$, $M_t^{\ast(n)}=a_0/\sum_{j=1}^qb_j+\sum_{k=1}^n \psi_kZ_{t-k}$, $r^{(n)}=E\big|(M_t^{(n)}-M_t^{\ast(n)})-(M_t^{(n-1)}-M_t^{\ast(n-1)})\big|$.
Note that $0\leq r^{(n)}=|\psi_n|\,E|X_{t-n}-Z_{t-n}|=|\psi_n|\,\big(E[(X_{t-n}-Z_{t-n})\,\indfkt(A_n)]+E[(Z_{t-n}-X_{t-n})\,\indfkt(A_n^c)]\big) \leq |\psi_n|\,\big(E(X_{t-n}-Z_{t-n})+E(Z_{t-n}-X_{t-n})\big)=0$,
where $A_n=\{X_{t-n}-Z_{t-n}>0\}$ and $\indfkt(\cdot)$ denotes the indicator function.
Thus, $r^{(n)}=0$. Therefore, $(M_t^{(n)}-M_t^{\ast(n)})$ is a Cauchy sequence and thus $(M_t^{(n)}-M_t^{\ast(n)})$ converges in $L^1$ and a.s., \ie $\lim_{n\rightarrow\infty}(M_t^{(n)}-M_t^{\ast(n)})=0$ a.s.
Using Theorem~5 in \citet{goncalves15}, we know that $(Z_t)$ is stationary and ergodic.

Notice that $X_t=\sum_{j=1}^{\epsilon_t}Z_{tj}$, and recall that $(Z_{tj})$ is supposed to be identically distributed with $(Z_t)$. Then, there exists a real function $\xi$, which does not depend on $t$, such that
$X_t=\xi\big((Z_{tj},\epsilon_t),t\in\mathbb{Z},j\in\mathbb{N}\big)$.
Note that $\epsilon_t$ is \iid\ and independent of $\{Z_{tj}\}$, so the result follows by applying Theorem~36.4 in \citet{bill95}.
\end{proof}

We conclude Section~\ref{Count-MEMs using a Compounding Operator} by pointing out that if the distribution of~$\epsilon_t$ is not further specified,
then the resulting INGARCH-CMEM can be understood as a kind of semi-parametric INGARCH model, in analogy to the semi-parametric INAR models proposed by \citet{drost09} and \citet{liu21}. Then, the conditional distribution of $X_t | \mathcal F_{t-1}$ is a convolution between the distribution of counting series  and the one of~$\epsilon_t$, recall \eqref{CompoundingPmf}. Numerically exact computations are possible if the range of~$\epsilon_t$ is assumed to be bounded, because then, \eqref{CompoundingPmf} becomes a finite sum. Generally, denoting $p_{\epsilon,l}=P(\epsilon_t=l)$, the distribution of~$\epsilon_t$ is completely determined by $p_{\epsilon,2}, p_{\epsilon,3}, \ldots$ with the constraint $\sum_{l\geq 2} l\cdot p_{\epsilon,l} <1$, where $p_{\epsilon,0} = \sum_{l\geq 2} (l-1)\cdot p_{\epsilon,l}$ and $p_{\epsilon,1} = 1-p_{\epsilon,0} - \sum_{l\geq 2} p_{\epsilon,l}$ because of the requirement $E[\epsilon_t]=1$.

\subsection{CMEMs Using a Binomial Multiplicative Operator}
\label{Count-MEMs using a Binomial Multiplicative Operator}
In this section, we consider the special case where the CMEM \eqref{CountMEMdef} uses the binomial multiplicative operator $\otimes$ defined in \eqref{bin_op} as the multiplicative operator, \ie
\ba
\label{CompCountMEMdef2}
X_t\ =\ M_t\otimes \epsilon_t,
\ea
where $M_t$ is given in \eqref{INGARCHmodels}, and where $\epsilon_t$ is defined like in \eqref{CountMEMdef}.
Note that $X_t$ can be expressed as $ \lfloor M_t\rfloor\cdot \epsilon_t\ +\ \bin\big(\epsilon_t,\ M_t-\lfloor M_t\rfloor\big)$. As pointed out in Section \ref{Multiplicative Operators for Counts}, the two terms in the above addition are independent.

Let us derive properties of the CMEM \eqref{CompCountMEMdef2}.
First, we note that the first-order stationarity given in Proposition \ref{first-order-stationarity} still holds here.
Note that $E[X_t\ |\ \mathcal F_{t-1}] = M_t$, and that the variance function $\nu(\alpha) = (\alpha-\lfloor\alpha\rfloor)(1-\alpha+\lfloor\alpha\rfloor)$ according to Proposition~\ref{prop_bin_mult}.
Hence, we have
\begin{align}\label{CompCountMEMcvar2}
E[X_t^2\ |\ \mathcal F_{t-1}]
\ =&\
E\Big[E\big[\big(\lfloor M_t\rfloor\cdot \epsilon_t\ +\ \bin(\epsilon_t,\ M_t-\lfloor M_t\rfloor)\big)^2\ |\ \epsilon_t, \mathcal F_{t-1}\big]\ \big|\ \mathcal F_{t-1}\Big]
\nonumber\\
=&\
E\Big[( M_t-\lfloor M_t\rfloor)(1-\ M_t+\lfloor M_t\rfloor) \epsilon_t + M_t^2\,\epsilon_t^2\ \big|\ \mathcal F_{t-1}\Big]\nonumber\\
\ =&\ ( M_t-\lfloor M_t\rfloor)(1-\ M_t+\lfloor M_t\rfloor)+(\sigma^2+1) M_t^2,\nonumber\\[1ex]
V[X_t\ |\ \mathcal F_{t-1}] =& \ ( M_t-\lfloor M_t\rfloor)(1-\ M_t+\lfloor M_t\rfloor)+\sigma^2 M_t^2,
\end{align}
also see \eqref{CountMEMcvar}.
It is easy to see that $\nu(\alpha)\in [0,0.25]$, so $\sigma^2 M_t^2\leq V[X_t\ |\ \mathcal F_{t-1}] \leq0.25+\sigma^2 M_t^2$ holds.
Similar to arguments in Section \ref{Count-MEMs using a Compounding Operator},
if $\sum_{i=1}^pa_i+\sum_{j=1}^qb_j<1$, we obtain the unconditional mean of $X_t$ as
$$\mu:=E[X_t]=\frac{a_0}{1-\sum_{i=1}^pa_i-\sum_{j=1}^qb_j}.$$
Then, the variance is given by
\begin{align*}
V[X_t]&=E\Big[( M_t-\lfloor M_t\rfloor)(1-\ M_t+\lfloor M_t\rfloor) + \sigma^2M_t^2\Big]+V[M_t]\\
&=E\big[( M_t-\lfloor M_t\rfloor)(1-\ M_t+\lfloor M_t\rfloor)\big] + \mu^2\sigma^2+(\sigma^2+1)\,V[M_t],
\end{align*}
 where $V[M_t]=\sum_{i=1}^pa_i\gamma_X(i)+\sum_{j=1}^qb_j\gamma_M(j)$.
If $p=q=1$, for example, then like in Section~\ref{Count-MEMs using a Compounding Operator}, we can insert $(1-(a_1+b_1)^2 + a_1^2)\,V[M_t] = a_1^2\,V[X_t]$. However, as there is a floor operation in $V[X_t]$, we cannot obtain a closed-form expression for $V[X_t]$ this time. But we have
$\mu^2\sigma^2+(\sigma^2+1)V[M_t]\leq V[X_t]\leq0.25+\mu^2\sigma^2+(\sigma^2+1)V[M_t]$, so we can approximate $V[X_t]$ by using these bounds.

\smallskip
The following theorem establishes the stationarity and ergodicity of $(X_t)$, which can be proven using the technique similar to that in \citet{agost16}.
Note that because of $E[X_t\ |\ \mathcal F_{t-1}] = M_t$, also $E[X_t] = E[M_t]$ holds.

\begin{thm}
If $\sum_{i=1}^pa_i+\sum_{j=1}^qb_j<1$ holds, there exists a weakly dependent stationary and ergodic solution to \eqref{CompCountMEMdef2}.
\end{thm}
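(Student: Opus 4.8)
The plan is to use the \emph{linear} conditional-mean recursion \eqref{INGARCHmodels} to recast \eqref{CompCountMEMdef2} as a chain with infinite memory and then to verify a contraction condition of the kind exploited by \citet{agost16}. Since $\sum_{i=1}^p a_i+\sum_{j=1}^q b_j<1$ and all coefficients are non-negative, inverting the autoregressive polynomial yields a representation $M_t=c_0+\sum_{k\geq 1}\psi_k X_{t-k}$ of the form already used in the proof of Theorem~\ref{thm_stat_compCMEM}, with constant $c_0>0$, non-negative weights $\psi_k\geq 0$, and $\sum_{k\geq 1}\psi_k=\big(\sum_{i=1}^p a_i\big)/\big(1-\sum_{j=1}^q b_j\big)=:\Psi<1$. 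Substituting this into $X_t=M_t\otimes\epsilon_t$ exhibits $X_t=F\big((X_{t-k})_{k\geq 1};\xi_t\big)$ for a measurable map $F$ that does not depend on~$t$, where the \iid\ noise $\xi_t$ collects $\epsilon_t$ together with the internal randomness of the operator~$\otimes$.

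The decisive step is an $L^1$-Lipschitz estimate for $F$, and here I would make the randomness of~$\otimes$ explicit through a monotone coupling. Writing $\alpha\otimes\epsilon_t=\sum_{j=1}^{\epsilon_t}\big(\lfloor\alpha\rfloor+\indfkt(U_j\leq\alpha-\lfloor\alpha\rfloor)\big)$ with \iid\ uniforms $(U_j)$, each summand is non-decreasing in~$\alpha$ for every fixed $U_j$: within an integer block this is the monotonicity of the indicator in $\alpha-\lfloor\alpha\rfloor$, and across an integer boundary the two conditional ranges $\{\lfloor\alpha\rfloor\,\epsilon_t,\ldots,(\lfloor\alpha\rfloor+1)\,\epsilon_t\}$ meet only at their common endpoint, so the larger~$\alpha$ dominates pointwise. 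Hence $\alpha\otimes\epsilon_t$ is \emph{stochastically increasing} in~$\alpha$ and admits a coupling driven by a single noise $\xi_t$ in which, for two admissible mean values $m\geq\tilde m$, $m\otimes\epsilon_t\geq\tilde m\otimes\epsilon_t$ almost surely. Consequently $E\big[|m\otimes\epsilon_t-\tilde m\otimes\epsilon_t|\,\big|\,\epsilon_t\big]=|m-\tilde m|\,\epsilon_t$, so that $E|m\otimes\epsilon_t-\tilde m\otimes\epsilon_t|=|m-\tilde m|\,E[\epsilon_t]=|m-\tilde m|$ by $E[\epsilon_t]=1$. Combined with the linear dependence of $M_t$ on the past, this gives
$$
E\big|F\big((x_k);\xi_t\big)-F\big((y_k);\xi_t\big)\big|\ =\ \Big|\sum_{k\geq 1}\psi_k(x_k-y_k)\Big|\ \leq\ \sum_{k\geq 1}\psi_k\,|x_k-y_k|,
$$
\ie $F$ is Lipschitz with coefficients $\psi_k$ whose sum $\Psi<1$.

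With the contraction secured, existence follows along the lines of \citet{agost16}. I would construct the approximating processes $X_t^{(n)}$ obtained by truncating the infinite memory at lag~$n$ with a fixed initialisation, all driven by the same \iid\ sequence $(\xi_t)$; propagating the Lipschitz bound through the recursion shows that $E|X_t^{(n)}-X_t^{(n-1)}|$ is controlled by a geometric-type series in $\Psi<1$, so $(X_t^{(n)})_n$ is $L^1$-Cauchy and converges, in $L^1$ and almost surely, to a limit $X_t$ solving \eqref{CompCountMEMdef2}. Finiteness of the first moment is preserved throughout because $E[X_t\mid\mathcal F_{t-1}]=M_t$ and $E[\epsilon_t]=1$. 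The limit is a time-invariant measurable function of $(\xi_s)_{s\leq t}$, hence a Bernoulli shift, so stationarity and ergodicity follow exactly as in the proof of Theorem~\ref{thm_stat_compCMEM} via Theorem~36.4 in \citet{bill95}, while the summability of the $\psi_k$ simultaneously delivers the asserted weak dependence.

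The step I expect to be the main obstacle is the operator-level Lipschitz estimate, that is, verifying the monotone coupling of $\alpha\otimes\epsilon_t$ \emph{across} integer boundaries of~$\alpha$: the floor term makes $\alpha\otimes\epsilon_t$ non-smooth in~$\alpha$, so the stochastic-monotonicity argument must handle the integer-crossing case separately from the within-block case. Once the identity $E|m\otimes\epsilon_t-\tilde m\otimes\epsilon_t|=|m-\tilde m|$ is in hand, the truncation and ergodicity arguments are routine adaptations of \citet{ferland06} and \citet{agost16}.
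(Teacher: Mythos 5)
Your proposal is correct and follows essentially the same route as the paper: rewrite the linear recursion as an infinite-memory chain $X_t=F\big((X_{t-k})_{k\geq1};\xi_t\big)$ via the inverted autoregressive polynomial, establish the operator-level identity $E|m\otimes\epsilon_t-\tilde m\otimes\epsilon_t|=|m-\tilde m|$ to get an $L^1$-Lipschitz bound with coefficient sum $\Psi<1$, and conclude existence, stationarity, ergodicity, and weak dependence from the contraction framework of \citet{doukhan08}. The only local differences are that you prove the key identity by an explicit per-trial uniform representation with pointwise monotonicity across integer boundaries (a clean, self-contained argument), whereas the paper obtains it from the quantile coupling combined with the stochastic ordering of binomials in \citet{klenke10} and Lemma~4 of \citet{chen22}, and that your truncation/Cauchy construction re-derives by hand what the paper gets by invoking Theorem~3.1 of \citet{doukhan08} directly.
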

\begin{proof}
We first verify that the process $(X_t)$ satisfies the conditions of Theorem 3.1 in \citet{doukhan08} from which the theorem will follow.  To simplify the notation, assume without loss of generality $p = q$ in the following. With $\beta(z)=1-\sum_{i=1}^qb_iz^i,z\in\mathbb{C}$, the assumption $\sum_{i=1}^pa_i+\sum_{j=1}^qb_j<1$ implies that $\beta(z)^{-1}=\psi(z):=\sum_{i=0}^\infty\psi_iz^i$ is well-defined for $|z|\leq 1 +\eta$ with some $\eta> 0$, where $\psi$ is exponentially decreasing and defined recursively by $\psi_0 = 1$ and $\psi_n=\sum_{i=1}^nb_i\psi_{n-i}\geq 0$ for $n\geq 1$. Next, with $\alpha(z)=\sum_{i=1}^qa_iz^{i-1}$, note that $M_t$ defined in \eqref{INGARCHmodels} can be rewritten in terms of the backshift operator $B$ as $\beta(B)M_t=a_0+\alpha(B)X_{t-1}$. Then,
$$
M_t=a_0/\beta(1)+\psi(B)\alpha(B)X_{t-1}=a_0/\beta(1)+\theta(B)X_{t-1},
$$
 where $\theta(z):=\psi(z)\alpha(z)=\sum_{i=0}^\infty\theta_i z^i$ with $\theta_i\geq 0$. Thus, $X_t$ satisfies $X_t=F(X_{t-1},X_{t-2},\ldots;\eta_t)$, where
\begin{align*}
F(X_{t-1},X_{t-2},\ldots;\eta_t)&=\lfloor a_0/\beta(1)+\theta(B)X_{t-1}\rfloor\cdot \epsilon_t\ \\
&~~~+\ \bin_t\big(\epsilon_t,\ a_0/\beta(1)+\theta(B)X_{t-1}-\lfloor a_0/\beta(1)+\theta(B)X_{t-1}\rfloor\big).
\end{align*}
Here, $\eta_t=(\epsilon_t,\bin_t)'$ is an \iid\ sequence with $\bin_t$ denoting a binomial random variable being executed at time~$t$.

From \citet{klenke10}, we know that the stochastic ordering $\bin(n,p_1)\leq_{\rm st} \bin(n,p_2)$ holds if and only if $p_1\leq p_2$.
Denote $F_m$ as the cumulative distribution function (cdf) of $\lfloor m\rfloor\cdot \epsilon\ +\ \bin\big(\epsilon,\ m-\lfloor m\rfloor\big)$, with $m$ being its expectation.
Let $F_m^{-1}(u):=\inf\{t\geq0,F_m(t)\geq u\}$ for $u\in[0,1]$. Let $U$ be a uniform random variable on $(0, 1)$, and define $Z_i=F_{m_i}^{-1}(U),i=1,2$. Then, using arguments similar to those in Lemma~4 of \citet{chen22}, we have $E|Z_1-Z_2|=|m_1-m_2|$.

For any two deterministic sequences $(x_{t-i})_{i\geq1}$ and $(\tilde x_{t-i})_{i\geq1}$, we can define $m_t$ and $\tilde m_t$ accordingly. Then, we obtain
\begin{align*}
&E\big|F(x_{t-1},x_{t-2},\ldots;\eta_t)-F(\tilde x_{t-1},\tilde x_{t-2},\ldots;\eta_t)\big|\\
&=E\Big|\lfloor m_t\rfloor\cdot \epsilon_t\ +\ \bin(\epsilon_t,\ m_t-\lfloor m_t\rfloor)-\big(\lfloor \tilde m_t\rfloor\cdot \epsilon_t\ +\ \bin(\epsilon_t,\ \tilde m_t-\lfloor \tilde m_t\rfloor)\big)\Big|\\
&=|m_t-\tilde m_t|\leq\sum_{i=0}^\infty\theta_i|x_{t-1-i}-\tilde x_{t-1-i}|=\sum_{i=1}^\infty\theta_{i-1}|x_{t-i}-\tilde x_{t-i}|,
\end{align*}
where we have used that $\theta_i\geq0$. Observe that by the identity $\alpha(z)=\beta(z)\theta(z)$, it follows that $\theta(1)<1$ if and only if $\sum_{i=1}^pa_i+\sum_{j=1}^qb_j<1$.
This verifies the conditions of Theorem~3.1 in \citet{doukhan08}, which, in turn, implies that $(X_t)$ is weakly dependent.
\end{proof}

\numberwithin{satz}{section}

\section{Parameter Estimation and Model Diagnostics}
\label{Parameter Estimation}
In this section, we consider two kinds of estimation methods for the INGARCH-CMEMs discussed in Section \ref{MEMs for Count Time Series}: QMLE and 2SWLSE, where the former has been discussed
by \citet{christou14}, \cite{ahmad16} and \cite{aknouche18}, and the latter by \citet{aknouche22}.
To implement these estimation approaches, the INGARCH structure as well as the multiplicative operator have to be specified, but not the distribution of $\epsilon_t$. Thus, QMLE and 2SWLSE approaches can be understood as semi-parametric estimation approaches.

Throughout this section, let $X_1,\ldots, X_n$ be observations from models \eqref{CompCountMEMdef} or \eqref{CompCountMEMdef2}, respectively, with true parameters $\ftheta_0:=(a_{00},a_{01},\ldots,a_{0p},b_{01},\ldots,b_{0q})'$ and $\sigma_0^2$. We assume
that $\ftheta_0\in\fTheta\subset(0,\infty)\times[0,1)^{p+q}$ and $\sigma_0^2\in\Delta\subset(0,\infty)$, where the latter condition ensures the non-degeneracy of $\epsilon_t$.

\subsection{Quasi-Maximum Likelihood Estimation}
\label{Quasi-Maximum Likelihood Estimation}
Because models \eqref{CompCountMEMdef} and \eqref{CompCountMEMdef2} have the same mean, the QMLE approach will yield the same estimate of the regression parameters, although observations come from different models.

\subsubsection{QMLE of Regression Parameters}
\label{QMLE of Regression Parameters}
\citet{christou14} considered Poisson QMLE for negative-binomial INGARCH models, and \citet{ahmad16} studied the asymptotic distribution of the estimator when the parameter belongs to the interior of the parameter space and when it lies at the boundary.

For any generic $\ftheta:=(a_0,a_1,\ldots,a_p,b_1,\ldots,b_q)'\in\fTheta$, define
\ba\label{mean1}
 M_t(\ftheta)\ =\ a_0+\sum_{i=1}^p a_i\, X_{t-i}+\sum_{j=1}^q b_j\, M_{t-j}(\ftheta),~~~t\in\mathbb{Z},
\ea
where $M_t(\ftheta_0)=M_t$.
Let $l_t(\ftheta)=X_t\log M_t(\ftheta)- M_t(\ftheta)$, then the Poisson QMLE (PQ) of $\theta_0$ is defined as
\ba\label{like-fun}
\hat\ftheta_{\textup{PQ}}=\arg\max_{\fthetai\in\fThetai} \sum_{t=1}^n l_t(\ftheta).
\ea
Under some regularity conditions, using techniques similar to those in Theorem~2 of \citet{christou14}, $\hat\ftheta_{\textup{PQ}}$ can be proven to be consistent and asymptotically normal, \ie
$$
\begin{array}{@{}l}
\sqrt n(\hat\ftheta_{\textup{PQ}}-\ftheta_0)\xrightarrow{d}\norm\big(\0,\ \mG^{-1}(\ftheta_0)\mG_1(\ftheta_0,\sigma_0^2)\mG^{-1}(\ftheta_0)\big),
\quad\text{where}
\\[2ex]
\displaystyle
\mG(\ftheta)=-E\left[\frac{\partial^2l_t(\ftheta)}{\partial\ftheta\partial\ftheta'}\right] = E\left[\frac{1}{M_t(\ftheta)}\, \frac{\partial M_t(\ftheta)}{\partial\ftheta}\, \frac{\partial M_t(\ftheta)}{\partial\ftheta'}\right],\\
\\[-2ex]
\displaystyle
\mG_1(\ftheta,\sigma^2)=E\left[V\left[\frac{\partial l_t(\ftheta)}{\partial\ftheta}\bigg| \mathcal F_{t-1}\right]\right] = E\left[\frac{v_t(\ftheta,\sigma^2)}{M_t^2(\ftheta)}\, \frac{\partial M_t(\ftheta)}{\partial\ftheta}\, \frac{\partial M_t(\ftheta)}{\partial\ftheta'}\right],
\end{array}
$$
where $v_t(\ftheta,\sigma^2):=V[X_t\ |\ \mathcal F_{t-1}]$.
In practice, maximization of \eqref{like-fun} is implemented by assuming some (fixed or random) positive starting values $\tilde M_0(\ftheta),\ldots,\tilde M_{1-q}(\ftheta),X_0,\ldots,X_{1-p}$. Let
$\tilde M_t(\ftheta)$ be an observable counterpart of $M_t(\ftheta)$, given by the recursion
\ba\label{mean2}
\tilde M_t(\ftheta) = a_0+\sum_{i=1}^p a_i X_{t-i}+\sum_{j=1}^q b_j\tilde M_{t-j}(\ftheta),~~~t\geq1.
\ea
Then, we have an observable counterpart of $l_t(\ftheta)$ as $\tilde l_t(\ftheta)=X_t\log\tilde M_t(\ftheta)-\tilde M_t(\ftheta)$, and we can approximate $l_t(\ftheta)$ by $\tilde l_t(\ftheta)$.

\medskip
Besides the Poisson QMLE, there are other versions of QMLE as well. \citet{aknouche18} and \cite{aknouche21} discussed the negative-binomial QMLE  and the exponential QMLE, respectively. As noted by \citet{aknouche22}, the asymptotic efficiency of the QMLE depends on the model's variance-to-mean relationship $V[X_t\ |\ \mathcal F_{t-1}]=g(M_t)$, where $g$ is a positive real-valued function.
Thus, each one of the Poisson, negative-binomial, and exponential QMLE is asymptotically optimal whenever $g(x)=c_1x$, $g(x)=c_2x(1+c_3x)$, and $g(x)=c_4x^2$, respectively, for any $c_1,c_2,c_3,c_4>0$.
From \eqref{CompCountMEMcvar} and \eqref{CompCountMEMcvar2}, we may conclude that negative-binomial and exponential QMLEs are asymptotically efficient for models \eqref{CompCountMEMdef} and \eqref{CompCountMEMdef2}, respectively. We should point out that they are not universally (asymptotically) efficient but only within the class of all QMLEs in the linear exponential family (see also \cite{wooldridge} and the references therein).

For model \eqref{CompCountMEMdef}, the negative-binomial QMLE (NQ) is defined as
$$
\hat\ftheta_{\textup{NQ}}=\arg\max_{\fthetai\in\fThetai} \sum_{t=1}^n \Big\{X_t\log\tilde M_t(\ftheta)-(r+X_t)\log(r+\tilde M_t(\ftheta))\Big\}
$$
for some fixed $r>0$ (in their simulations, \citet{aknouche22} used $r=1$ throughout). Note that \citet{aknouche18} also proposed another estimation approach, which consists of four stages estimating both conditional mean and dispersion parameters.

For model \eqref{CompCountMEMdef2}, the exponential QMLE (EQ) is defined as
$$
\hat\ftheta_{\textup{EQ}}=\arg\max_{\fthetai\in\fThetai} \sum_{t=1}^n \Big\{-\log\tilde M_t(\ftheta)-X_t/\tilde M_t(\ftheta)\Big\}.
$$
Large-sample properties of $\hat\ftheta_{\textup{NQ}}$ and $\hat\ftheta_{\textup{EQ}}$ are discussed by \citet{aknouche18} and \citet{aknouche21}, respectively, which will not be repeated here. From a practical point of view, the crucial difference are modified expressions for the matrices $\mG(\ftheta)$ and $\mG_1(\ftheta,\sigma^2)$ involved in the asymptotic normal distribution, namely
$$
\textstyle
\mG(\ftheta) = E\left[\frac{1}{M_t(\fthetai) (r+M_t(\fthetai))}\, \frac{\partial M_t(\fthetai)}{\partial\fthetai}\, \frac{\partial M_t(\fthetai)}{\partial\fthetai'}\right],\
\mG_1(\ftheta,\sigma^2) = E\left[\frac{v_t(\fthetai,\sigma^2)}{M_t^2(\fthetai) (r+M_t(\fthetai))^2}\, \frac{\partial M_t(\fthetai)}{\partial\fthetai}\, \frac{\partial M_t(\fthetai)}{\partial\fthetai'}\right]
\ \text{for } \hat\ftheta_{\textup{NQ}},
$$
and
$$
\textstyle
\mG(\ftheta) = E\left[\frac{1}{M_t^2(\fthetai)}\, \frac{\partial M_t(\fthetai)}{\partial\fthetai}\, \frac{\partial M_t(\fthetai)}{\partial\fthetai'}\right],\
\mG_1(\ftheta,\sigma^2) = E\left[\frac{v_t(\fthetai,\sigma^2)}{M_t^4(\fthetai)}\, \frac{\partial M_t(\fthetai)}{\partial\fthetai}\, \frac{\partial M_t(\fthetai)}{\partial\fthetai'}\right]
\ \text{for } \hat\ftheta_{\textup{EQ}}.
$$
For any of the three QMLE approaches, the estimates are determined by numerical optimization. To initialize the optimization routine, a moment estimate of~$\ftheta_0$ might be used, which is easily obtained from sample mean and ACF by solving the moment properties provided by Section~\ref{MEMs for Count Time Series}.

\subsubsection{Estimation of Parameter $\sigma^2$}
\label{Estimation-of-sigma}
It remains to estimate the parameter value~$\sigma_0^2$, which corresponds to the variance of~$\epsilon_t$. For model \eqref{CompCountMEMdef} with Poisson-counting series, the conditional variance is given by $V[X_t\ |\ \mathcal F_{t-1}] = M_t + \sigma^2 M_t^2$, recall \eqref{CompCountMEMcvar}. Thus, let $e_t=(X_t-M_t)^2-(M_t+\sigma^2\,M_t^2)$. Then, we have $(X_t-M_t)^2=M_t+\sigma^2\,M_t^2+e_t$, or equivalently,
\ba\label{sigma-eq}
\frac{(X_t-M_t(\ftheta))^2-M_t(\ftheta)}{M_t^2(\ftheta)}=\sigma^2+\tau_t,
\ea
where $\tau_t=e_t/M_t^2(\ftheta)$ is a term of  martingale difference. Replacing $M_t(\ftheta)$ by $\hat M_t:=\tilde M_t(\hat\ftheta_{\textup{PQ}})$ or $\tilde M_t(\hat\ftheta_{\textup{NQ}})$ or $\tilde M_t(\hat\ftheta_{\textup{EQ}})$, respectively (one might also use the moment estimate of~$\ftheta_0$), we obtain the least-squares estimate of $\sigma_0^2$ as
$$
\hat\sigma_\poi^2=\frac{1}{n}\sum_{t=1}^n\frac{(X_t-\hat M_t)^2-\hat M_t}{\hat M_t^2}.
$$
Under some regularity conditions, using techniques similar to those in Theorem~3.2 of \citet{aknouche22}, $\hat\sigma_\poi^2$ can be proven to be consistent and asymptotically normal, \ie
$$
\sqrt n(\hat\sigma_\poi^2-\sigma_0^2)\xrightarrow{d} \norm(0,\Lambda),
\quad\text{where}\quad
\Lambda=E\left[\left(\frac{(X_t-M_t(\ftheta_0))^2-(M_t(\ftheta_0)+\sigma^2_0\,M_t^2(\ftheta_0))}{M_t^2(\ftheta_0)}\right)^2\right].
$$
In the case of an NB-counting series, we have $V[X_t\ |\ \mathcal F_{t-1}] = M_t(1+M_t) + \sigma^2 M_t^2$ according to \eqref{CompCountMEMcvar}, so we would define the least-squares estimate of $\sigma_0^2$ as
$$
\hat\sigma_\nb^2=\frac{1}{n}\sum_{t=1}^n\frac{(X_t-\hat M_t)^2-\hat M_t(1+\hat M_t)}{\hat M_t^2}\ =\ \hat\sigma_\poi^2 -1.
$$
But as can be seen by the last equality, we just end up with the estimator of the Poi-case reduced by one. This seems plausible as now a larger part of the variation is explained by the counting series.

\medskip
Finally, for model \eqref{CompCountMEMdef2}, the conditional variance is given by $V[X_t\ |\ \mathcal F_{t-1}] = ( M_t-\lfloor M_t\rfloor)(1-\ M_t+\lfloor M_t\rfloor)+\sigma^2 M_t^2$, recall \eqref{CompCountMEMcvar2}. Thus, let $e_t=(X_t-M_t)^2-\big((M_t-\lfloor M_t\rfloor)(1-\ M_t+\lfloor M_t\rfloor)+\sigma^2\,M_t^2\big)$. Then, we have
$$
\hat\sigma_\bin^2=\frac{1}{n}\sum_{t=1}^n\frac{(X_t-\hat M_t)^2-(\hat M_t-\lfloor \hat M_t\rfloor)(1-\hat M_t+\lfloor \hat M_t\rfloor)}{\hat M_t^2},
$$
and $\sqrt n(\hat\sigma_\bin^2-\sigma_0^2)\xrightarrow{d} \norm(0,\Lambda)$ with
$$
\Lambda=E\left[\left(\frac{(X_t-M_t(\ftheta_0))^2-\big((M_t(\ftheta_0)-\lfloor M_t(\ftheta_0)\rfloor)(1-\ M_t(\ftheta_0)+\lfloor M_t(\ftheta_0)\rfloor)+\sigma^2_0\,M_t^2(\ftheta_0)\big)}{M_t^2(\ftheta_0)}\right)^2\right].
$$

\subsection{Two-Stage Weighted Least-Squares Estimation}
\label{Two-stage Weighted Least Squares Estimation}
As an alternative to the above QMLEs, \citet{aknouche22} proposed a 2SWLSE approach.
The 2SWLSE is not asymptotically efficient in the sense that its asymptotic variance
reaches the Cramer--Rao bound, but is only ``never asymptotically less efficient'' than any QMLE
in the linear exponential family. When the conditional distribution of the model is known and is
a member of the exponential family, then the 2SWLS estimate is asymptotically efficient (in the
sense of the Cramer--Rao bound).  See \citet{aknouche12} and \citet{aknouche22} for more
details and other models.
We shall describe the estimation procedure as follows.
For convenience, we again denote $v_t(\ftheta,\sigma^2):=V[X_t\ |\ \mathcal F_{t-1}]$.
For an arbitrary fixed weighting point, say $(\ftheta'_\ast,\sigma^2_\ast)'\in\fTheta\times\Delta$ (in practice, the aforementioned moment estimates of $(\ftheta_0,\sigma^2_0)$ might be used), a first-stage weighted least-squares estimator (1WLSE, briefly ``1W'') is defined as
\ba
\label{1WLSE}
\hat\ftheta_{\textup{1W}}=\arg\min_{\fthetai\in\fThetai} \sum_{t=1}^n \frac{(X_t-\tilde M_t(\ftheta))^2}{v_t(\ftheta_\ast,\sigma^2_\ast)}.
\ea
We next estimate $\sigma^2$ like in Section~\ref{Estimation-of-sigma}, and we denote the estimator as $\hat\sigma_n^2$. But here, we replace $M_t(\ftheta)$ by $\hat M_t=\tilde M_t(\hat\ftheta_{\textup{1W}})$.
After having consistent estimates of $\ftheta_0$ and $\sigma_0^2$, the optimal weight is estimated by $v_t(\hat\ftheta_{\textup{1W}},\hat\sigma^2_n)$, and a second-stage weighted least-squares estimator (2WLSE, briefly ``2W'') of $\ftheta_0$ has the form
\ba
\label{2WLSE}
\hat\ftheta_{\textup{2W}}=\arg\min_{\fthetai\in\fThetai} \sum_{t=1}^n \frac{(X_t-\tilde M_t(\ftheta))^2}{v_t(\hat\ftheta_{\textup{1W}},\hat\sigma^2_n)}.
\ea
Under some regularity conditions, using techniques similar to those in Theorem~3.3 in \citet{aknouche22}, $\hat\ftheta_{\textup{2W}}$ can be proven to be strongly consistent and asymptotically normal, \ie
$\sqrt n(\hat\ftheta_{\textup{2W}}-\ftheta_0)\xrightarrow{d} \norm(\0,\mJ^{-1}(\ftheta_0,\sigma_0^2)),$
where
$$
\mJ(\ftheta_0,\sigma_0^2)=E\left[\frac{1}{v_t(\ftheta_0,\sigma^2_0)}\, \frac{\partial M_t(\ftheta_0)}{\partial\ftheta}\, \frac{\partial M_t(\ftheta_0)}{\partial\ftheta'}\right].
$$
\citet{aknouche22} stated that $\hat\theta_{\textup{2W}}$ is asymptotically more efficient than $\hat\ftheta_{\textup{1W}}$ and the QMLEs.
Note that if $b_{0j}=0$ for $1\leq j\leq q$, then $\hat\ftheta_{\textup{1W}}$ and $\hat\ftheta_{\textup{2W}}$ have a closed form, which is omitted here.

\subsection{Approaches for Model Diagnostics}
\label{Approaches for Model Diagnostics}
To check the adequacy of the fitted INGARCH-CMEMs (or to compare the performance of different candidate models), several approaches are possible, see Section~2.4 in \citet{weiss18} for a general survey. It should be noted, however, that we solely considered semi-parametric estimation approaches, \ie we did not specify the full distribution of the considered CMEM. Thus, mainly moment-related diagnostic checks are relevant here.

The first option is to compare the fitted model's mean, variance, and ACF (which can be computed from the results derived in Section~\ref{MEMs for Count Time Series}) to the corresponding sample moments.

Second, to check the adequacy of conditional moments, it is common to consider the standardized Pearson residuals $(R_t)$, which are given by
\ba
\label{pres}
R_t(\hat\ftheta, \hat\sigma^2)
\ =\ \frac{X_t-E[X_t\ |\ \mathcal F_{t-1};\ \hat\ftheta, \hat\sigma^2]}{\sqrt{V[X_t\ |\ \mathcal F_{t-1};\ \hat\ftheta, \hat\sigma^2]}}
 \overset{\eqref{CountMEMcvar}}{=} \frac{X_t-M_t(\hat\ftheta)}{\sqrt{\nu\big(M_t(\hat\ftheta)\big) + M_t^2(\hat\ftheta)\cdot \hat\sigma^2}}
\ea
for the models considered here. For an adequate model, the Pearson residuals have mean zero, variance one, and are uncorrelated.
In particular, \citet{aknouche22} used the mean squared Pearson residual, $\textup{MSPR}=\frac{1}{n}\sum_{t=1}^n R_t^2(\hat\ftheta, \hat\sigma^2)$,
for checking the adequacy of the dispersion structure, also see \citet{aleksandrov20} for a more general analysis of this statistic. For an andequate model, the MSPR should be close to one.

Third, for INGARCH-type models, also another type of residuals is commonly used, where the~$X_t$ (or powers thereof) are scaled by~$M_t$, see \citet{weissetal17} for a general discussion.
For INGARCH-type and multiplicative models such as the one proposed by \citet{zhu10} and \citet{aknouche22} and the CMEMs considered here, especially the scaled residuals $S_t(\hat\ftheta) = X_t/M_t(\hat\ftheta)$ are reasonable, and \citet{aknouche22} proposed to compute the sample mean and variance thereof (MSR and VSR, respectively). Note that if $(X_t)$ stems from the CMEM \eqref{CompCountMEMdef} satisfying \eqref{CompCountMEMcvar}, then
$$
\begin{array}{@{}rl}
E[S_t(\ftheta)]\ =& E\big[E[X_t/M_t\ |\ \mathcal F_{t-1}\big]\ =\ 1,
\\[1ex]
V[S_t(\ftheta)]\ =& 0\ +\ E\big[V[X_t/M_t\ |\ \mathcal F_{t-1}\big]\ =\ \sigma^2 + E\big[\nu(M_t)/M_t^2\big].
\end{array}
$$
The last term, $E\big[\nu(M_t)/M_t^2\big]$, depends on the specific type of CMEM. For the compounding CMEM with Poisson counting series, we have $V[S_t(\ftheta)] = \sigma^2 + E[1/M_t]$, while $V[S_t(\ftheta)] = \sigma^2 + 1+E[1/M_t]$ for the NB-counting series. The latter result also shows that an NB-counting series is possible only if the scaled residuals' variance is larger than one. Generally, if~$\nu$ is differentiable, we can approximate $E\big[\nu(M_t)/M_t^2\big]$ by a second-order Taylor expansion around the marginal mean~$\mu=E[X_t]$ as
$$
E\left[\frac{\nu(M_t)}{M_t^2}\right]\ \approx\ \frac{\nu(\mu)}{\mu^2} + \frac{\mu^2\, \nu''(\mu)-4\, \mu\, \nu'(\mu) + 6\,\nu(\mu)}{2\,\mu^4}\,V[X_t],\quad
E\left[\frac{1}{M_t}\right]\ \approx\ \frac{1}{\mu} + \frac{1}{\mu^3}\,V[X_t].
$$
For the binomial CMEM \eqref{CompCountMEMdef2}, the function $\nu(\alpha) = (\alpha-\lfloor\alpha\rfloor)(1-\alpha+\lfloor\alpha\rfloor)$ is not differentiable, but it is bounded by $\nu(\alpha)\in [0,0.25]$. Thus, $V[S_t(\ftheta)] \leq \sigma^2 + E[0.25/M_t^2]$, \ie $V[S_t(\ftheta)]$ often takes a value close to $\sigma^2$.

As a fourth type of diagnostic check, \citet{aknouche22} proposed to compute the mean absolute residual (MAR) defined as $\textup{MAR}=\frac{1}{n}\sum_{t=1}^n |X_t-M_t(\hat\ftheta)|$, where smaller values indicate a better model fit.

\section{Simulation Study}
\label{Simulations}
To analyze the finite-sample performance of the estimation approaches discussed in Section~\ref{Parameter Estimation}, a simulation study was done. For different types of data-generating process (DGP) and sample size $n\in\{300,600,1000\}$, we simulated 10,000 time series and computed estimates according to the methods PQ, NQ, EQ, and 2W (by using moment estimates for initialization, recall Example~\ref{examINGARCH11}). As the DGP, we considered twelve types of INGARCH$(1,1)$-CMEM, namely:
\begin{itemize}
	\item $(a_0, a_1, b_1)=(2.8, 0.4, 0.2)$ and $\sigma^2=1$ or $\sigma^2=0.4$,\\
	similar to the simulation scenario in Section~4 of \citet{aknouche22};
	\item $(a_0, a_1, b_1)=(3, 0.35, 0.5)$ and $\sigma^2=1$ or $\sigma^2=0.1$,\\
	similar to the first data example in Section~\ref{Real-World Data Examples};
	\item $(a_0, a_1, b_1)=(1, 0.25, 0.65)$ and $\sigma^2=1$ or $\sigma^2=0.4$,\\
	similar to the second data example in Section~\ref{Real-World Data Examples};
\end{itemize}
and these parametrizations are combined with either the compounding CMEM \eqref{CompCountMEMdef} with Poisson-counting series, or the binomial CMEM \eqref{CompCountMEMdef2}. The model order $(1,1)$ is relevant for many INGARCH applications in practice, including our subsequent data examples.

The considered values of~$\sigma^2$ are realized by an appropriate choice of the innovations' distribution. Recall that the standard INGARCH model corresponds to a compounding INGARCH-CMEM with deterministic $\epsilon_t=1$ (thus $\sigma^2=0$). In applications, the variance $\sigma^2=V[\epsilon_t]$ of a fitted INGARCH-CMEM is often rather small, see Section~\ref{Real-World Data Examples} for such examples.
Furthermore, an upper bound for~$\sigma^2$ is also implied by Proposition~\ref{second-order-stationarity} if requiring for second-order stationarity. In the case of a Poisson-counting series, for example, $\sigma^2  < (1 - (a_1+b_1)^2)/a_1^2$ has to hold.
If using $\sigma^2=1$ in the above DGP specifications, this value is achieved by $\poi(1)$-distributed $(\epsilon_t)$. Our model for achieving $\sigma^2\in (0,1)$, by contrast, is a three-point distribution with range $\{0,1,2\}$. From the discussion at the end of Section~\ref{Count-MEMs using a Compounding Operator}, we know that we can choose $p_{\epsilon,2}\in (0,0.5)$ arbitrarily. Then, $p_{\epsilon,0} = p_{\epsilon,2}$ and $p_{\epsilon,1} = 1-2\, p_{\epsilon,2}$ follow such that $\sigma^2 = E[\epsilon_t^2]-1 = 2\, p_{\epsilon,2}\in (0,1)$.

\begin{figure}[t]
\centering\small
\begin{tabular}{@{}c@{\qquad}c}
Estimation of $a_0=2.8$ & Estimation of $a_1=0.4$ \\
\includegraphics[viewport=30 45 300 235, clip=, scale=0.75]{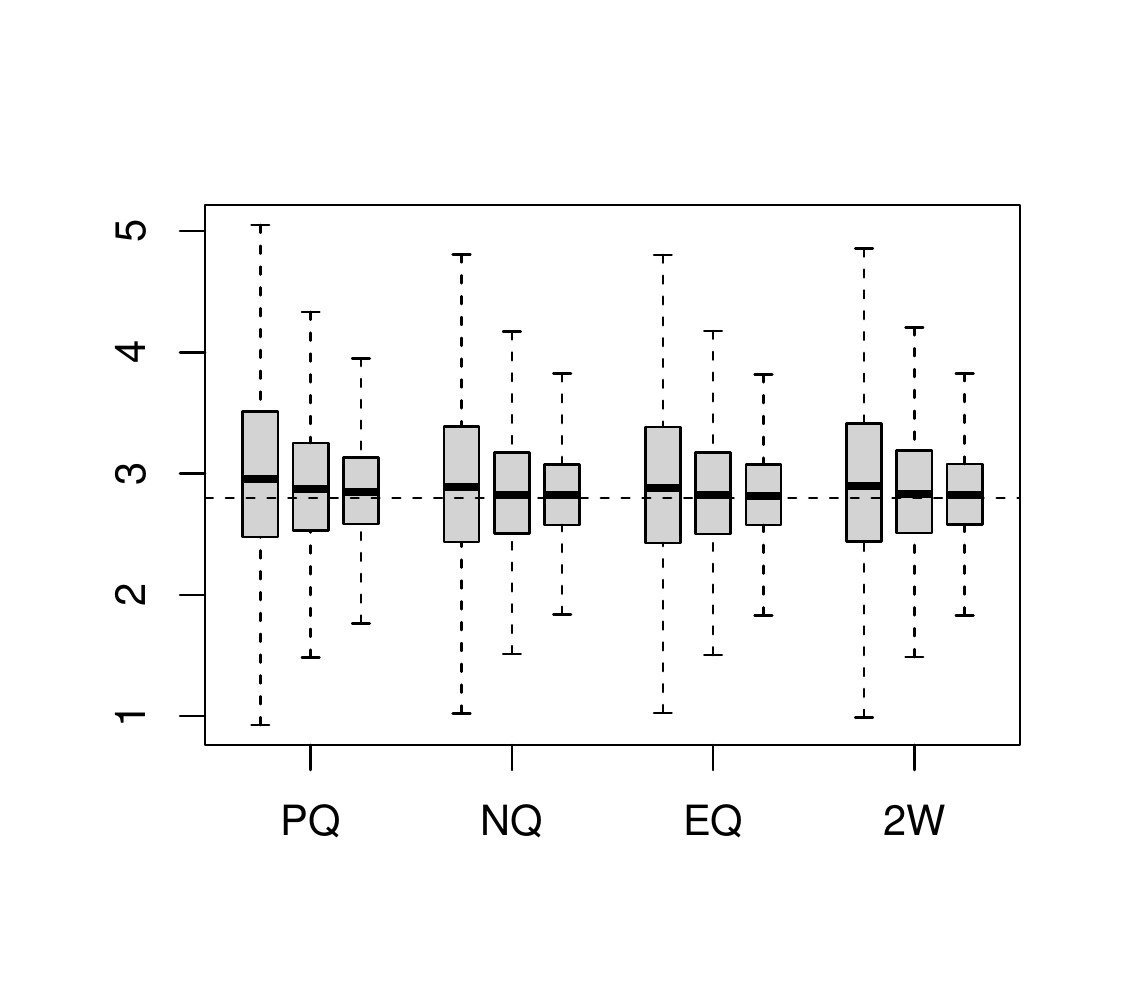}
&
\includegraphics[viewport=30 45 300 235, clip=, scale=0.75]{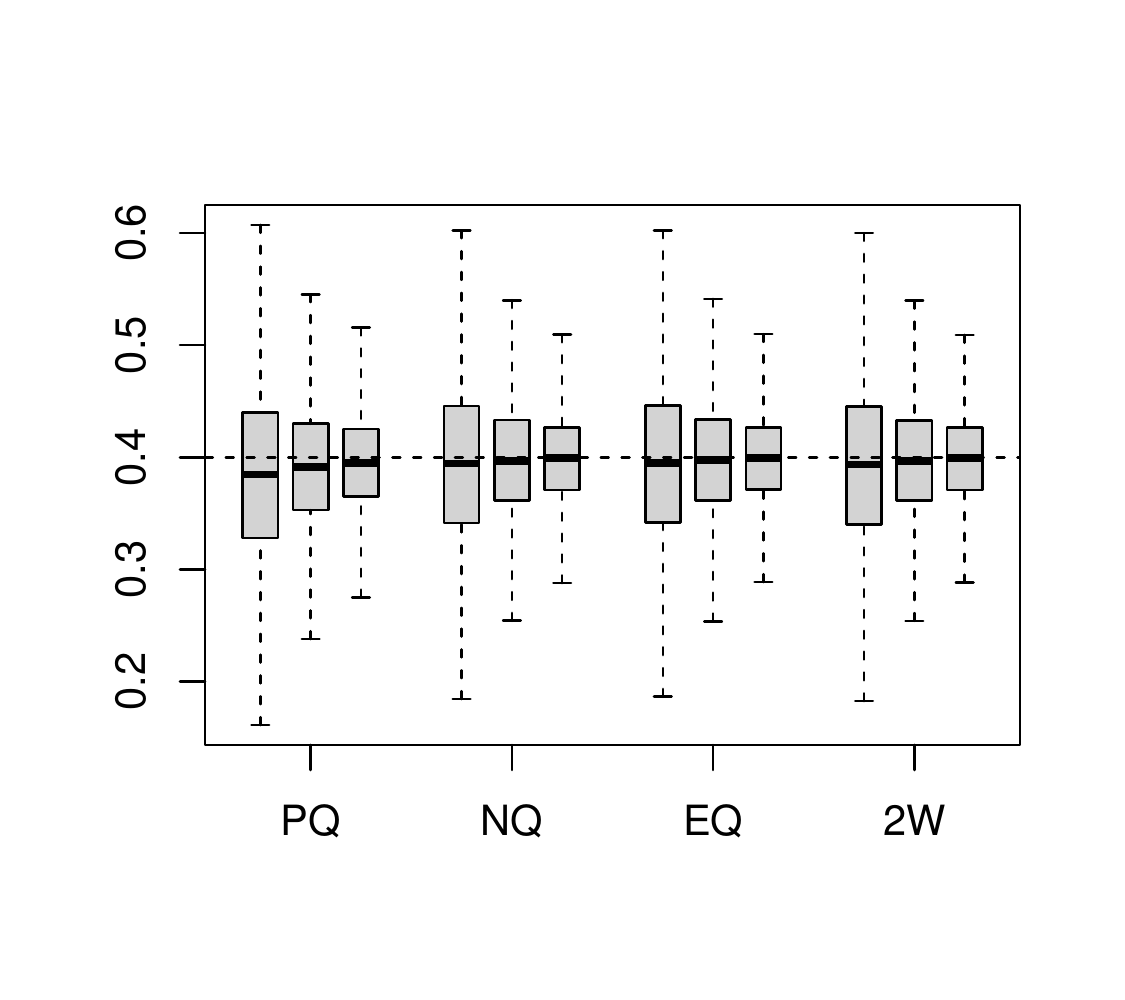}
\\[2ex]
Estimation of $b_1=0.2$ & Estimation of $\sigma^2=1$ \\
\includegraphics[viewport=30 45 300 235, clip=, scale=0.75]{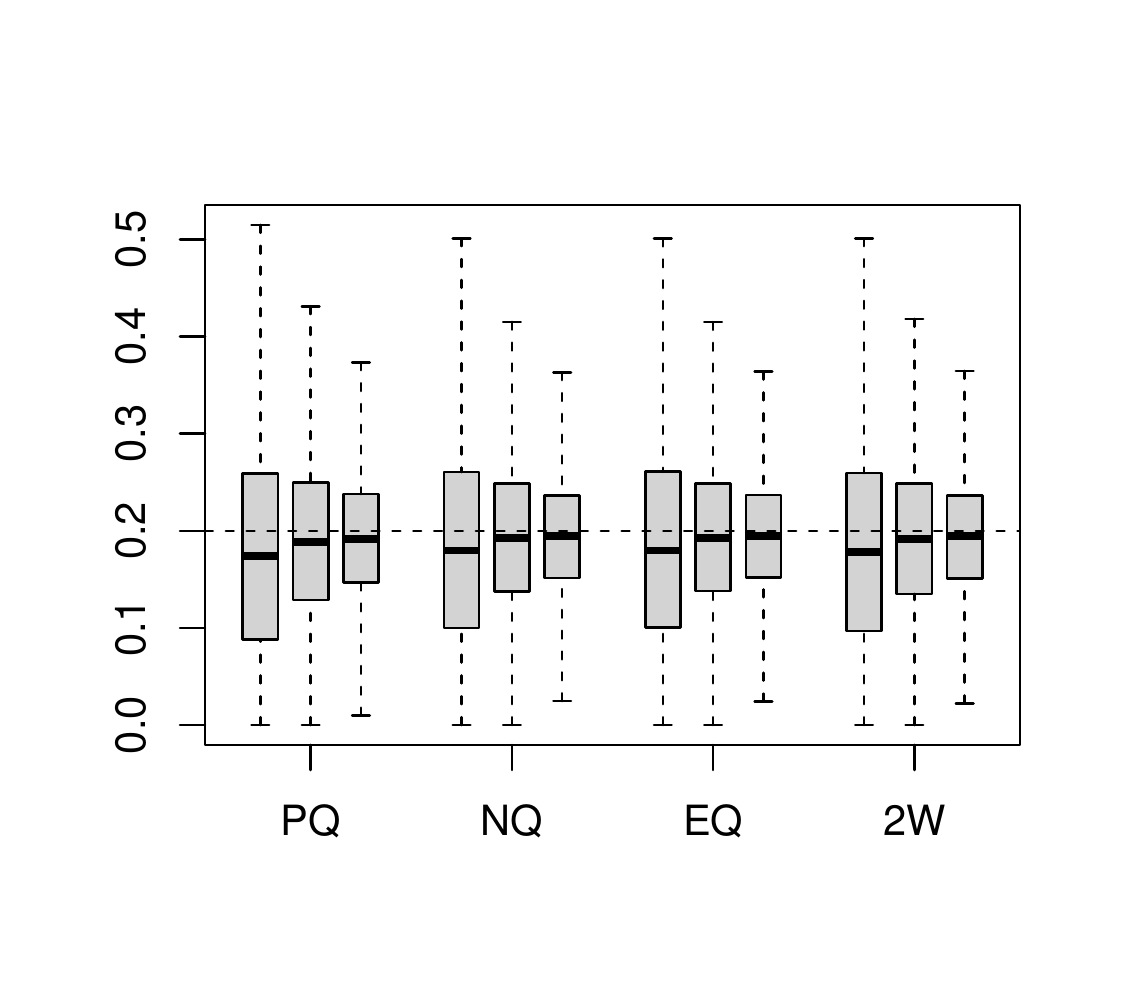}
&
\includegraphics[viewport=30 45 300 235, clip=, scale=0.75]{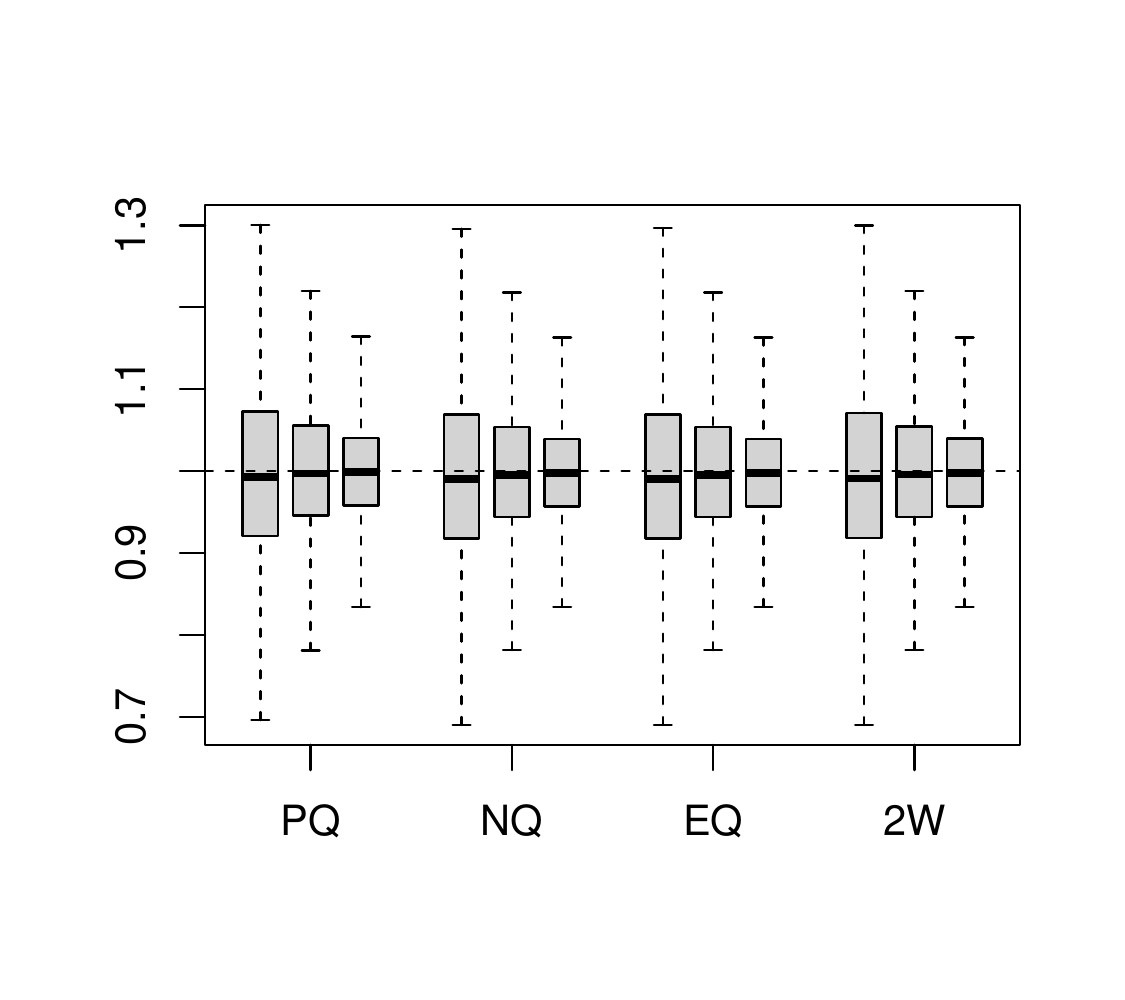}
\end{tabular}
\caption{INGARCH$(1,1)$-CMEM with Poi-counting series: four groups of boxplots for different estimation approaches, resulting from simulated count time series of lengths $n\in\{300,600,1000\}$ per group.}
\label{figSimulations}
\end{figure}

\medskip
Detailed simulation results are summarized in Tables~S.1--S.6 in the Supplement. Furthermore, illustrative boxplots for one example scenario (corresponding to the left part of Table~S.1) are presented in Figure~\ref{figSimulations}. Before discussing these results, a few notes on Tables~S.1--S.6.
The standard errors (SEs) of the considered estimators are approximated based on the asymptotic normal distributions discussed in Section~\ref{Parameter Estimation}. For this purpose, the means $E[\cdot]$ involved in the expressions for $\mG(\ftheta_0)$, $\mG_1(\ftheta_0, \sigma_0^2)$, etc.\ are replaced by sample means, and the respective estimates are plugged-in instead of the true parameter values. The required partial derivatives $\frac{\partial M_t(\fthetai)}{\partial\fthetai}$ can be computed recursively based on \eqref{INGARCHmodels}. For model order $p=q=1$, for example, we get the recursive scheme
$$
\frac{\partial M_t(\ftheta)}{\partial \ftheta} = \big(1, X_{t-1}, M_{t-1}(\ftheta)\big)\ +\ b_1\, \frac{\partial M_{t-1}(\ftheta)}{\partial \ftheta},
$$
also see \citet[pp.~936--937]{ferland06}. In Tables~S.1--S.6, two types of SE are presented: the ``true'' SE, computed as the (trimmed) sample standard deviation among the simulated estimates and abbreviated as ``SSE'', and the mean among the aforementioned approximate SEs, abbreviated as ``ASE''.

The results in Figure~\ref{figSimulations} show that bias and SE decrease with increasing sample size~$n$ (also see Tables~S.1--S.6), confirming the consistency of the estimators. But Figure~\ref{figSimulations} also demonstrates one of the general patterns in Tables~S.1--S.6, namely that the estimates of the AR-parameter~$a_1$ and the variance~$\sigma^2$ are only mildly affected by bias, whereas the estimates of the feedback parameter~$b_1$ and especially of the intercept parameter~$a_0$ might be biased also for larger sample sizes. From Tables~S.1--S.6, it gets clear that this bias is mainly caused by the extent of~$b_1$: the larger~$b_1$, the stronger the bias (especially of~$a_0$). The problems in estimating the intercept term of INGARCH-type models are well known in the literature, see \citet{fokianos09} for example.

Further details can be recognized from Tables~S.1--S.6. If $\sigma^2=1$ (so a large~$\sigma^2$, always see the left part of Tables~S.1--S.6), the PQ-method often causes larger bias and SE than NQ, EQ (especially for~$a_0$), whereas we often see the opposite ranking if $\sigma^2<1$ (right part of Tables~S.1--S.6). This goes along with the discussion of asymptotic optimality in \citet{aknouche22}, recall Section~\ref{QMLE of Regression Parameters}, because for small~$\sigma^2$, the conditional variance of \eqref{CompCountMEMdef} is close to a linear function in~$M_t$. The 2W-method often has bias and SE between the ones of PQ and NQ, EQ. Comparing Tables~S.1 vs.\ S.2, S.3 vs.\ S.4, and S.5 vs.\ S.6, \ie the case of a Poi-counting series to a binomial one, we do not see big differences in performance, \ie the estimation approaches generally perform similar for both classes of DGP. An exception is the bias of~$a_0$, which is somewhat larger for the Bin-counting series if $b_1\geq 0.5$. Finally, the values of SSE and ASE are often close to each other. If there are visible differences between SSE and ASE, these mainly happen for $n=300$, and the strongest absolute discrepancies are observed for the intercept term~$a_0$. The relative deviation, however, is sometimes stronger for~$\sigma^2$. In particular, the 2W-method sometimes leads to visibly larger SSE than ASE (\eg lower left block in Table~S.6), which is caused by a few extreme values of the estimates. The QMLE have less tendency to produce extreme estimates. Overall, the approximate SE seems to give reasonable insight into the true SE of the estimator in most cases.

\bigskip
We conclude this section with some simulation experiments on the effects of a possible model misspecification. Here, we focus on the DGPs with parametrization $(a_0, a_1, b_1)=(2.8, 0.4, 0.2)$ and $\sigma^2=1$ or $\sigma^2=0.4$ (see Tables~S.1 and~S.2). In our first experiment, we fitted an INGARCH$(1,1)$-CMEM to the generated time series by assuming the wrong multiplicative operator:
\begin{itemize}
	\item for the DGP with Poi-counting series from Table~S.1, the misspecified model assumes a Bin-counting series, leading to the simulation results of Table~S.7;
	\item for the DGP with Bin-counting series from Table~S.2, the misspecified model assumes a Poi-counting series, leading to the simulation results of Table~S.8.
\end{itemize}
As the estimation approaches PQ, NQ, and EQ for $(a_0, a_1, b_1)$ do not depend on the specific structure of the multiplicative operator, the corresponding simulated means and SEs in Tables~S.7 and~S.8 are identical to those in Tables~S.1 and~S.2, respectively. The 2W-method, by contrast, depends on the choice of the multiplicative operator, but as can be seen by comparing Table~S.7 to~S.1 and Table~S.8 to~S.2, the effect of the misspecified operator is negligible. So altogether, the estimation of $(a_0, a_1, b_1)$ is robust against such model misspecification. The approximate SEs, in turn, are always affected by the misspecified operator, but the difference between correct and misspecified ASE is again small for $(a_0, a_1, b_1)$.

A clear difference is only observed for the estimation of~$\sigma^2$ (and its corresponding ASE), which is plausible in view of Section~\ref{Estimation-of-sigma}, where tailor-made estimation methods are presented for each type of counting series. However, in view of George Box' famous words ``all models are wrong but some are useful'' (\ie real-world data will never follow exactly one of the considered models, but some of these models might provide a good approximation to the true DGP), the practitioner is probably more interested in the performance of the fitted models. In Table~S.9, the mean MAR values are provided if estimation is done based on the correct vs.\ the misspecified model. For the estimation approaches PQ, NQ, and EQ, the MAR values are identical as the estimates for $(a_0, a_1, b_1)$ do not depend on the specific multiplicative operator. Also for the 2W-method, there is an at most slight difference in the mean MARs, so the MAR performance is very robust against such misspecification. Analogous results but concerning the MSPR performance are shown in Table~S.10. If the Poi-counting series is misspecified as a Bin-counting series (upper part of Table~S.10), there is again an at most tiny difference in the resulting MSPRs. A somewhat stronger effect (although still quite small) is observed in the opposite scenario, where the Bin-counting series is misspecified as a Poi-counting series. So it seems that the CMEM with Bin-counting series can better adapt to the given DGP than the Poi-one.

\smallskip
As a final experiment on the effects of model misspecification, we introduce non-linearity into the DGP while model fitting still assumes the conditional linearity in \eqref{INGARCHmodels}. More precisely, we modify \eqref{INGARCHmodels} by adding the softplus response function $s_c(x)=c\,\ln\big(1+\exp(x/c)\big)$ with adjustment parameter~$c>0$, in analogy to the class of softplus INGARCH models proposed by \citet{weissetal22}:
\ba
\label{spINGARCHmodels}
\textstyle
M_t\ =\ s_c\Big(a_0+\sum_{i=1}^p a_i\, X_{t-i}+\sum_{j=1}^q b_j\, M_{t-j}\Big).
\ea
Increasing~$c$ increases the extent of non-linearity, whereas $\lim_{c\to 0} s_c(x)=\max\{0,x\}$ is piecewise linear. We focus on the illustrative choice $c=2$ (for the default choice $c=1$, we hardly observed any effect on estimates and diagnostics), where the results are summarized in Tables~S.11--S.14. Comparing Table~S.11 to~S.1 and Table~S.12 to~S.2, we note a slight effect of the model misspecification on the estimates for all parameters, and with the strongest effect again on~$\sigma^2$. For practice, however, the main question is the achieved model performance. In Tables~S.13 and~S.14, the MAR and MSPR performance, respectively, is analyzed in comparison to that of the truly linear DGPs. For the MSPR values in Table~S.14, there is an at most tiny effect if DGPs with Poi-counting series are considered, whereas the increase in MSPR is slightly more pronounced in the Bin-case. The main difference, however, is observed in the MAR performance, see Table~S.13, namely clearly increased MAR values under neglected non-linearity. This is plausible as the softplus response function directly affects the conditional mean~$M_t$, which, in turn, is the crucial quantity when computing the MAR, see Section~\ref{Approaches for Model Diagnostics}.

\section{Real-World Data Examples}
\label{Real-World Data Examples}

\begin{figure}[b]
\centering\small
(a)\hspace{-3ex}\includegraphics[viewport=0 45 405 235, clip=, scale=0.7]{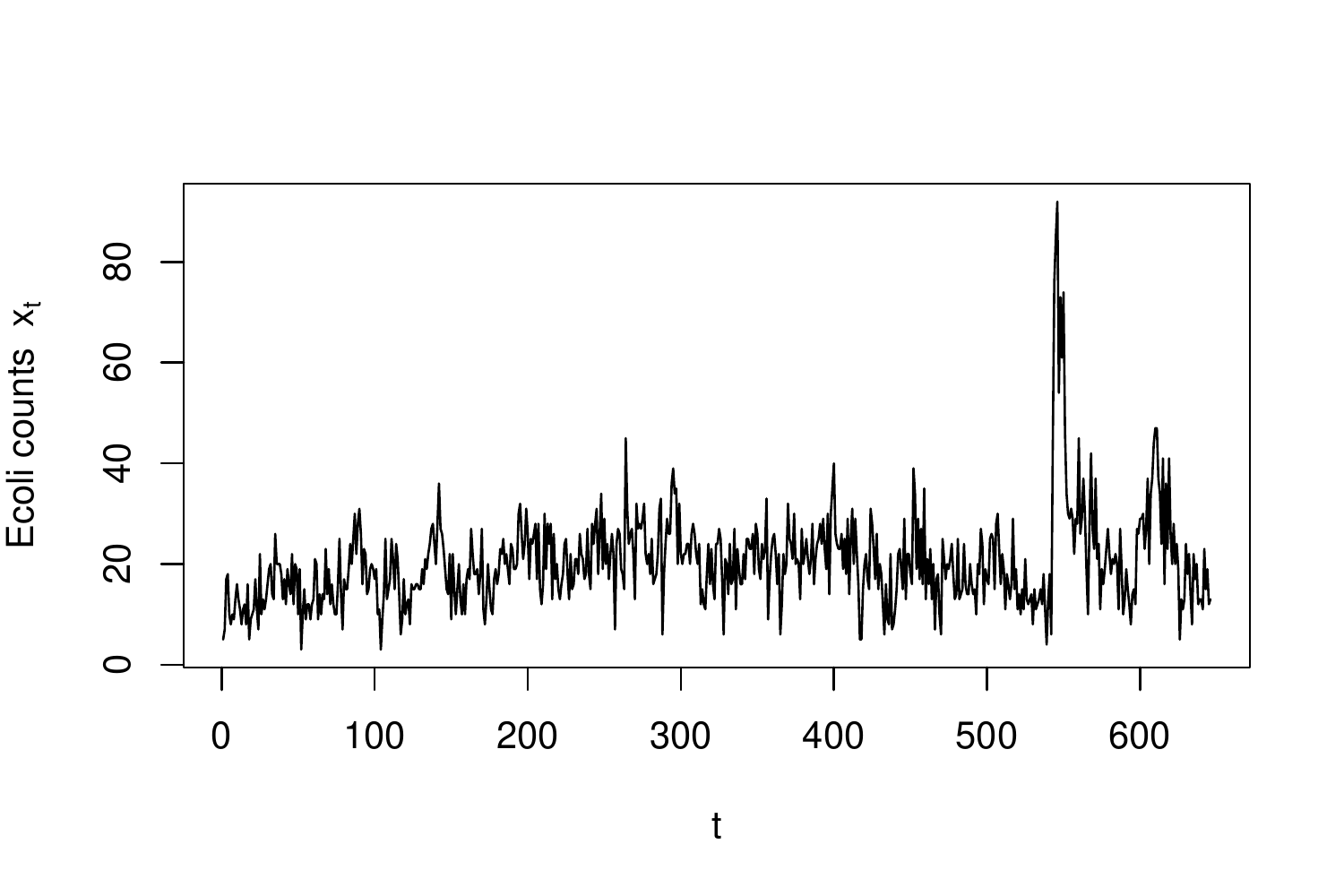}\hspace{-1ex}$t$
\qquad
(b)\hspace{-3ex}\includegraphics[viewport=0 45 190 235, clip=, scale=0.7]{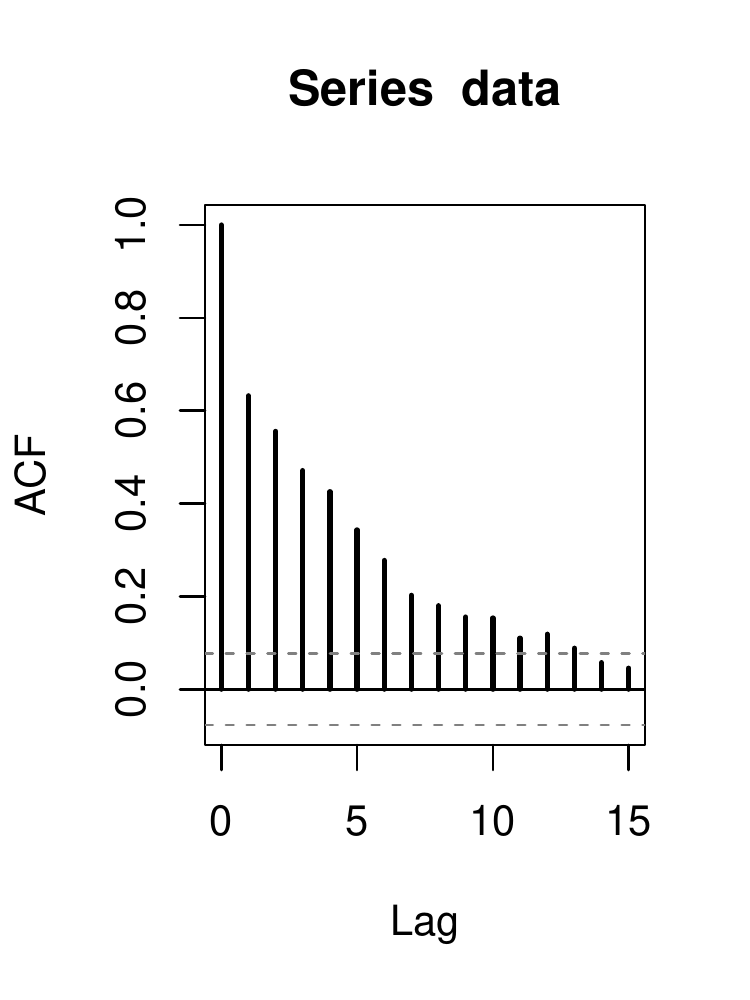}\ $k$
\caption{Ecoli counts from Section~\ref{Disease Counts}: (a) time series plot and (b) sample ACF against lag~$k$.}
\label{figEcoli}
\end{figure}

\subsection{Disease Counts}
\label{Disease Counts}
As our first data example, we use one of the data sets analyzed by \citet{aknouche22}, as these authors proposed another multiplicative model for counts, recall Remark~\ref{remCMEMaknouche}. More precisely, we consider the weekly counts of disease cases caused by Escherichia coli (Ecoli), as reported for North Rhine-Westphalia (Germany) from January 2001 to May 2013. These data are provided through the command \texttt{ecoli} of the R-package \href{https://CRAN.R-project.org/package=tscount}{\nolinkurl{tscount}}, see \citet{liboschik17}, and they were originally taken from SurvStat{@}RKI~2.0 at \url{https://survstat.rki.de/}. Plots of this time series, $x_1,\ldots,x_n$ of length $n=646$, as well as of its sample ACF are shown in Figure~\ref{figEcoli}. We recognize a slow decay of the ACF, and also the partial ACF (not displayed) exhibits a gradual fading. This supports the recommendation of \citet{aknouche22} to use an INGARCH$(1,1)$-type model for these data. Initial model fitting is done by the method of moments (MM), in analogy to Section~\ref{Simulations}. From sample mean and ACF at lags~1--2, recall Example~\ref{examINGARCH11}, we get the estimates $(\hat{a}_{0,\textup{MM}}, \hat{a}_{1,\textup{MM}}, \hat{b}_{1,\textup{MM}}) \approx (2.465, 0.431, 0.448)$. These can be used to compute the scaled residuals $S_t(\hat\ftheta) = X_t/M_t(\hat\ftheta)$, recall Section~\ref{Approaches for Model Diagnostics}, the sample variance of which is given by $\approx 0.121$. As this value is much smaller than one, we conclude that an INGARCH$(1,1)$-CMEM with NB-counting series is not suitable for the Ecoli counts, so we restrict the remaining model fitting to the cases of a Poi- and Bin-counting series. Using the approach of Section~\ref{Estimation-of-sigma}, the corresponding initial estimates of~$\sigma^2$ are~$0.068$ (Poi-case) and~$0.120$ (Bin-case), respectively.

\begin{table}[t]
\centering\small
\caption{Ecoli counts from Section~\ref{Disease Counts}: Estimation results from different methods, approximate SEs in parentheses.}
\label{tabEcoliEst}

\smallskip
\begin{tabular}{r@{\qquad}cccc@{\qquad}cccc}
\toprule
 & \multicolumn{4}{c}{Poi-counting series \eqref{CompCountMEMdef}} & \multicolumn{4}{c}{Bin-counting series \eqref{CompCountMEMdef2}} \\
 & $a_0$ & $a_1$ & $b_1$ & $\sigma^2$ & $a_0$ & $a_1$ & $b_1$ & $\sigma^2$ \\
\midrule
PQ & 2.887 & 0.378 & 0.481 & 0.063 & 2.887 & 0.378 & 0.481 & 0.115 \\
 & (0.620) & (0.040) & (0.055) & (0.012) & (0.649) & (0.043) & (0.057) & (0.012) \\
\midrule
NQ & 3.054 & 0.337 & 0.512 & 0.063 & 3.054 & 0.337 & 0.512 & 0.115 \\
 & (0.616) & (0.038) & (0.055) & (0.012) & (0.577) & (0.037) & (0.052) & (0.012) \\
\midrule
WQ & 3.081 & 0.336 & 0.511 & 0.063 & 3.081 & 0.336 & 0.511 & 0.114 \\
 & (0.626) & (0.038) & (0.055) & (0.012) & (0.580) & (0.037) & (0.053) & (0.012) \\
\midrule
2W & 2.938 & 0.351 & 0.505 & 0.063 & 3.084 & 0.339 & 0.508 & 0.114 \\
 & (0.590) & (0.038) & (0.053) & (0.012) & (0.581) & (0.037) & (0.053) & (0.012) \\
\bottomrule
\end{tabular}
\end{table}

\smallskip
Using these initial estimates, the final estimates according to the methods PQ, NQ, EQ, and 2W are computed, see Table~\ref{tabEcoliEst} for an overview. Note that the methods PQ, NQ, and EQ lead to the same estimates for $a_0, a_1, b_1$ for both models \eqref{CompCountMEMdef} and \eqref{CompCountMEMdef2}, whereas the estimates of~$\sigma^2$ and the approximate SEs differ. The method 2W, by contrast, leads to different estimates throughout. According to Table~\ref{tabEcoliEst}, the NQ- and EQ-estimates are quite similar to each other for a given type of model (and for \eqref{CompCountMEMdef2}, they are also close to the 2W-estimates), but they differ visibly from the PQ-estimates. This goes along with our findings from Section~\ref{Simulations}, where we noted the same (dis)agreements in Tables~S.3 and~S.4 (right blocks, as~$\sigma^2$ is estimated to about~0.1). Also from these tables, recalling that $n=646$ for the Ecoli counts, we expect the estimates of $a_0, a_1, b_1$ to be biased. For $\sigma^2$, by contrast, we do not expect a notable bias. Its estimate is larger for model \eqref{CompCountMEMdef2} than for \eqref{CompCountMEMdef}, which is plausible as the Bin-counting series causes less variation than the Poi-one, so more variation has to be caused by the innovations~$\epsilon_t$.

\begin{table}[t]
\centering\small
\caption{Ecoli counts from Section~\ref{Disease Counts}: Mean, variance, and ACF at lags~1--5: sample properties vs.\ properties of fitted models (with Poi- or Bin-counting series).}
\label{tabEcoliProp}

\smallskip
\begin{tabular}{r@{\qquad}ccccccc}
\toprule
 & Mean & Var & $\rho(1)$ & $\rho(2)$ & $\rho(3)$ & $\rho(4)$ & $\rho(5)$ \\
\midrule
Sample & 20.334 & 88.753 & 0.632 & 0.555 & 0.471 & 0.426 & 0.343 \\
\midrule
Poi, PQ & 20.462 & 75.261 & 0.547 & 0.470 & 0.404 & 0.347 & 0.298 \\
NQ & 20.223 & 66.333 & 0.485 & 0.412 & 0.350 & 0.297 & 0.252 \\
EQ & 20.214 & 65.992 & 0.483 & 0.409 & 0.347 & 0.294 & 0.249 \\
2W & 20.306 & 69.539 & 0.509 & 0.435 & 0.372 & 0.319 & 0.272 \\
\midrule
Bin, PQ & 20.462 & $[79.56,79.97]$ & 0.547 & 0.470 & 0.404 & 0.347 & 0.298 \\
NQ & 20.223 & $[69.13,69.50]$ & 0.485 & 0.412 & 0.350 & 0.297 & 0.252 \\
EQ & 20.214 & $[68.74,69.10]$ & 0.483 & 0.409 & 0.347 & 0.294 & 0.249 \\
2W & 20.223 & $[69.20,69.57]$ & 0.487 & 0.413 & 0.350 & 0.296 & 0.251 \\
\bottomrule
\end{tabular}

\bigskip
%
\caption{Ecoli counts from Section~\ref{Disease Counts}: model  diagnostics according to Section~\ref{Approaches for Model Diagnostics}, applied to CMEMs and some competing models.}
\label{tabEcoliDiag}

\smallskip
\resizebox{\linewidth}{!}{
\begin{tabular}{r@{\qquad}cccc@{\qquad}cccc@{\qquad}cccc}
\toprule
 & \multicolumn{4}{c}{Poi-counting series \eqref{CompCountMEMdef}} & \multicolumn{4}{c}{Bin-counting series \eqref{CompCountMEMdef2}} & \multicolumn{4}{c}{Model \eqref{mul-thin}} \\
 & MAR & MSR & VSR & MSPR & MAR & MSR & VSR & MSPR & MAR & MSR & VSR & MSPR \\
\midrule
PQ & 5.154 & 1.000 & 0.116 & 0.989 & 5.154 & 1.000 & 0.116 & 1.000 & \it 5.166 & \it 1.051 & \it 0.215 & \it 0.999 \\
NQ & 5.143 & 1.000 & 0.115 & 0.995 & 5.143 & 1.000 & 0.115 & 1.000 & \it 5.150 & \it 1.054 & \it 0.218 & \it 1.004 \\
EQ & 5.143 & 1.000 & 0.115 & 0.995 & 5.143 & 1.000 & 0.115 & 1.000 & \it 5.150 & \it 1.053 & \it 0.217 & \it 1.004 \\
2W & 5.145 & 1.000 & 0.115 & 0.992 & 5.144 & 1.000 & 0.115 & 1.000 & \it 5.154 & \it 1.053 & \it 0.216 & \it 1.002 \\
\midrule
 & \multicolumn{4}{c}{Poi-INGARCH$(1,1)$} & \multicolumn{4}{c}{NB-INGARCH$(1,1)$} \\
\cmidrule{1-9}
CML & 5.154 & 1.000 & 0.116 & 2.267 & 5.144 & 1.000 & 0.116 & 1.035 \\
\bottomrule
\multicolumn{13}{l}{\it\footnotesize {\bfseries Note:} Italic values taken from \citet[Tables~III--IV]{aknouche22}.}
\end{tabular}}
\end{table}

\smallskip
To check the adequacy of the fitted models, we consider the approaches discussed in Section~\ref{Approaches for Model Diagnostics}. In Table~\ref{tabEcoliProp}, marginal mean and variance as well as the ACF at lags~1--5 of the fitted models are compared with the respective sample values. Here, it should be recalled that for the Bin-model \eqref{CompCountMEMdef2}, we do not have a closed-form expression, but only a lower and upper bound, see Section~\ref{Count-MEMs using a Binomial Multiplicative Operator}. But the resulting intervals in the lower part of Table~\ref{tabEcoliProp} are quite narrow and, thus, provide reasonable insight into the true variance of these models. While the means agree very well, the fitted models' variances are somewhat lower than the sample variance, where a closer agreement is reached for the PQ-fitted model. The fitted models' ACF is slowly decaying like the sample ACF, but its actual values are too low (again, PQ gives a better fit).
The statistics in Table~\ref{tabEcoliDiag} have also been considered by \citet[Tables~III--IV]{aknouche22} for their type of multiplicative count model \eqref{mul-thin}, recall Remark~\ref{remCMEMaknouche}. Therefore, we provide the corresponding values for MAR, MSR, VSR, and MSPR from \citet[Tables~III--IV]{aknouche22} within Table~\ref{tabEcoliDiag} for comparison. As further competitors, we fit the (fully parametrized) Poi-INGARCH$(1,1)$ model of \citet{ferland06} and NB-INGARCH$(1,1)$ model of \citet{zhu11} by conditional maximum likelihood (CML) estimation, see the lower part of Table~\ref{tabEcoliDiag}.
The MARs in Table~\ref{tabEcoliDiag} are lower than the MARs for the respective fit of model \eqref{mul-thin}, and we get lower MARs for NQ, EQ, 2W than for PQ. The MSPRs for model \eqref{CompCountMEMdef} are again closer to the target value of~1 if using NQ-, EQ- or 2W-estimates, but a nearly perfect agreement is achieved for the Bin-model \eqref{CompCountMEMdef2}. Again, we observe a better MSPR performance than \citet{aknouche22} did for their model fits of \eqref{mul-thin}. Regarding the scaled residuals, the MSR is extremely close to the target value~1 without exception. The VSR values do not differ much between the different model fits and are close to the estimate of~$\sigma^2$ within the Bin-model \eqref{CompCountMEMdef2}. For the fully parametrized INGARCH$(1,1)$ models, the Poi-INGARCH$(1,1)$ model clearly falls short in terms of MSPR, because it is not able to capture the (conditional) variance of the Ecoli counts. The NB-INGARCH$(1,1)$ model is more powerful in this respect, but its MSPR value still shows stronger deviation from the target value of~1. So altogether, the CMEMs show the best overall performance.

\smallskip
To sum up, while the unconditional moments are better met if using PQ-estimates, the conditional properties are better captured by NQ, EQ, and 2W for the Ecoli data. Overall, the model \eqref{CompCountMEMdef2} with Bin-counting series provides a better fit, but with visible deviations regarding marginal variance and ACF. A reason might be the period $t=544,\ldots,550$, where exceptionally large count values~$x_t$ are observed (larger than~50). This segment might affect the obtained estimates (as well as the printed sample properties) beyond the ``natural'' bias that is to be expected for an INGARCH$(1,1)$-CMEM time series of length $n=646$.

\begin{table}[t]
\centering\small
\caption{Ecoli counts from Section~\ref{Disease Counts}: predictive performance for CMEMs and some competing models.}
\label{tabEcoliPred}

\smallskip
\begin{tabular}{r@{\qquad}cccc@{\qquad}cccc}
\toprule
 & \multicolumn{4}{c}{Poi-counting series \eqref{CompCountMEMdef}} & \multicolumn{4}{c}{Bin-counting series \eqref{CompCountMEMdef2}} \\
 & MAR & MSR & VSR & MSPR & MAR & MSR & VSR & MSPR \\
\midrule
PQ & 6.249 & 1.066 & 0.124 & 1.136 & 6.249 & 1.066 & 0.124 & 1.092 \\
NQ & 6.323 & 1.075 & 0.124 & 1.167 & 6.323 & 1.075 & 0.124 & 1.115 \\
EQ & 6.329 & 1.075 & 0.124 & 1.169 & 6.329 & 1.075 & 0.124 & 1.117 \\
2W & 6.278 & 1.071 & 0.124 & 1.152 & 6.323 & 1.075 & 0.124 & 1.116 \\
\midrule
 & \multicolumn{4}{c}{Poi-INGARCH$(1,1)$} & \multicolumn{4}{c}{NB-INGARCH$(1,1)$} \\
\midrule
CML & 6.249 & 1.066 & 0.124 & 2.812 & 6.272 & 1.070 & 0.123 & 1.201 \\
\bottomrule
\end{tabular}
\end{table}

\bigskip
Finally, let us analyze the forecast performance of the fitted CMEMs in comparison to the competing Poi- and NB-INGARCH$(1,1)$ model. In Table~\ref{tabEcoliDiag}, we applied the MAR, MSR, VSR, and MSPR to the same data that was also used for model fitting, \ie they provide insight into the goodness of fit. Next, we apply these criteria to 84 new Ecoli counts, taken from SurvStat{@}RKI~2.0 at \url{https://survstat.rki.de/} for the remainder of 2013 as well as for the year 2014, to get insight into the predictive abilities of the fitted models. Results are summarized in Table~\ref{tabEcoliPred}. While all models perform similar in terms of MSR and VSR, the best MAR performance is achieved for the PQ-fit of both types of CMEM as well as for the Poi-INGARCH$(1,1)$ model. The latter, however, again falls short in terms of the MSPR. The NB-INGARCH$(1,1)$ model has a MSPR value much closer to~1, but worse than for any of the CMEMs. Among the CMEMs, the models with Bin-counting series have a better MSPR performance than their Poi-counterparts, and the MSPR closest to~1 is achieved again for the PQ-fit. Hence, the PQ-fitted CMEM \eqref{CompCountMEMdef2} with Bin-counting series shows the best predictive performance regarding the 84~Ecoli counts from June~2013 until December~2014.

\begin{figure}[t]
\centering\small
(a)\hspace{-3ex}\includegraphics[viewport=0 45 405 235, clip=, scale=0.7]{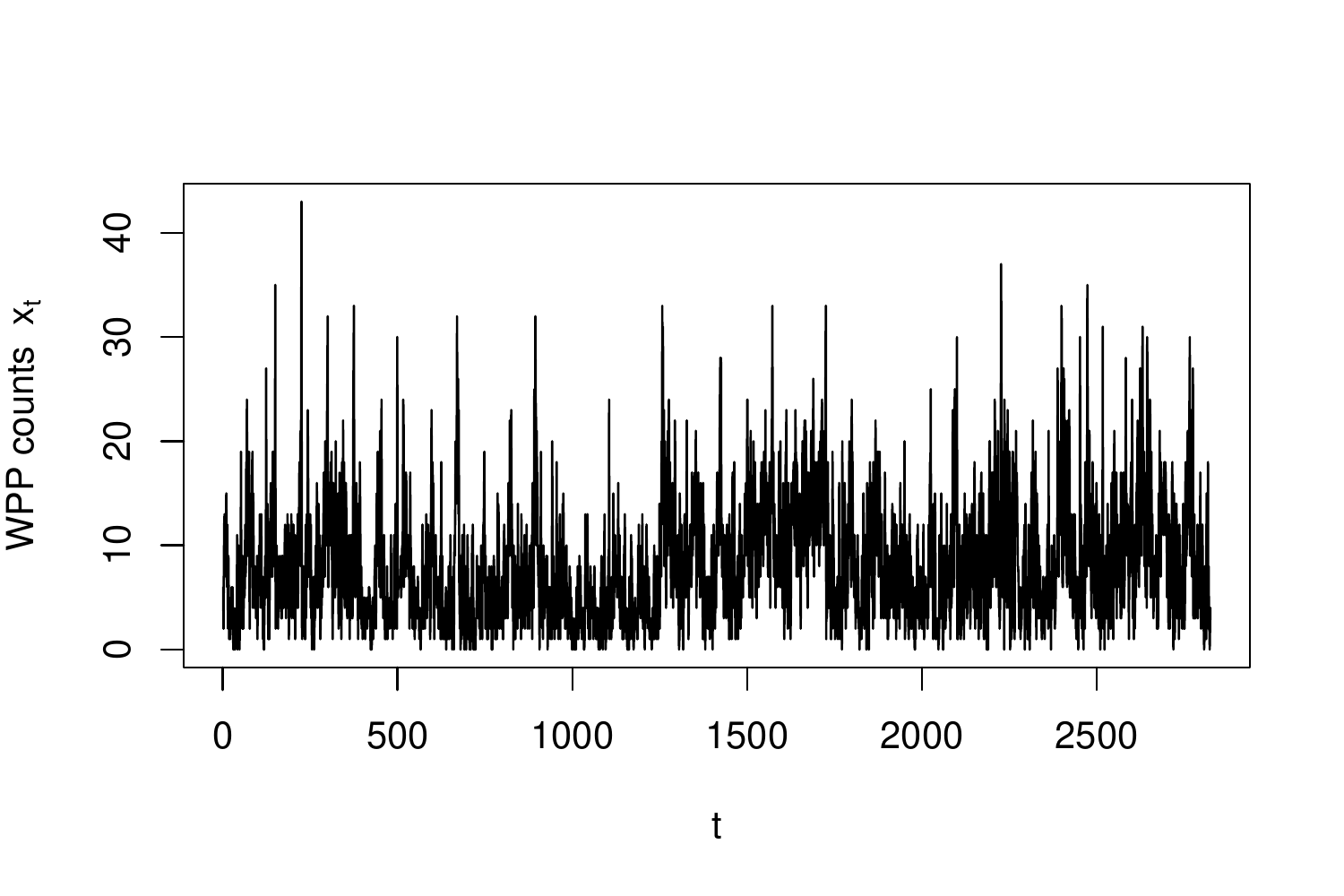}\hspace{-1ex}$t$
\qquad
(b)\hspace{-3ex}\includegraphics[viewport=0 45 190 235, clip=, scale=0.7]{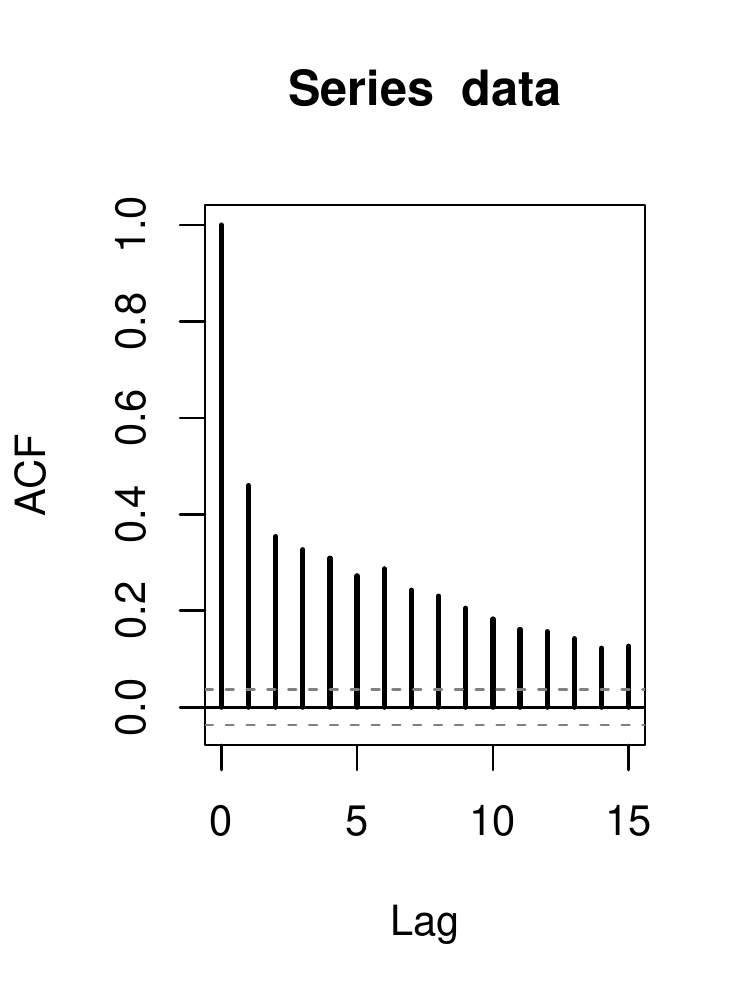}\ $k$
\caption{WPP counts from Section~\ref{Transaction Counts}: (a) time series plot and (b) sample ACF against lag~$k$.}
\label{figWPP}
\end{figure}

\subsection{Transaction Counts}
\label{Transaction Counts}
Besides the surveillance of infectious diseases, another common application area for count time series is financial transactions. Here, we consider one of the time series provided by \citet{aknoucheetal22}, namely the number of stock transactions concerning the Wausau Paper Corporation (WPP), measured in 5-min intervals between
9:45\,AM and 4:00\,PM for the period from January~3 to February~18 in 2005.
More precisely, in analogy to \citet[Section~6.2]{aknoucheetal22}, we leave out the last 100 observations for a later forecast experiment.
The resulting count time series $x_1,\ldots,x_n$ of length $n=2825$ is plotted in Figure~\ref{figWPP} together with its sample ACF. The ACF is slowly decaying, which indicates that an INGARCH$(1,1)$-type model with a large value for the feedback parameter~$b_1$ could be appropriate for these data, also see \citet{aknoucheetal22}.
In analogy to Section~\ref{Disease Counts}, we start with an initial analysis of the WPP counts. The moment estimates regarding the conditional mean \eqref{INGARCHmodels} are $(\hat{a}_{0,\textup{MM}}, \hat{a}_{1,\textup{MM}}, \hat{b}_{1,\textup{MM}}) \approx (1.874, 0.361, 0.409)$, which are used to compute the scaled residuals and the sample variance thereof. The obtained value $\approx 0.437$ is again much smaller than one, so an NB-counting series is not suitable and we restrict the following model fitting to INGARCH$(1,1)$-CMEMs with Poi- and Bin-counting series. The corresponding initial estimates of~$\sigma^2$ are~$0.299$ (Poi-case) and~$0.433$ (Bin-case), respectively.

\begin{table}[t]
\centering\small
\caption{WPP counts from Section~\ref{Transaction Counts}: Estimation results from different methods, approximate SEs in parentheses.}
\label{tabWPPEst}

\smallskip
\begin{tabular}{r@{\qquad}cccc@{\qquad}cccc}
\toprule
 & \multicolumn{4}{c}{Poi-counting series \eqref{CompCountMEMdef}} & \multicolumn{4}{c}{Bin-counting series \eqref{CompCountMEMdef2}} \\
 & $a_0$ & $a_1$ & $b_1$ & $\sigma^2$ & $a_0$ & $a_1$ & $b_1$ & $\sigma^2$ \\
\midrule
PQ & 0.792 & 0.268 & 0.634 & 0.301 & 0.792 & 0.268 & 0.634 & 0.439 \\
 & (0.108) & (0.018) & (0.026) & (0.022) & (0.110) & (0.019) & (0.027) & (0.022) \\
\midrule
NQ & 0.768 & 0.266 & 0.640 & 0.302 & 0.768 & 0.266 & 0.640 & 0.439 \\
 & (0.103) & (0.017) & (0.025) & (0.022) & (0.100) & (0.017) & (0.025) & (0.022) \\
\midrule
WQ & 0.769 & 0.265 & 0.640 & 0.301 & 0.769 & 0.265 & 0.640 & 0.439 \\
 & (0.104) & (0.017) & (0.025) & (0.022) & (0.100) & (0.017) & (0.025) & (0.022) \\
\midrule
2W & 0.769 & 0.266 & 0.640 & 0.302 & 0.769 & 0.265 & 0.640 & 0.439 \\
 & (0.103) & (0.017) & (0.025) & (0.022) & (0.100) & (0.017) & (0.025) & (0.022) \\
\bottomrule
\end{tabular}
\end{table}

\smallskip
The final estimates together with their approximate SEs are summarized in Table~\ref{tabWPPEst}. Because of the very large sample size, namely $n=2825$, we do not expect any bias problems in estimation, recall our simulation study in Section~\ref{Simulations}. In fact, all estimation methods lead to very similar results. The only visible deviations are observed for the intercept parameter~$a_0$ on the one hand, where~PQ leads to a slightly larger estimate than the other methods, and for $\sigma^2$ on the other hand, which has larger estimates for the Bin- than for the Poi-counting series (for the same reason as explained in Section~\ref{Disease Counts}). It is worth mentioning that our final estimates are close to the ones obtained by \citet[Supplement~1.2]{aknoucheetal22}, although these authors used fully parametric INGARCH$(1,1)$ models. In particular, we note a very close agreement to their ``NB2-INGARCH model'' in Table~3, which is the NB-INGARCH$(1,1)$ model of \citet{zhu11} and constitutes their recommended model choice.

\begin{table}[t]
\centering\small
\caption{WPP counts from Section~\ref{Transaction Counts}: Mean, variance, and ACF at lags~1--5: sample properties vs.\ properties of fitted models (with Poi- or Bin-counting series).}
\label{tabWPPProp}

\smallskip
\begin{tabular}{r@{\qquad}ccccccc}
\toprule
 & Mean & Var & $\rho(1)$ & $\rho(2)$ & $\rho(3)$ & $\rho(4)$ & $\rho(5)$ \\
\midrule
Sample & 8.144 & 35.976 & 0.460 & 0.354 & 0.327 & 0.309 & 0.273 \\
\midrule
Poi, PQ & 8.150 & 44.375 & 0.446 & 0.403 & 0.364 & 0.329 & 0.297 \\
NQ & 8.164 & 44.697 & 0.446 & 0.404 & 0.366 & 0.332 & 0.301 \\
EQ & 8.163 & 44.613 & 0.446 & 0.404 & 0.366 & 0.331 & 0.300 \\
2W & 8.166 & 44.786 & 0.447 & 0.405 & 0.367 & 0.332 & 0.301 \\
\midrule
Bin, PQ & 8.150 & $[48.92,49.34]$ & 0.446 & 0.403 & 0.364 & 0.329 & 0.297 \\
NQ & 8.164 & $[49.34,49.76]$ & 0.446 & 0.404 & 0.366 & 0.332 & 0.301 \\
EQ & 8.163 & $[49.23,49.65]$ & 0.446 & 0.404 & 0.366 & 0.331 & 0.300 \\
2W & 8.161 & $[49.24,49.66]$ & 0.446 & 0.404 & 0.366 & 0.331 & 0.300 \\
\bottomrule
\end{tabular}

\bigskip
%
\caption{WPP counts from Section~\ref{Transaction Counts}: model  diagnostics according to Section~\ref{Approaches for Model Diagnostics}, applied to CMEMs and some competing models.}
\label{tabWPPDiag}

\smallskip
\begin{tabular}{r@{\qquad}cccc@{\qquad}cccc}
\toprule
 & \multicolumn{4}{c}{Poi-counting series \eqref{CompCountMEMdef}} & \multicolumn{4}{c}{Bin-counting series \eqref{CompCountMEMdef2}} \\
 & MAR & MSR & VSR & MSPR & MAR & MSR & VSR & MSPR \\
\midrule
PQ & 3.916 & 1.000 & 0.443 & 0.975 & 3.916 & 1.000 & 0.443 & 0.998 \\
NQ & 3.916 & 1.000 & 0.443 & 0.975 & 3.916 & 1.000 & 0.443 & 0.998 \\
EQ & 3.916 & 1.000 & 0.443 & 0.975 & 3.916 & 1.000 & 0.443 & 0.998 \\
2W & 3.916 & 1.000 & 0.443 & 0.975 & 3.916 & 1.000 & 0.443 & 0.998 \\
\midrule
 & \multicolumn{4}{c}{Poi-INGARCH$(1,1)$} & \multicolumn{4}{c}{NB-INGARCH$(1,1)$} \\
\midrule
CML & 3.916 & 1.000 & 0.443 & 3.181 & 3.916 & 1.000 & 0.444 & 1.036 \\
\bottomrule
\end{tabular}
\end{table}

\smallskip
Thus, in analogy to Section~\ref{Disease Counts}, we also fit the (fully parametrized) Poi-INGARCH$(1,1)$ model of \citet{ferland06} and NB-INGARCH$(1,1)$ model of \citet{zhu11} as further competitors.
Looking at the adequacy checks in Tables~\ref{tabWPPProp}--\ref{tabWPPDiag}, we generally recognize a very good performance of the CMEMs with only minor deviations between the different estimation methods. Table~\ref{tabWPPProp} shows that the fitted CMEMs' mean and ACF closely agree with the corresponding sample properties. Only for the marginal variance, we observe an exceedance for the fitted models, which is less pronounced in case of the Poi-counting series \eqref{CompCountMEMdef}. The opposite ranking is implied by the MSPR values in Table~\ref{tabWPPDiag}, \ie the conditional variance is better captured by the Bin-counting series \eqref{CompCountMEMdef2}. But in any case, the differences between the different models are rather small. Compared with the CML-fitted Poi- and NB-INGARCH$(1,1)$ model, we note that the Poi-INGARCH$(1,1)$ model fails again in terms of the MSPR, which is much larger than~1. The NB-INGARCH$(1,1)$ model does clearly better, but still shows a stronger deviation of the MSPR from~1 than any of the CMEMs.
Altogether, any of the fitted CMEMs performs reasonably well, where the Poi-counting series \eqref{CompCountMEMdef} is advantageous regarding the marginal variance, and the Bin-counting series \eqref{CompCountMEMdef2} regarding the conditional variance.

\begin{table}[t]
\centering\small
\caption{WPP counts from Section~\ref{Transaction Counts}: predictive performance for CMEMs and some competing models.}
\label{tabWPPPred}

\smallskip
\begin{tabular}{r@{\qquad}cccc@{\qquad}cccc}
\toprule
 & \multicolumn{4}{c}{Poi-counting series \eqref{CompCountMEMdef}} & \multicolumn{4}{c}{Bin-counting series \eqref{CompCountMEMdef2}} \\
 & MAR & MSR & VSR & MSPR & MAR & MSR & VSR & MSPR \\
\midrule
PQ & 3.613 & 1.026 & 0.552 & 1.164 & 3.613 & 1.026 & 0.552 & 1.232 \\
NQ & 3.616 & 1.026 & 0.553 & 1.165 & 3.616 & 1.026 & 0.553 & 1.233 \\
EQ & 3.616 & 1.026 & 0.553 & 1.165 & 3.616 & 1.026 & 0.553 & 1.233 \\
2W & 3.616 & 1.026 & 0.553 & 1.166 & 3.616 & 1.026 & 0.553 & 1.233 \\
\midrule
 & \multicolumn{4}{c}{Poi-INGARCH$(1,1)$} & \multicolumn{4}{c}{NB-INGARCH$(1,1)$} \\
\midrule
CML & 3.610 & 1.025 & 0.551 & 3.400 & 3.618 & 1.027 & 0.554 & 1.236 \\
\bottomrule
\end{tabular}
\end{table}

\bigskip
Finally, like in Section~\ref{Disease Counts}, we do a forecast experiment based on the WPP's last 100 observations that were left out during model fitting, see Table~\ref{tabWPPPred} for the results. The CMEMs have rather similar MAR, MSR, and VSR values, but the MSPR of the CMEMs with Poi-counting series is closer to~1, with the best predictive performance for the PQ-fitted CMEM. Regarding the competitors, the Poi-INGARCH$(1,1)$ model fails again in terms of the MSPR, while the NB-INGARCH$(1,1)$ model performs slightly worse than the preferred CMEM with respect to all four criteria. Hence, the PQ-fitted CMEM \eqref{CompCountMEMdef} with Poi-counting series shows the best predictive performance regarding the last 100 WPP counts.

\section{Conclusions and Future Research}
\label{Conclusions}
In this article, we proposed a novel framework for transferring the well-established MEMs for real-valued time series to the case of count time series. The proposed CMEMs make use of an appropriate multiplicative operator for counts. After discussing such operators and the corresponding CMEMs in general, we focused on two special cases, namely a compounding operator using, \eg a Poi-counting series, and a binomial multiplicative operator. We established the existence of stationary and ergodic solutions to such INGARCH-CMEMs and derived relevant moment properties. Afterwards, we developed two approaches for semi-parametric model estimation, namely QMLE and 2WLSE. The finite-sample performance of these estimation approaches was analyzed in a simulation study, and the applicability of our novel CMEMs was demonstrated with two real-world data examples.

\medskip
There are several directions for future research. First, CMEMs using another multiplicative operator (such as the ZIP-multiplicative operator mentioned in the end of
Section~\ref{Multiplicative Operators for Counts}) or a non-linear mean specification might be analyzed. Second, the semi-parametric INGARCH model sketched in the end of
Section~\ref{Count-MEMs using a Compounding Operator} deserves further investigation, \eg with respect to the question if a semi-parametric maximum likelihood (ML) approach, analogous to \citet{drost09} and \cite{liu21},
is possible for INGARCH models. Such an ML-fitted CMEM would be attractive for applications as it would allow, for example, for density forecasts like in \citet{aknoucheetal22}.
Third, related to the previous research direction, it could also be studied if a combination of semi-parametric ML estimation with the smoothing method discussed
by \citet{faymonville22} could be beneficial for practice.
Finally, $\mathbb{Z}$-valued GARCH models also get attention, see \citet{cui21} and \cite{xu22} among others, so how to generalize the CMEM framework to the $\mathbb{Z}$-valued case is interesting.

\subsubsection*{Acknowledgements}
The authors thank the associate editor and the two referees for their useful and constructive comments on an earlier draft of this article.
Zhu's work is supported by National Natural Science Foundation of China (No.\ 12271206) and Natural Science Foundation of
Jilin Province (No.\ 20210101143JC), and the Science and Technology Research Planning Project of Jilin Provincial
Department of Education (No.\ JJKH20231122KJ).

\newpage


\newcommand{\genstirlingII}[3]{%
	\genfrac{\{}{\}}{0pt}{#1}{#2}{#3}%
}
\newcommand{\stirlingII}[2]{\genstirlingII{}{#1}{#2}}
\newcommand{\dstirlingII}[2]{\genstirlingII{0}{#1}{#2}}

\renewcommand{\contentsname}{S\quad Supplement}




\parindent 0cm

\part*{\centering\vspace{-7ex}Supplementary Material:}
\section*{\centering Conditional-mean Multiplicative Operator Models\\ for Count Time Series}

\maketitle

\appendix
\setcounter{section}{18}
\numberwithin{table}{section}

\section{Detailed Simulation Results}
\label{Supplement}
In what follows, we present tables with detailed simulation results, which correspond to the simulation study presented in Section~5 of the main manuscript. The results rely on 10,000 replications per scenario. To achieve robustness with respect to possible outliers, we used the trimmed versions of sample mean and standard deviation, where the smallest and largest 0.1\,\% of simulated values are omitted during calculation. We observed such outliers only in very few simulation runs, but these would have a strongly distorting influence on ordinary sample mean and standard deviation.

%
%
%


\bigskip
\bigskip

\subsubsection*{Acronyms}
Bin = binomial

NB = negative binomial

Poi = Poisson

\medskip
DGP = data-generating process

CMEM = count multiplicative error model

INGARCH = integer-valued generalized autoregressive conditional heteroscedasticity

\medskip
QMLE = quasi-maximum likelihood estimation

PQ / NQ / EQ = Poi- / NB- / exponential QMLE

2W, 2SWLSE = two-stage weighted least squares estimation

\newpage


%

%

\begin{table}[th!]
\centering
\caption{Different estimation approaches applied to simulated count time series from INGARCH$(1,1)$-CMEM with Poi-counting series: Mean, simulated standard error (SSE), and mean approximated standard error (ASE) for different sample sizes~$n$.}
\label{tabDGP1Poi}

\smallskip
\resizebox{\linewidth}{!}{
\begin{tabular}{lrcrrrrcrrrr}
\toprule
$n$ &  & Param. & PQ & NQ & EQ & 2W & Param. & PQ & NQ & EQ & 2W \\
\midrule
300 & Mean & $a_0=2.8$ & 2.997 & 2.922 & 2.915 & 2.935 & $a_0=2.8$ & 2.965 & 2.936 & 2.932 & 2.942 \\
 & SSE &  & 0.722 & 0.686 & 0.688 & 0.701 &  & 0.681 & 0.668 & 0.671 & 0.666 \\
 & ASE &  & 0.780 & 0.694 & 0.692 & 0.699 &  & 0.720 & 0.686 & 0.687 & 0.687 \\[1ex]
600 & Mean &  & 2.904 & 2.858 & 2.854 & 2.866 &  & 2.889 & 2.875 & 2.874 & 2.877 \\
 & SSE &  & 0.529 & 0.492 & 0.493 & 0.500 &  & 0.488 & 0.474 & 0.476 & 0.473 \\
 & ASE &  & 0.536 & 0.480 & 0.479 & 0.481 &  & 0.498 & 0.477 & 0.478 & 0.477 \\[1ex]
1000 & Mean &  & 2.871 & 2.839 & 2.837 & 2.843 &  & 2.855 & 2.844 & 2.843 & 2.845 \\
 & SSE &  & 0.406 & 0.375 & 0.376 & 0.378 &  & 0.380 & 0.366 & 0.367 & 0.365 \\
 & ASE &  & 0.412 & 0.369 & 0.369 & 0.370 &  & 0.382 & 0.366 & 0.367 & 0.366 \\
\midrule
300 & Mean & $a_1=0.4$ & 0.384 & 0.393 & 0.394 & 0.392 & $a_1=0.4$ & 0.392 & 0.395 & 0.396 & 0.395 \\
 & SSE &  & 0.081 & 0.077 & 0.077 & 0.079 &  & 0.062 & 0.060 & 0.060 & 0.060 \\
 & ASE &  & 0.082 & 0.074 & 0.074 & 0.074 &  & 0.061 & 0.059 & 0.059 & 0.059 \\[1ex]
600 & Mean &  & 0.392 & 0.397 & 0.397 & 0.397 &  & 0.397 & 0.398 & 0.399 & 0.398 \\
 & SSE &  & 0.057 & 0.053 & 0.053 & 0.053 &  & 0.044 & 0.042 & 0.043 & 0.042 \\
 & ASE &  & 0.059 & 0.052 & 0.052 & 0.052 &  & 0.043 & 0.042 & 0.042 & 0.042 \\[1ex]
1000 & Mean &  & 0.395 & 0.399 & 0.399 & 0.398 &  & 0.398 & 0.399 & 0.399 & 0.399 \\
 & SSE &  & 0.045 & 0.041 & 0.041 & 0.041 &  & 0.034 & 0.033 & 0.033 & 0.033 \\
 & ASE &  & 0.046 & 0.041 & 0.041 & 0.041 &  & 0.034 & 0.032 & 0.032 & 0.032 \\
\midrule
300 & Mean & $b_1=0.2$ & 0.180 & 0.184 & 0.185 & 0.183 & $b_1=0.2$ & 0.182 & 0.183 & 0.184 & 0.183 \\
 & SSE &  & 0.121 & 0.116 & 0.117 & 0.120 &  & 0.112 & 0.110 & 0.111 & 0.110 \\
 & ASE &  & 0.128 & 0.116 & 0.115 & 0.116 &  & 0.119 & 0.114 & 0.114 & 0.114 \\[1ex]
600 & Mean &  & 0.189 & 0.192 & 0.193 & 0.191 &  & 0.190 & 0.190 & 0.190 & 0.190 \\
 & SSE &  & 0.089 & 0.083 & 0.084 & 0.085 &  & 0.080 & 0.079 & 0.079 & 0.079 \\
 & ASE &  & 0.088 & 0.080 & 0.080 & 0.081 &  & 0.082 & 0.079 & 0.080 & 0.079 \\[1ex]
1000 & Mean &  & 0.192 & 0.194 & 0.195 & 0.194 &  & 0.194 & 0.194 & 0.194 & 0.194 \\
 & SSE &  & 0.068 & 0.064 & 0.064 & 0.064 &  & 0.063 & 0.061 & 0.062 & 0.061 \\
 & ASE &  & 0.068 & 0.062 & 0.062 & 0.062 &  & 0.063 & 0.061 & 0.061 & 0.061 \\
\midrule
300 & Mean & $\sigma^2=1$ & 1.001 & 0.998 & 0.998 & 1.002 & $\sigma^2=0.4$ & 0.398 & 0.397 & 0.397 & 0.397 \\
 & SSE &  & 0.114 & 0.113 & 0.113 & 0.130 &  & 0.050 & 0.050 & 0.050 & 0.050 \\
 & ASE &  & 0.128 & 0.127 & 0.127 & 0.128 &  & 0.046 & 0.046 & 0.046 & 0.046 \\[1ex]
600 & Mean &  & 1.002 & 0.999 & 0.999 & 1.000 &  & 0.400 & 0.399 & 0.399 & 0.399 \\
 & SSE &  & 0.080 & 0.080 & 0.080 & 0.081 &  & 0.035 & 0.035 & 0.035 & 0.035 \\
 & ASE &  & 0.092 & 0.091 & 0.091 & 0.091 &  & 0.033 & 0.033 & 0.033 & 0.033 \\[1ex]
1000 & Mean &  & 1.000 & 0.999 & 0.999 & 0.999 &  & 0.399 & 0.399 & 0.399 & 0.399 \\
 & SSE &  & 0.061 & 0.061 & 0.061 & 0.062 &  & 0.027 & 0.027 & 0.027 & 0.027 \\
 & ASE &  & 0.071 & 0.071 & 0.071 & 0.071 &  & 0.025 & 0.025 & 0.025 & 0.025 \\
\bottomrule
\end{tabular}}
\end{table}

\newpage

\begin{table}[th!]
\centering
\caption{Different estimation approaches applied to simulated count time series from INGARCH$(1,1)$-CMEM with Bin-counting series: Mean, simulated standard error (SSE), and mean approximated standard error (ASE) for different sample sizes~$n$.}
\label{tabDGP1Bin}

\smallskip
\resizebox{\linewidth}{!}{
\begin{tabular}{lrcrrrrcrrrr}
\toprule
$n$ &  & Param. & PQ & NQ & EQ & 2W & Param. & PQ & NQ & EQ & 2W \\
\midrule
300 & Mean & $a_0=2.8$ & 2.995 & 2.926 & 2.920 & 2.938 & $a_0=2.8$ & 2.967 & 2.942 & 2.939 & 2.942 \\
 & SSE &  & 0.713 & 0.667 & 0.667 & 0.676 &  & 0.655 & 0.629 & 0.629 & 0.629 \\
 & ASE &  & 0.768 & 0.673 & 0.670 & 0.673 &  & 0.693 & 0.650 & 0.648 & 0.649 \\[1ex]
600 & Mean &  & 2.913 & 2.869 & 2.865 & 2.875 &  & 2.892 & 2.879 & 2.878 & 2.878 \\
 & SSE &  & 0.520 & 0.477 & 0.477 & 0.483 &  & 0.479 & 0.454 & 0.453 & 0.453 \\
 & ASE &  & 0.531 & 0.467 & 0.466 & 0.466 &  & 0.480 & 0.451 & 0.451 & 0.451 \\[1ex]
1000 & Mean &  & 2.875 & 2.844 & 2.842 & 2.847 &  & 2.865 & 2.856 & 2.855 & 2.856 \\
 & SSE &  & 0.406 & 0.369 & 0.368 & 0.371 &  & 0.366 & 0.347 & 0.347 & 0.346 \\
 & ASE &  & 0.407 & 0.359 & 0.358 & 0.358 &  & 0.369 & 0.347 & 0.347 & 0.347 \\
\midrule
300 & Mean & $a_1=0.4$ & 0.384 & 0.393 & 0.394 & 0.393 & $a_1=0.4$ & 0.392 & 0.394 & 0.395 & 0.395 \\
 & SSE &  & 0.077 & 0.072 & 0.072 & 0.072 &  & 0.056 & 0.054 & 0.054 & 0.054 \\
 & ASE &  & 0.079 & 0.070 & 0.070 & 0.070 &  & 0.056 & 0.053 & 0.053 & 0.053 \\[1ex]
600 & Mean &  & 0.391 & 0.396 & 0.396 & 0.396 &  & 0.396 & 0.398 & 0.398 & 0.398 \\
 & SSE &  & 0.056 & 0.050 & 0.050 & 0.050 &  & 0.040 & 0.038 & 0.038 & 0.038 \\
 & ASE &  & 0.057 & 0.050 & 0.050 & 0.050 &  & 0.040 & 0.037 & 0.037 & 0.037 \\[1ex]
1000 & Mean &  & 0.395 & 0.398 & 0.398 & 0.398 &  & 0.398 & 0.399 & 0.399 & 0.399 \\
 & SSE &  & 0.043 & 0.038 & 0.038 & 0.038 &  & 0.030 & 0.029 & 0.029 & 0.029 \\
 & ASE &  & 0.044 & 0.039 & 0.039 & 0.039 &  & 0.031 & 0.029 & 0.029 & 0.029 \\
\midrule
300 & Mean & $b_1=0.2$ & 0.181 & 0.185 & 0.185 & 0.183 & $b_1=0.2$ & 0.181 & 0.183 & 0.183 & 0.182 \\
 & SSE &  & 0.118 & 0.112 & 0.112 & 0.114 &  & 0.106 & 0.103 & 0.103 & 0.103 \\
 & ASE &  & 0.124 & 0.112 & 0.111 & 0.111 &  & 0.112 & 0.106 & 0.106 & 0.106 \\[1ex]
600 & Mean &  & 0.189 & 0.192 & 0.192 & 0.191 &  & 0.189 & 0.190 & 0.190 & 0.190 \\
 & SSE &  & 0.086 & 0.080 & 0.080 & 0.081 &  & 0.078 & 0.075 & 0.075 & 0.075 \\
 & ASE &  & 0.086 & 0.078 & 0.078 & 0.078 &  & 0.078 & 0.074 & 0.074 & 0.074 \\[1ex]
1000 & Mean &  & 0.192 & 0.194 & 0.194 & 0.193 &  & 0.192 & 0.193 & 0.193 & 0.193 \\
 & SSE &  & 0.067 & 0.062 & 0.062 & 0.062 &  & 0.059 & 0.057 & 0.057 & 0.057 \\
 & ASE &  & 0.066 & 0.060 & 0.060 & 0.060 &  & 0.060 & 0.057 & 0.057 & 0.057 \\
\midrule
300 & Mean & $\sigma^2=1$ & 1.004 & 1.001 & 1.001 & 1.002 & $\sigma^2=0.4$ & 0.402 & 0.401 & 0.401 & 0.401 \\
 & SSE &  & 0.101 & 0.100 & 0.100 & 0.101 &  & 0.042 & 0.041 & 0.041 & 0.041 \\
 & ASE &  & 0.100 & 0.099 & 0.099 & 0.100 &  & 0.029 & 0.029 & 0.029 & 0.029 \\[1ex]
600 & Mean &  & 1.003 & 1.001 & 1.001 & 1.002 &  & 0.401 & 0.400 & 0.400 & 0.400 \\
 & SSE &  & 0.071 & 0.071 & 0.071 & 0.071 &  & 0.029 & 0.029 & 0.029 & 0.029 \\
 & ASE &  & 0.071 & 0.071 & 0.071 & 0.071 &  & 0.020 & 0.020 & 0.020 & 0.020 \\[1ex]
1000 & Mean &  & 1.003 & 1.001 & 1.001 & 1.001 &  & 0.400 & 0.400 & 0.400 & 0.400 \\
 & SSE &  & 0.054 & 0.053 & 0.053 & 0.054 &  & 0.022 & 0.022 & 0.022 & 0.022 \\
 & ASE &  & 0.055 & 0.055 & 0.055 & 0.055 &  & 0.016 & 0.016 & 0.016 & 0.016 \\
\bottomrule
\end{tabular}}
\end{table}

\newpage

\begin{table}[th!]
\centering
\caption{Different estimation approaches applied to simulated count time series from INGARCH$(1,1)$-CMEM with Poi-counting series: Mean, simulated standard error (SSE), and mean approximated standard error (ASE) for different sample sizes~$n$.}
\label{tabDGP2Poi}

\smallskip
\resizebox{\linewidth}{!}{
\begin{tabular}{lrcrrrrcrrrr}
\toprule
$n$ &  & Param. & PQ & NQ & EQ & 2W & Param. & PQ & NQ & EQ & 2W \\
\midrule
300 & Mean & $a_0=3$ & 3.723 & 3.551 & 3.553 & 3.579 & $a_0=3$ & 3.981 & 4.092 & 4.103 & 3.997 \\
 & SSE &  & 1.398 & 1.243 & 1.247 & 1.356 &  & 1.217 & 1.222 & 1.226 & 1.199 \\
 & ASE &  & 1.304 & 1.075 & 1.075 & 1.083 &  & 1.132 & 1.135 & 1.139 & 1.112 \\[1ex]
600 & Mean &  & 3.363 & 3.256 & 3.258 & 3.265 &  & 3.500 & 3.579 & 3.587 & 3.529 \\
 & SSE &  & 0.876 & 0.737 & 0.738 & 0.786 &  & 0.778 & 0.783 & 0.787 & 0.767 \\
 & ASE &  & 0.877 & 0.711 & 0.712 & 0.713 &  & 0.749 & 0.747 & 0.749 & 0.736 \\[1ex]
1000 & Mean &  & 3.238 & 3.166 & 3.168 & 3.171 &  & 3.290 & 3.354 & 3.361 & 3.318 \\
 & SSE &  & 0.647 & 0.553 & 0.555 & 0.567 &  & 0.574 & 0.580 & 0.582 & 0.566 \\
 & ASE &  & 0.671 & 0.539 & 0.539 & 0.539 &  & 0.563 & 0.560 & 0.562 & 0.553 \\
\midrule
300 & Mean & $a_1=0.35$ & 0.336 & 0.349 & 0.349 & 0.347 & $a_1=0.35$ & 0.361 & 0.365 & 0.366 & 0.363 \\
 & SSE &  & 0.073 & 0.066 & 0.066 & 0.068 &  & 0.050 & 0.050 & 0.050 & 0.050 \\
 & ASE &  & 0.079 & 0.066 & 0.066 & 0.066 &  & 0.051 & 0.050 & 0.051 & 0.050 \\[1ex]
600 & Mean &  & 0.342 & 0.349 & 0.350 & 0.349 &  & 0.358 & 0.361 & 0.361 & 0.360 \\
 & SSE &  & 0.054 & 0.046 & 0.046 & 0.047 &  & 0.035 & 0.034 & 0.034 & 0.034 \\
 & ASE &  & 0.058 & 0.046 & 0.046 & 0.046 &  & 0.036 & 0.036 & 0.036 & 0.035 \\[1ex]
1000 & Mean &  & 0.345 & 0.350 & 0.351 & 0.350 &  & 0.354 & 0.357 & 0.357 & 0.356 \\
 & SSE &  & 0.042 & 0.036 & 0.036 & 0.036 &  & 0.028 & 0.027 & 0.027 & 0.027 \\
 & ASE &  & 0.045 & 0.036 & 0.036 & 0.036 &  & 0.028 & 0.028 & 0.028 & 0.027 \\
\midrule
300 & Mean & $b_1=0.5$ & 0.468 & 0.468 & 0.467 & 0.468 & $b_1=0.5$ & 0.439 & 0.430 & 0.429 & 0.437 \\
 & SSE &  & 0.112 & 0.097 & 0.097 & 0.105 &  & 0.081 & 0.079 & 0.079 & 0.080 \\
 & ASE &  & 0.106 & 0.090 & 0.090 & 0.090 &  & 0.083 & 0.083 & 0.083 & 0.082 \\[1ex]
600 & Mean &  & 0.485 & 0.485 & 0.485 & 0.485 &  & 0.467 & 0.460 & 0.460 & 0.464 \\
 & SSE &  & 0.074 & 0.061 & 0.061 & 0.065 &  & 0.056 & 0.055 & 0.055 & 0.055 \\
 & ASE &  & 0.073 & 0.061 & 0.061 & 0.061 &  & 0.058 & 0.057 & 0.057 & 0.057 \\[1ex]
1000 & Mean &  & 0.490 & 0.490 & 0.490 & 0.490 &  & 0.481 & 0.475 & 0.475 & 0.478 \\
 & SSE &  & 0.055 & 0.046 & 0.046 & 0.048 &  & 0.043 & 0.043 & 0.043 & 0.042 \\
 & ASE &  & 0.056 & 0.047 & 0.047 & 0.047 &  & 0.044 & 0.044 & 0.044 & 0.043 \\
\midrule
300 & Mean & $\sigma^2=1$ & 1.018 & 1.008 & 1.008 & 1.019 & $\sigma^2=0.1$ & 0.101 & 0.100 & 0.100 & 0.101 \\
 & SSE &  & 0.121 & 0.113 & 0.112 & 0.159 &  & 0.020 & 0.020 & 0.020 & 0.020 \\
 & ASE &  & 0.118 & 0.114 & 0.114 & 0.120 &  & 0.020 & 0.020 & 0.020 & 0.020 \\[1ex]
600 & Mean &  & 1.011 & 1.005 & 1.005 & 1.009 &  & 0.101 & 0.100 & 0.100 & 0.100 \\
 & SSE &  & 0.082 & 0.078 & 0.078 & 0.094 &  & 0.014 & 0.014 & 0.014 & 0.014 \\
 & ASE &  & 0.083 & 0.081 & 0.081 & 0.083 &  & 0.014 & 0.014 & 0.014 & 0.014 \\[1ex]
1000 & Mean &  & 1.007 & 1.003 & 1.003 & 1.004 &  & 0.100 & 0.100 & 0.100 & 0.100 \\
 & SSE &  & 0.063 & 0.060 & 0.060 & 0.065 &  & 0.011 & 0.011 & 0.011 & 0.011 \\
 & ASE &  & 0.064 & 0.063 & 0.063 & 0.064 &  & 0.011 & 0.011 & 0.011 & 0.011 \\
\bottomrule
\end{tabular}}
\end{table}

\newpage

\begin{table}[th!]
\centering
\caption{Different estimation approaches applied to simulated count time series from INGARCH$(1,1)$-CMEM with Bin-counting series: Mean, simulated standard error (SSE), and mean approximated standard error (ASE) for different sample sizes~$n$.}
\label{tabDGP2Bin}

\smallskip
\resizebox{\linewidth}{!}{
\begin{tabular}{lrcrrrrcrrrr}
\toprule
$n$ &  & Param. & PQ & NQ & EQ & 2W & Param. & PQ & NQ & EQ & 2W \\
\midrule
300 & Mean & $a_0=3$ & 3.750 & 3.574 & 3.575 & 3.601 & $a_0=3$ & 4.131 & 4.256 & 4.267 & 4.169 \\
 & SSE &  & 1.412 & 1.234 & 1.235 & 1.369 &  & 1.217 & 1.212 & 1.213 & 1.181 \\
 & ASE &  & 1.310 & 1.070 & 1.069 & 1.077 &  & 1.100 & 1.080 & 1.082 & 1.064 \\[1ex]
600 & Mean &  & 3.376 & 3.272 & 3.275 & 3.279 &  & 3.587 & 3.687 & 3.696 & 3.654 \\
 & SSE &  & 0.887 & 0.748 & 0.749 & 0.789 &  & 0.767 & 0.768 & 0.770 & 0.746 \\
 & ASE &  & 0.880 & 0.705 & 0.704 & 0.706 &  & 0.733 & 0.715 & 0.716 & 0.710 \\[1ex]
1000 & Mean &  & 3.234 & 3.165 & 3.167 & 3.165 &  & 3.359 & 3.440 & 3.448 & 3.427 \\
 & SSE &  & 0.654 & 0.542 & 0.543 & 0.553 &  & 0.562 & 0.571 & 0.573 & 0.549 \\
 & ASE &  & 0.671 & 0.532 & 0.532 & 0.532 &  & 0.554 & 0.538 & 0.539 & 0.536 \\
\midrule
300 & Mean & $a_1=0.35$ & 0.335 & 0.348 & 0.349 & 0.346 & $a_1=0.35$ & 0.366 & 0.369 & 0.369 & 0.367 \\
 & SSE &  & 0.072 & 0.065 & 0.065 & 0.067 &  & 0.047 & 0.045 & 0.045 & 0.046 \\
 & ASE &  & 0.079 & 0.065 & 0.065 & 0.065 &  & 0.048 & 0.045 & 0.045 & 0.045 \\[1ex]
600 & Mean &  & 0.342 & 0.350 & 0.350 & 0.349 &  & 0.361 & 0.363 & 0.363 & 0.362 \\
 & SSE &  & 0.054 & 0.045 & 0.045 & 0.046 &  & 0.033 & 0.031 & 0.031 & 0.031 \\
 & ASE &  & 0.057 & 0.046 & 0.046 & 0.046 &  & 0.034 & 0.032 & 0.032 & 0.032 \\[1ex]
1000 & Mean &  & 0.345 & 0.350 & 0.350 & 0.350 &  & 0.357 & 0.359 & 0.359 & 0.359 \\
 & SSE &  & 0.042 & 0.035 & 0.035 & 0.035 &  & 0.026 & 0.024 & 0.024 & 0.024 \\
 & ASE &  & 0.045 & 0.035 & 0.035 & 0.035 &  & 0.026 & 0.025 & 0.025 & 0.025 \\
\midrule
300 & Mean & $b_1=0.5$ & 0.467 & 0.467 & 0.466 & 0.467 & $b_1=0.5$ & 0.427 & 0.419 & 0.418 & 0.425 \\
 & SSE &  & 0.110 & 0.095 & 0.095 & 0.105 &  & 0.077 & 0.074 & 0.074 & 0.075 \\
 & ASE &  & 0.106 & 0.089 & 0.089 & 0.089 &  & 0.078 & 0.076 & 0.076 & 0.075 \\[1ex]
600 & Mean &  & 0.484 & 0.484 & 0.484 & 0.484 &  & 0.460 & 0.453 & 0.452 & 0.455 \\
 & SSE &  & 0.073 & 0.061 & 0.061 & 0.064 &  & 0.053 & 0.051 & 0.051 & 0.050 \\
 & ASE &  & 0.073 & 0.060 & 0.060 & 0.060 &  & 0.054 & 0.053 & 0.053 & 0.052 \\[1ex]
1000 & Mean &  & 0.490 & 0.490 & 0.490 & 0.490 &  & 0.475 & 0.469 & 0.469 & 0.470 \\
 & SSE &  & 0.055 & 0.046 & 0.046 & 0.047 &  & 0.041 & 0.040 & 0.040 & 0.039 \\
 & ASE &  & 0.056 & 0.046 & 0.046 & 0.046 &  & 0.042 & 0.040 & 0.040 & 0.040 \\
\midrule
300 & Mean & $\sigma^2=1$ & 1.020 & 1.011 & 1.010 & 1.022 & $\sigma^2=0.1$ & 0.101 & 0.101 & 0.101 & 0.101 \\
 & SSE &  & 0.112 & 0.105 & 0.105 & 0.158 &  & 0.018 & 0.018 & 0.018 & 0.018 \\
 & ASE &  & 0.107 & 0.103 & 0.103 & 0.109 &  & 0.017 & 0.017 & 0.017 & 0.017 \\[1ex]
600 & Mean &  & 1.013 & 1.007 & 1.007 & 1.010 &  & 0.101 & 0.101 & 0.101 & 0.101 \\
 & SSE &  & 0.081 & 0.075 & 0.075 & 0.087 &  & 0.013 & 0.013 & 0.013 & 0.013 \\
 & ASE &  & 0.076 & 0.073 & 0.073 & 0.075 &  & 0.012 & 0.012 & 0.012 & 0.012 \\[1ex]
1000 & Mean &  & 1.007 & 1.004 & 1.003 & 1.005 &  & 0.100 & 0.100 & 0.100 & 0.100 \\
 & SSE &  & 0.058 & 0.057 & 0.056 & 0.060 &  & 0.010 & 0.010 & 0.010 & 0.010 \\
 & ASE &  & 0.057 & 0.056 & 0.056 & 0.057 &  & 0.010 & 0.010 & 0.009 & 0.010 \\
\bottomrule
\end{tabular}}
\end{table}

\newpage

\begin{table}[th!]
\centering
\caption{Different estimation approaches applied to simulated count time series from INGARCH$(1,1)$-CMEM with Poi-counting series: Mean, simulated standard error (SSE), and mean approximated standard error (ASE) for different sample sizes~$n$.}
\label{tabDGP3Poi}

\smallskip
\resizebox{\linewidth}{!}{
\begin{tabular}{lrcrrrrcrrrr}
\toprule
$n$ &  & Param. & PQ & NQ & EQ & 2W & Param. & PQ & NQ & EQ & 2W \\
\midrule
300 & Mean & $a_0=1$ & 1.452 & 1.435 & 1.446 & 1.474 & $a_0=1$ & 1.474 & 1.527 & 1.544 & 1.481 \\
 & SSE &  & 0.738 & 0.695 & 0.699 & 0.857 &  & 0.640 & 0.632 & 0.638 & 0.660 \\
 & ASE &  & 0.612 & 0.541 & 0.544 & 0.556 &  & 0.549 & 0.535 & 0.541 & 0.523 \\[1ex]
600 & Mean &  & 1.208 & 1.206 & 1.217 & 1.211 &  & 1.225 & 1.268 & 1.283 & 1.238 \\
 & SSE &  & 0.388 & 0.361 & 0.367 & 0.416 &  & 0.355 & 0.356 & 0.364 & 0.350 \\
 & ASE &  & 0.382 & 0.333 & 0.335 & 0.334 &  & 0.342 & 0.330 & 0.334 & 0.325 \\[1ex]
1000 & Mean &  & 1.127 & 1.126 & 1.135 & 1.123 &  & 1.130 & 1.163 & 1.175 & 1.145 \\
 & SSE &  & 0.279 & 0.257 & 0.263 & 0.264 &  & 0.256 & 0.255 & 0.261 & 0.249 \\
 & ASE &  & 0.283 & 0.244 & 0.246 & 0.244 &  & 0.252 & 0.241 & 0.244 & 0.239 \\
\midrule
300 & Mean & $a_1=0.25$ & 0.246 & 0.257 & 0.258 & 0.255 & $a_1=0.25$ & 0.256 & 0.263 & 0.264 & 0.261 \\
 & SSE &  & 0.061 & 0.058 & 0.058 & 0.061 &  & 0.051 & 0.049 & 0.049 & 0.050 \\
 & ASE &  & 0.067 & 0.059 & 0.059 & 0.059 &  & 0.053 & 0.051 & 0.051 & 0.050 \\[1ex]
600 & Mean &  & 0.248 & 0.255 & 0.256 & 0.255 &  & 0.254 & 0.260 & 0.261 & 0.258 \\
 & SSE &  & 0.044 & 0.040 & 0.040 & 0.041 &  & 0.036 & 0.034 & 0.034 & 0.034 \\
 & ASE &  & 0.048 & 0.041 & 0.041 & 0.041 &  & 0.037 & 0.035 & 0.036 & 0.035 \\[1ex]
1000 & Mean &  & 0.248 & 0.253 & 0.254 & 0.253 &  & 0.252 & 0.256 & 0.257 & 0.255 \\
 & SSE &  & 0.035 & 0.031 & 0.031 & 0.031 &  & 0.027 & 0.026 & 0.026 & 0.026 \\
 & ASE &  & 0.037 & 0.032 & 0.032 & 0.032 &  & 0.029 & 0.027 & 0.027 & 0.027 \\
\midrule
300 & Mean & $b_1=0.65$ & 0.600 & 0.593 & 0.591 & 0.590 & $b_1=0.65$ & 0.594 & 0.582 & 0.580 & 0.589 \\
 & SSE &  & 0.114 & 0.104 & 0.105 & 0.123 &  & 0.090 & 0.085 & 0.086 & 0.091 \\
 & ASE &  & 0.102 & 0.091 & 0.092 & 0.093 &  & 0.087 & 0.084 & 0.085 & 0.083 \\[1ex]
600 & Mean &  & 0.628 & 0.622 & 0.620 & 0.622 &  & 0.622 & 0.613 & 0.611 & 0.618 \\
 & SSE &  & 0.066 & 0.059 & 0.059 & 0.067 &  & 0.055 & 0.053 & 0.054 & 0.053 \\
 & ASE &  & 0.067 & 0.059 & 0.059 & 0.059 &  & 0.057 & 0.055 & 0.055 & 0.054 \\[1ex]
1000 & Mean &  & 0.637 & 0.633 & 0.631 & 0.633 &  & 0.634 & 0.627 & 0.625 & 0.630 \\
 & SSE &  & 0.049 & 0.044 & 0.044 & 0.045 &  & 0.042 & 0.040 & 0.041 & 0.040 \\
 & ASE &  & 0.051 & 0.044 & 0.044 & 0.044 &  & 0.043 & 0.041 & 0.042 & 0.041 \\
\midrule
300 & Mean & $\sigma^2=1$ & 1.039 & 1.020 & 1.018 & 1.051 & $\sigma^2=0.4$ & 0.414 & 0.408 & 0.407 & 0.415 \\
 & SSE &  & 0.155 & 0.125 & 0.123 & 0.258 &  & 0.057 & 0.050 & 0.050 & 0.072 \\
 & ASE &  & 0.143 & 0.131 & 0.130 & 0.153 &  & 0.048 & 0.044 & 0.044 & 0.050 \\[1ex]
600 & Mean &  & 1.026 & 1.015 & 1.013 & 1.025 &  & 0.410 & 0.406 & 0.405 & 0.409 \\
 & SSE &  & 0.111 & 0.090 & 0.088 & 0.147 &  & 0.039 & 0.035 & 0.035 & 0.041 \\
 & ASE &  & 0.100 & 0.094 & 0.093 & 0.101 &  & 0.034 & 0.032 & 0.031 & 0.034 \\[1ex]
1000 & Mean &  & 1.017 & 1.011 & 1.010 & 1.014 &  & 0.407 & 0.405 & 0.404 & 0.406 \\
 & SSE &  & 0.081 & 0.070 & 0.068 & 0.090 &  & 0.029 & 0.027 & 0.027 & 0.029 \\
 & ASE &  & 0.076 & 0.072 & 0.072 & 0.075 &  & 0.026 & 0.024 & 0.024 & 0.025 \\
\bottomrule
\end{tabular}}
\end{table}

\newpage

\begin{table}[th!]
\centering
\caption{Different estimation approaches applied to simulated count time series from INGARCH$(1,1)$-CMEM with Bin-counting series: Mean, simulated standard error (SSE), and mean approximated standard error (ASE) for different sample sizes~$n$.}
\label{tabDGP3Bin}

\smallskip
\resizebox{\linewidth}{!}{
\begin{tabular}{lrcrrrrcrrrr}
\toprule
$n$ &  & Param. & PQ & NQ & EQ & 2W & Param. & PQ & NQ & EQ & 2W \\
\midrule
300 & Mean & $a_0=1$ & 1.477 & 1.460 & 1.470 & 1.496 & $a_0=1$ & 1.517 & 1.581 & 1.600 & 1.531 \\
 & SSE &  & 0.771 & 0.712 & 0.712 & 0.884 &  & 0.642 & 0.637 & 0.644 & 0.674 \\
 & ASE &  & 0.619 & 0.540 & 0.541 & 0.554 &  & 0.546 & 0.524 & 0.528 & 0.512 \\[1ex]
600 & Mean &  & 1.215 & 1.214 & 1.226 & 1.230 &  & 1.247 & 1.300 & 1.316 & 1.274 \\
 & SSE &  & 0.396 & 0.361 & 0.367 & 0.447 &  & 0.362 & 0.365 & 0.373 & 0.369 \\
 & ASE &  & 0.383 & 0.328 & 0.330 & 0.331 &  & 0.340 & 0.323 & 0.326 & 0.319 \\[1ex]
1000 & Mean &  & 1.133 & 1.133 & 1.142 & 1.135 &  & 1.142 & 1.180 & 1.193 & 1.169 \\
 & SSE &  & 0.283 & 0.255 & 0.260 & 0.274 &  & 0.255 & 0.253 & 0.259 & 0.253 \\
 & ASE &  & 0.285 & 0.241 & 0.242 & 0.242 &  & 0.250 & 0.235 & 0.236 & 0.233 \\
\midrule
300 & Mean & $a_1=0.25$ & 0.246 & 0.257 & 0.258 & 0.255 & $a_1=0.25$ & 0.257 & 0.264 & 0.265 & 0.262 \\
 & SSE &  & 0.060 & 0.055 & 0.056 & 0.059 &  & 0.049 & 0.047 & 0.047 & 0.048 \\
 & ASE &  & 0.066 & 0.058 & 0.058 & 0.058 &  & 0.051 & 0.048 & 0.048 & 0.048 \\[1ex]
600 & Mean &  & 0.248 & 0.255 & 0.256 & 0.255 &  & 0.255 & 0.260 & 0.261 & 0.259 \\
 & SSE &  & 0.044 & 0.039 & 0.039 & 0.040 &  & 0.034 & 0.031 & 0.031 & 0.032 \\
 & ASE &  & 0.047 & 0.040 & 0.040 & 0.040 &  & 0.036 & 0.033 & 0.034 & 0.033 \\[1ex]
1000 & Mean &  & 0.248 & 0.253 & 0.254 & 0.254 &  & 0.254 & 0.258 & 0.258 & 0.257 \\
 & SSE &  & 0.035 & 0.030 & 0.030 & 0.031 &  & 0.027 & 0.025 & 0.025 & 0.025 \\
 & ASE &  & 0.037 & 0.031 & 0.031 & 0.031 &  & 0.028 & 0.026 & 0.026 & 0.026 \\
\midrule
300 & Mean & $b_1=0.65$ & 0.598 & 0.591 & 0.589 & 0.588 & $b_1=0.65$ & 0.589 & 0.576 & 0.574 & 0.583 \\
 & SSE &  & 0.114 & 0.102 & 0.102 & 0.123 &  & 0.087 & 0.083 & 0.083 & 0.090 \\
 & ASE &  & 0.102 & 0.090 & 0.090 & 0.092 &  & 0.084 & 0.081 & 0.081 & 0.079 \\[1ex]
600 & Mean &  & 0.628 & 0.622 & 0.620 & 0.619 &  & 0.619 & 0.609 & 0.607 & 0.613 \\
 & SSE &  & 0.066 & 0.058 & 0.059 & 0.069 &  & 0.054 & 0.052 & 0.052 & 0.054 \\
 & ASE &  & 0.066 & 0.058 & 0.058 & 0.058 &  & 0.055 & 0.053 & 0.053 & 0.052 \\[1ex]
1000 & Mean &  & 0.636 & 0.632 & 0.631 & 0.632 &  & 0.632 & 0.624 & 0.623 & 0.626 \\
 & SSE &  & 0.049 & 0.043 & 0.043 & 0.046 &  & 0.040 & 0.039 & 0.039 & 0.039 \\
 & ASE &  & 0.050 & 0.043 & 0.043 & 0.043 &  & 0.041 & 0.039 & 0.040 & 0.039 \\
\midrule
300 & Mean & $\sigma^2=1$ & 1.040 & 1.022 & 1.020 & 1.067 & $\sigma^2=0.4$ & 0.413 & 0.409 & 0.408 & 0.415 \\
 & SSE &  & 0.144 & 0.115 & 0.113 & 0.525 &  & 0.047 & 0.044 & 0.043 & 0.059 \\
 & ASE &  & 0.121 & 0.110 & 0.108 & 0.137 &  & 0.035 & 0.032 & 0.032 & 0.036 \\[1ex]
600 & Mean &  & 1.024 & 1.014 & 1.012 & 1.025 &  & 0.409 & 0.406 & 0.406 & 0.409 \\
 & SSE &  & 0.096 & 0.080 & 0.078 & 0.136 &  & 0.034 & 0.031 & 0.030 & 0.037 \\
 & ASE &  & 0.083 & 0.077 & 0.076 & 0.083 &  & 0.024 & 0.023 & 0.022 & 0.024 \\[1ex]
1000 & Mean &  & 1.016 & 1.010 & 1.009 & 1.014 &  & 0.406 & 0.404 & 0.404 & 0.405 \\
 & SSE &  & 0.072 & 0.061 & 0.060 & 0.083 &  & 0.026 & 0.024 & 0.023 & 0.026 \\
 & ASE &  & 0.063 & 0.060 & 0.059 & 0.062 &  & 0.018 & 0.017 & 0.017 & 0.018 \\
\bottomrule
\end{tabular}}
\end{table}

\newpage

\begin{table}[th!]
\centering
\caption{Different estimation approaches applied to simulated count time series from INGARCH$(1,1)$-CMEM with Poi-counting series, like in Table~\ref{tabDGP1Poi}, but by assuming a Bin-counting series (model misspecification): Mean, simulated standard error (SSE), and mean approximated standard error (ASE) for different sample sizes~$n$.}
\label{tabDGP1Poi_misp}

\smallskip
\resizebox{\linewidth}{!}{
\begin{tabular}{lrcrrrrcrrrr}
\toprule
$n$ &  & Param. & PQ & NQ & EQ & 2W & Param. & PQ & NQ & EQ & 2W \\
\midrule
300 & Mean & $a_0=2.8$ & 2.997 & 2.922 & 2.915 & 2.936 & $a_0=2.8$ & 2.965 & 2.936 & 2.932 & 2.937 \\
 & SSE &  & 0.722 & 0.686 & 0.688 & 0.696 &  & 0.681 & 0.668 & 0.671 & 0.669 \\
 & ASE &  & 0.790 & 0.688 & 0.685 & 0.689 &  & 0.727 & 0.673 & 0.671 & 0.672 \\[1ex]
600 & Mean &  & 2.904 & 2.858 & 2.854 & 2.865 &  & 2.889 & 2.875 & 2.874 & 2.875 \\
 & SSE &  & 0.529 & 0.492 & 0.493 & 0.498 &  & 0.488 & 0.474 & 0.476 & 0.475 \\
 & ASE &  & 0.544 & 0.477 & 0.475 & 0.476 &  & 0.503 & 0.468 & 0.467 & 0.467 \\[1ex]
1000 & Mean &  & 2.871 & 2.839 & 2.837 & 2.841 &  & 2.855 & 2.844 & 2.843 & 2.843 \\
 & SSE &  & 0.406 & 0.375 & 0.376 & 0.379 &  & 0.380 & 0.366 & 0.367 & 0.366 \\
 & ASE &  & 0.419 & 0.367 & 0.366 & 0.367 &  & 0.387 & 0.360 & 0.359 & 0.359 \\
\midrule
300 & Mean & $a_1=0.4$ & 0.384 & 0.393 & 0.394 & 0.393 & $a_1=0.4$ & 0.392 & 0.395 & 0.396 & 0.396 \\
 & SSE &  & 0.081 & 0.077 & 0.077 & 0.077 &  & 0.062 & 0.060 & 0.060 & 0.060 \\
 & ASE &  & 0.085 & 0.075 & 0.075 & 0.075 &  & 0.064 & 0.059 & 0.059 & 0.059 \\[1ex]
600 & Mean &  & 0.392 & 0.397 & 0.397 & 0.397 &  & 0.397 & 0.398 & 0.399 & 0.398 \\
 & SSE &  & 0.057 & 0.053 & 0.053 & 0.053 &  & 0.044 & 0.042 & 0.043 & 0.042 \\
 & ASE &  & 0.061 & 0.053 & 0.053 & 0.053 &  & 0.045 & 0.042 & 0.042 & 0.042 \\[1ex]
1000 & Mean &  & 0.395 & 0.399 & 0.399 & 0.399 &  & 0.398 & 0.399 & 0.399 & 0.399 \\
 & SSE &  & 0.045 & 0.041 & 0.041 & 0.041 &  & 0.034 & 0.033 & 0.033 & 0.033 \\
 & ASE &  & 0.048 & 0.042 & 0.041 & 0.041 &  & 0.035 & 0.033 & 0.033 & 0.033 \\
\midrule
300 & Mean & $b_1=0.2$ & 0.180 & 0.184 & 0.185 & 0.182 & $b_1=0.2$ & 0.182 & 0.183 & 0.184 & 0.183 \\
 & SSE &  & 0.121 & 0.116 & 0.117 & 0.118 &  & 0.112 & 0.110 & 0.111 & 0.110 \\
 & ASE &  & 0.130 & 0.116 & 0.115 & 0.116 &  & 0.120 & 0.112 & 0.112 & 0.112 \\[1ex]
600 & Mean &  & 0.189 & 0.192 & 0.193 & 0.191 &  & 0.190 & 0.190 & 0.190 & 0.190 \\
 & SSE &  & 0.089 & 0.083 & 0.084 & 0.084 &  & 0.080 & 0.079 & 0.079 & 0.079 \\
 & ASE &  & 0.090 & 0.081 & 0.080 & 0.080 &  & 0.083 & 0.078 & 0.078 & 0.078 \\[1ex]
1000 & Mean &  & 0.192 & 0.194 & 0.195 & 0.194 &  & 0.194 & 0.194 & 0.194 & 0.194 \\
 & SSE &  & 0.068 & 0.064 & 0.064 & 0.064 &  & 0.063 & 0.061 & 0.062 & 0.061 \\
 & ASE &  & 0.069 & 0.062 & 0.062 & 0.062 &  & 0.064 & 0.060 & 0.060 & 0.060 \\
\midrule
300 & Mean & $\sigma^2=1$ & 1.174 & 1.171 & 1.171 & 1.173 & $\sigma^2=0.4$ & 0.556 & 0.556 & 0.556 & 0.556 \\
 & SSE &  & 0.124 & 0.124 & 0.124 & 0.125 &  & 0.058 & 0.058 & 0.058 & 0.058 \\
 & ASE &  & 0.128 & 0.127 & 0.127 & 0.128 &  & 0.046 & 0.046 & 0.046 & 0.046 \\[1ex]
600 & Mean &  & 1.173 & 1.172 & 1.172 & 1.172 &  & 0.557 & 0.557 & 0.557 & 0.557 \\
 & SSE &  & 0.087 & 0.087 & 0.087 & 0.087 &  & 0.040 & 0.040 & 0.040 & 0.040 \\
 & ASE &  & 0.092 & 0.091 & 0.091 & 0.091 &  & 0.033 & 0.033 & 0.033 & 0.033 \\[1ex]
1000 & Mean &  & 1.172 & 1.171 & 1.171 & 1.171 &  & 0.557 & 0.557 & 0.557 & 0.557 \\
 & SSE &  & 0.067 & 0.067 & 0.067 & 0.067 &  & 0.031 & 0.031 & 0.031 & 0.031 \\
 & ASE &  & 0.071 & 0.071 & 0.071 & 0.071 &  & 0.025 & 0.025 & 0.025 & 0.025 \\
\bottomrule
\end{tabular}}
\end{table}

\newpage

\begin{table}[th!]
\centering
\caption{Different estimation approaches applied to simulated count time series from INGARCH$(1,1)$-CMEM with Bin-counting series, like in Table~\ref{tabDGP1Bin}, but by assuming a Poi-counting series (model misspecification): Mean, simulated standard error (SSE), and mean approximated standard error (ASE) for different sample sizes~$n$.}
\label{tabDGP1Bin_misp}

\smallskip
\resizebox{\linewidth}{!}{
\begin{tabular}{lrcrrrrcrrrr}
\toprule
$n$ &  & Param. & PQ & NQ & EQ & 2W & Param. & PQ & NQ & EQ & 2W \\
\midrule
300 & Mean & $a_0=2.8$ & 2.995 & 2.926 & 2.920 & 2.938 & $a_0=2.8$ & 2.967 & 2.942 & 2.939 & 2.948 \\
 & SSE &  & 0.713 & 0.667 & 0.667 & 0.684 &  & 0.655 & 0.629 & 0.629 & 0.633 \\
 & ASE &  & 0.757 & 0.679 & 0.678 & 0.683 &  & 0.689 & 0.669 & 0.671 & 0.668 \\[1ex]
600 & Mean &  & 2.913 & 2.869 & 2.865 & 2.876 &  & 2.892 & 2.879 & 2.878 & 2.881 \\
 & SSE &  & 0.520 & 0.477 & 0.477 & 0.485 &  & 0.479 & 0.454 & 0.453 & 0.457 \\
 & ASE &  & 0.522 & 0.471 & 0.471 & 0.472 &  & 0.476 & 0.464 & 0.466 & 0.463 \\[1ex]
1000 & Mean &  & 2.875 & 2.844 & 2.842 & 2.850 &  & 2.865 & 2.856 & 2.855 & 2.858 \\
 & SSE &  & 0.406 & 0.369 & 0.368 & 0.372 &  & 0.366 & 0.347 & 0.347 & 0.349 \\
 & ASE &  & 0.400 & 0.361 & 0.362 & 0.362 &  & 0.366 & 0.357 & 0.358 & 0.356 \\
\midrule
300 & Mean & $a_1=0.4$ & 0.384 & 0.393 & 0.394 & 0.392 & $a_1=0.4$ & 0.392 & 0.394 & 0.395 & 0.394 \\
 & SSE &  & 0.077 & 0.072 & 0.072 & 0.074 &  & 0.056 & 0.054 & 0.054 & 0.054 \\
 & ASE &  & 0.076 & 0.069 & 0.069 & 0.069 &  & 0.054 & 0.053 & 0.053 & 0.053 \\[1ex]
600 & Mean &  & 0.391 & 0.396 & 0.396 & 0.395 &  & 0.396 & 0.398 & 0.398 & 0.397 \\
 & SSE &  & 0.056 & 0.050 & 0.050 & 0.051 &  & 0.040 & 0.038 & 0.038 & 0.038 \\
 & ASE &  & 0.054 & 0.049 & 0.049 & 0.049 &  & 0.038 & 0.037 & 0.038 & 0.037 \\[1ex]
1000 & Mean &  & 0.395 & 0.398 & 0.398 & 0.398 &  & 0.398 & 0.399 & 0.399 & 0.399 \\
 & SSE &  & 0.043 & 0.038 & 0.038 & 0.039 &  & 0.030 & 0.029 & 0.029 & 0.029 \\
 & ASE &  & 0.042 & 0.038 & 0.038 & 0.038 &  & 0.030 & 0.029 & 0.029 & 0.029 \\
\midrule
300 & Mean & $b_1=0.2$ & 0.181 & 0.185 & 0.185 & 0.184 & $b_1=0.2$ & 0.181 & 0.183 & 0.183 & 0.182 \\
 & SSE &  & 0.118 & 0.112 & 0.112 & 0.115 &  & 0.106 & 0.103 & 0.103 & 0.103 \\
 & ASE &  & 0.122 & 0.112 & 0.112 & 0.112 &  & 0.112 & 0.109 & 0.109 & 0.109 \\[1ex]
600 & Mean &  & 0.189 & 0.192 & 0.192 & 0.191 &  & 0.189 & 0.190 & 0.190 & 0.190 \\
 & SSE &  & 0.086 & 0.080 & 0.080 & 0.082 &  & 0.078 & 0.075 & 0.075 & 0.075 \\
 & ASE &  & 0.084 & 0.078 & 0.078 & 0.078 &  & 0.078 & 0.076 & 0.076 & 0.076 \\[1ex]
1000 & Mean &  & 0.192 & 0.194 & 0.194 & 0.193 &  & 0.192 & 0.193 & 0.193 & 0.193 \\
 & SSE &  & 0.067 & 0.062 & 0.062 & 0.063 &  & 0.059 & 0.057 & 0.057 & 0.057 \\
 & ASE &  & 0.065 & 0.060 & 0.060 & 0.060 &  & 0.060 & 0.058 & 0.059 & 0.058 \\
\midrule
300 & Mean & $\sigma^2=1$ & 0.835 & 0.831 & 0.831 & 0.834 & $\sigma^2=0.4$ & 0.247 & 0.246 & 0.246 & 0.246 \\
 & SSE &  & 0.091 & 0.091 & 0.091 & 0.105 &  & 0.034 & 0.034 & 0.034 & 0.034 \\
 & ASE &  & 0.100 & 0.099 & 0.099 & 0.100 &  & 0.029 & 0.029 & 0.029 & 0.029 \\[1ex]
600 & Mean &  & 0.834 & 0.832 & 0.832 & 0.832 &  & 0.246 & 0.246 & 0.246 & 0.246 \\
 & SSE &  & 0.065 & 0.065 & 0.065 & 0.065 &  & 0.023 & 0.023 & 0.023 & 0.023 \\
 & ASE &  & 0.071 & 0.071 & 0.071 & 0.071 &  & 0.021 & 0.021 & 0.021 & 0.021 \\[1ex]
1000 & Mean &  & 0.834 & 0.832 & 0.832 & 0.832 &  & 0.246 & 0.246 & 0.246 & 0.246 \\
 & SSE &  & 0.049 & 0.049 & 0.049 & 0.049 &  & 0.018 & 0.018 & 0.018 & 0.018 \\
 & ASE &  & 0.055 & 0.055 & 0.055 & 0.055 &  & 0.016 & 0.016 & 0.016 & 0.016 \\
\bottomrule
\end{tabular}}
\end{table}

\newpage

\begin{table}[th!]
\centering
\caption{Mean MAR values for different estimation approaches applied to simulated count time series from INGARCH$(1,1)$-CMEM, where correct (corr) vs.\ misspecified (misp) model is assumed.
Refers to simulations from Table~\ref{tabDGP1Poi} vs.\ Table~\ref{tabDGP1Poi_misp} (Poi-counting series, misspecified as Bin-counting series), and from Table~\ref{tabDGP1Bin} vs.\ Table~\ref{tabDGP1Bin_misp} (Bin-counting series, misspecified as Poi-counting series).}
\label{tabDGP1_misp_MAR}

\smallskip
\resizebox{\linewidth}{!}{
\begin{tabular}{lrcrrrrcrrrr}
\toprule
$n$ &  & Param. & PQ & NQ & EQ & 2W & Param. & PQ & NQ & EQ & 2W \\
\midrule
\multicolumn{12}{l}{Simulations from Table~\ref{tabDGP1Poi} vs.\ Table~\ref{tabDGP1Poi_misp}} \\
\midrule
300 & corr & $\sigma^2=1$ & 5.885 & 5.892 & 5.893 & 5.896 & $\sigma^2=0.4$ & 4.040 & 4.041 & 4.042 & 4.041 \\
 & misp &  & 5.885 & 5.892 & 5.893 & 5.893 &  & 4.040 & 4.041 & 4.042 & 4.042 \\[1ex]
600 & corr &  & 5.899 & 5.902 & 5.903 & 5.904 &  & 4.052 & 4.053 & 4.054 & 4.053 \\
 & misp &  & 5.899 & 5.902 & 5.903 & 5.902 &  & 4.052 & 4.053 & 4.054 & 4.053 \\[1ex]
1000 & corr &  & 5.904 & 5.906 & 5.906 & 5.906 &  & 4.055 & 4.055 & 4.055 & 4.055 \\
 & misp &  & 5.904 & 5.906 & 5.906 & 5.906 &  & 4.055 & 4.055 & 4.055 & 4.055 \\
\midrule
\multicolumn{12}{l}{Simulations from Table~\ref{tabDGP1Bin} vs.\ Table~\ref{tabDGP1Bin_misp}} \\
\midrule
300 & corr & $\sigma^2=1$ & 5.336 & 5.337 & 5.338 & 5.337 & $\sigma^2=0.4$ & 3.050 & 3.047 & 3.048 & 3.047 \\
 & misp &  & 5.336 & 5.337 & 5.338 & 5.340 &  & 3.050 & 3.047 & 3.048 & 3.047 \\[1ex]
600 & corr &  & 5.322 & 5.320 & 5.320 & 5.320 &  & 3.029 & 3.027 & 3.027 & 3.027 \\
 & misp &  & 5.322 & 5.320 & 5.320 & 5.322 &  & 3.029 & 3.027 & 3.027 & 3.027 \\[1ex]
1000 & corr &  & 5.309 & 5.306 & 5.307 & 5.307 &  & 3.018 & 3.017 & 3.017 & 3.017 \\
 & misp &  & 5.309 & 5.306 & 5.307 & 5.307 &  & 3.018 & 3.017 & 3.017 & 3.017 \\
\bottomrule
\end{tabular}}
\end{table}

\newpage

\begin{table}[h!]
\centering
\caption{Mean MSPR values for different estimation approaches applied to simulated count time series from INGARCH$(1,1)$-CMEM, where correct (corr) vs.\ misspecified (misp) model is assumed.
Refers to simulations from Table~\ref{tabDGP1Poi} vs.\ Table~\ref{tabDGP1Poi_misp} (Poi-counting series, misspecified as Bin-counting series), and from Table~\ref{tabDGP1Bin} vs.\ Table~\ref{tabDGP1Bin_misp} (Bin-counting series, misspecified as Poi-counting series).}
\label{tabDGP1_misp_MSPR}

\smallskip
\resizebox{\linewidth}{!}{
\begin{tabular}{lrcrrrrcrrrr}
\toprule
$n$ &  & Param. & PQ & NQ & EQ & 2W & Param. & PQ & NQ & EQ & 2W \\
\midrule
\multicolumn{12}{l}{Simulations from Table~\ref{tabDGP1Poi} vs.\ Table~\ref{tabDGP1Poi_misp}} \\
\midrule
300 & corr & $\sigma^2=1$ & 1.000 & 1.000 & 1.000 & 1.000 & $\sigma^2=0.4$ & 1.000 & 1.000 & 1.000 & 1.000 \\
 & misp &  & 1.000 & 1.000 & 1.000 & 1.000 &  & 0.999 & 0.999 & 0.999 & 0.999 \\[1ex]
600 & corr &  & 1.000 & 1.000 & 1.000 & 1.000 &  & 1.000 & 1.000 & 1.000 & 1.000 \\
 & misp &  & 1.000 & 1.000 & 1.000 & 1.000 &  & 0.999 & 0.999 & 0.999 & 0.999 \\[1ex]
1000 & corr &  & 1.000 & 1.000 & 1.000 & 1.000 &  & 1.000 & 1.000 & 1.000 & 1.000 \\
 & misp &  & 1.000 & 1.000 & 1.000 & 1.000 &  & 0.999 & 0.999 & 0.999 & 0.999 \\
\midrule
\multicolumn{12}{l}{Simulations from Table~\ref{tabDGP1Bin} vs.\ Table~\ref{tabDGP1Bin_misp}} \\
\midrule
300 & corr & $\sigma^2=1$ & 1.000 & 1.000 & 1.000 & 1.000 & $\sigma^2=0.4$ & 1.000 & 1.000 & 1.000 & 1.000 \\
 & misp &  & 1.004 & 1.004 & 1.004 & 1.004 &  & 1.013 & 1.014 & 1.014 & 1.014 \\[1ex]
600 & corr &  & 1.000 & 1.000 & 1.000 & 1.000 &  & 1.000 & 1.000 & 1.000 & 1.000 \\
 & misp &  & 1.004 & 1.004 & 1.004 & 1.004 &  & 1.014 & 1.014 & 1.014 & 1.014 \\[1ex]
1000 & corr &  & 1.000 & 1.000 & 1.000 & 1.000 &  & 1.000 & 1.000 & 1.000 & 1.000 \\
 & misp &  & 1.004 & 1.004 & 1.004 & 1.004 &  & 1.014 & 1.014 & 1.014 & 1.014 \\
\bottomrule
\end{tabular}}
\end{table}

\newpage

\begin{table}[th!]
\centering
\caption{Different estimation approaches applied to simulated count time series from INGARCH$(1,1)$-CMEM with Poi-counting series, like in Table~\ref{tabDGP1Poi}, but by using softplus response function $s_2(x)$ for data generation (model misspecification): Mean, simulated standard error (SSE), and mean approximated standard error (ASE) for different sample sizes~$n$.}
\label{tabDGP1Poi_sp2}

\smallskip
\resizebox{\linewidth}{!}{
\begin{tabular}{lrcrrrrcrrrr}
\toprule
$n$ &  & Param. & PQ & NQ & EQ & 2W & Param. & PQ & NQ & EQ & 2W \\
\midrule
300 & Mean & $a_0=2.8$ & 3.252 & 3.201 & 3.197 & 3.215 & $a_0=2.8$ & 3.193 & 3.182 & 3.181 & 3.184 \\
 & SSE &  & 0.790 & 0.756 & 0.758 & 0.766 &  & 0.730 & 0.713 & 0.715 & 0.714 \\
 & ASE &  & 0.871 & 0.771 & 0.769 & 0.772 &  & 0.800 & 0.750 & 0.749 & 0.749 \\[1ex]
600 & Mean &  & 3.178 & 3.158 & 3.158 & 3.165 &  & 3.131 & 3.141 & 3.144 & 3.143 \\
 & SSE &  & 0.575 & 0.540 & 0.541 & 0.546 &  & 0.530 & 0.519 & 0.521 & 0.520 \\
 & ASE &  & 0.599 & 0.533 & 0.532 & 0.533 &  & 0.554 & 0.523 & 0.523 & 0.522 \\[1ex]
1000 & Mean &  & 3.124 & 3.125 & 3.126 & 3.129 &  & 3.099 & 3.111 & 3.113 & 3.113 \\
 & SSE &  & 0.449 & 0.421 & 0.422 & 0.424 &  & 0.422 & 0.409 & 0.410 & 0.410 \\
 & ASE &  & 0.460 & 0.410 & 0.410 & 0.410 &  & 0.426 & 0.402 & 0.402 & 0.402 \\
\midrule
300 & Mean & $a_1=0.4$ & 0.372 & 0.377 & 0.377 & 0.377 & $a_1=0.4$ & 0.377 & 0.378 & 0.378 & 0.378 \\
 & SSE &  & 0.081 & 0.076 & 0.076 & 0.076 &  & 0.062 & 0.060 & 0.060 & 0.060 \\
 & ASE &  & 0.084 & 0.074 & 0.074 & 0.074 &  & 0.063 & 0.059 & 0.059 & 0.059 \\[1ex]
600 & Mean &  & 0.381 & 0.382 & 0.382 & 0.383 &  & 0.382 & 0.382 & 0.382 & 0.382 \\
 & SSE &  & 0.057 & 0.052 & 0.052 & 0.052 &  & 0.044 & 0.042 & 0.042 & 0.042 \\
 & ASE &  & 0.060 & 0.053 & 0.053 & 0.053 &  & 0.045 & 0.042 & 0.042 & 0.042 \\[1ex]
1000 & Mean &  & 0.384 & 0.384 & 0.383 & 0.383 &  & 0.384 & 0.383 & 0.383 & 0.383 \\
 & SSE &  & 0.045 & 0.040 & 0.040 & 0.040 &  & 0.034 & 0.033 & 0.033 & 0.033 \\
 & ASE &  & 0.047 & 0.041 & 0.041 & 0.041 &  & 0.035 & 0.033 & 0.032 & 0.032 \\
\midrule
300 & Mean & $b_1=0.2$ & 0.176 & 0.179 & 0.179 & 0.177 & $b_1=0.2$ & 0.179 & 0.180 & 0.180 & 0.179 \\
 & SSE &  & 0.124 & 0.120 & 0.121 & 0.122 &  & 0.116 & 0.114 & 0.114 & 0.114 \\
 & ASE &  & 0.135 & 0.122 & 0.122 & 0.122 &  & 0.126 & 0.119 & 0.119 & 0.119 \\[1ex]
600 & Mean &  & 0.180 & 0.182 & 0.182 & 0.180 &  & 0.184 & 0.183 & 0.183 & 0.183 \\
 & SSE &  & 0.091 & 0.087 & 0.087 & 0.088 &  & 0.084 & 0.083 & 0.084 & 0.083 \\
 & ASE &  & 0.093 & 0.084 & 0.084 & 0.084 &  & 0.087 & 0.083 & 0.083 & 0.083 \\[1ex]
1000 & Mean &  & 0.187 & 0.187 & 0.186 & 0.186 &  & 0.188 & 0.187 & 0.187 & 0.187 \\
 & SSE &  & 0.070 & 0.067 & 0.067 & 0.068 &  & 0.067 & 0.065 & 0.065 & 0.065 \\
 & ASE &  & 0.071 & 0.065 & 0.065 & 0.065 &  & 0.067 & 0.064 & 0.064 & 0.064 \\
\midrule
300 & Mean & $\sigma^2=1$ & 1.165 & 1.162 & 1.162 & 1.163 & $\sigma^2=0.4$ & 0.550 & 0.549 & 0.549 & 0.549 \\
 & SSE &  & 0.123 & 0.122 & 0.122 & 0.123 &  & 0.057 & 0.057 & 0.057 & 0.057 \\
 & ASE &  & 0.127 & 0.126 & 0.126 & 0.126 &  & 0.045 & 0.045 & 0.045 & 0.045 \\[1ex]
600 & Mean &  & 1.166 & 1.164 & 1.164 & 1.164 &  & 0.550 & 0.550 & 0.550 & 0.550 \\
 & SSE &  & 0.086 & 0.086 & 0.086 & 0.086 &  & 0.040 & 0.040 & 0.040 & 0.040 \\
 & ASE &  & 0.091 & 0.090 & 0.090 & 0.090 &  & 0.032 & 0.032 & 0.032 & 0.032 \\[1ex]
1000 & Mean &  & 1.163 & 1.162 & 1.162 & 1.162 &  & 0.550 & 0.550 & 0.550 & 0.550 \\
 & SSE &  & 0.067 & 0.066 & 0.066 & 0.066 &  & 0.031 & 0.031 & 0.031 & 0.031 \\
 & ASE &  & 0.070 & 0.070 & 0.070 & 0.070 &  & 0.025 & 0.025 & 0.025 & 0.025 \\
\bottomrule
\end{tabular}}
\end{table}

\newpage

\begin{table}[th!]
\centering
\caption{Different estimation approaches applied to simulated count time series from INGARCH$(1,1)$-CMEM with Bin-counting series, like in Table~\ref{tabDGP1Bin}, but by using softplus response function $s_2(x)$ for data generation (model misspecification): Mean, simulated standard error (SSE), and mean approximated standard error (ASE) for different sample sizes~$n$.}
\label{tabDGP1Bin_sp2}

\smallskip
\resizebox{\linewidth}{!}{
\begin{tabular}{lrcrrrrcrrrr}
\toprule
$n$ &  & Param. & PQ & NQ & EQ & 2W & Param. & PQ & NQ & EQ & 2W \\
\midrule
300 & Mean & $a_0=2.8$ & 3.248 & 3.204 & 3.202 & 3.214 & $a_0=2.8$ & 3.198 & 3.194 & 3.195 & 3.194 \\
 & SSE &  & 0.772 & 0.729 & 0.730 & 0.746 &  & 0.713 & 0.688 & 0.688 & 0.692 \\
 & ASE &  & 0.836 & 0.760 & 0.760 & 0.763 &  & 0.763 & 0.746 & 0.749 & 0.744 \\[1ex]
600 & Mean &  & 3.165 & 3.149 & 3.150 & 3.154 &  & 3.120 & 3.131 & 3.134 & 3.126 \\
 & SSE &  & 0.570 & 0.528 & 0.528 & 0.536 &  & 0.529 & 0.505 & 0.505 & 0.508 \\
 & ASE &  & 0.575 & 0.526 & 0.527 & 0.527 &  & 0.526 & 0.518 & 0.520 & 0.516 \\[1ex]
1000 & Mean &  & 3.125 & 3.124 & 3.126 & 3.126 &  & 3.091 & 3.106 & 3.109 & 3.101 \\
 & SSE &  & 0.447 & 0.410 & 0.410 & 0.413 &  & 0.404 & 0.386 & 0.386 & 0.387 \\
 & ASE &  & 0.440 & 0.404 & 0.405 & 0.404 &  & 0.404 & 0.398 & 0.400 & 0.396 \\
\midrule
300 & Mean & $a_1=0.4$ & 0.372 & 0.377 & 0.377 & 0.376 & $a_1=0.4$ & 0.376 & 0.377 & 0.377 & 0.377 \\
 & SSE &  & 0.078 & 0.072 & 0.072 & 0.073 &  & 0.057 & 0.055 & 0.055 & 0.055 \\
 & ASE &  & 0.075 & 0.068 & 0.068 & 0.068 &  & 0.054 & 0.053 & 0.053 & 0.053 \\[1ex]
600 & Mean &  & 0.379 & 0.380 & 0.380 & 0.380 &  & 0.381 & 0.380 & 0.380 & 0.380 \\
 & SSE &  & 0.056 & 0.050 & 0.050 & 0.051 &  & 0.040 & 0.038 & 0.038 & 0.038 \\
 & ASE &  & 0.054 & 0.048 & 0.048 & 0.048 &  & 0.039 & 0.038 & 0.038 & 0.038 \\[1ex]
1000 & Mean &  & 0.384 & 0.383 & 0.382 & 0.383 &  & 0.383 & 0.381 & 0.381 & 0.382 \\
 & SSE &  & 0.043 & 0.038 & 0.038 & 0.039 &  & 0.031 & 0.029 & 0.029 & 0.029 \\
 & ASE &  & 0.042 & 0.038 & 0.038 & 0.038 &  & 0.030 & 0.029 & 0.029 & 0.029 \\
\midrule
300 & Mean & $b_1=0.2$ & 0.176 & 0.178 & 0.178 & 0.177 & $b_1=0.2$ & 0.178 & 0.178 & 0.178 & 0.178 \\
 & SSE &  & 0.121 & 0.116 & 0.117 & 0.119 &  & 0.111 & 0.109 & 0.109 & 0.109 \\
 & ASE &  & 0.128 & 0.119 & 0.118 & 0.119 &  & 0.119 & 0.117 & 0.117 & 0.117 \\[1ex]
600 & Mean &  & 0.183 & 0.184 & 0.184 & 0.184 &  & 0.186 & 0.185 & 0.185 & 0.186 \\
 & SSE &  & 0.089 & 0.084 & 0.084 & 0.085 &  & 0.082 & 0.080 & 0.080 & 0.080 \\
 & ASE &  & 0.088 & 0.082 & 0.082 & 0.082 &  & 0.082 & 0.081 & 0.082 & 0.081 \\[1ex]
1000 & Mean &  & 0.186 & 0.187 & 0.187 & 0.186 &  & 0.189 & 0.188 & 0.188 & 0.188 \\
 & SSE &  & 0.070 & 0.065 & 0.066 & 0.066 &  & 0.063 & 0.061 & 0.061 & 0.061 \\
 & ASE &  & 0.067 & 0.063 & 0.063 & 0.063 &  & 0.063 & 0.063 & 0.063 & 0.062 \\
\midrule
300 & Mean & $\sigma^2=1$ & 0.842 & 0.839 & 0.839 & 0.841 & $\sigma^2=0.4$ & 0.253 & 0.253 & 0.253 & 0.253 \\
 & SSE &  & 0.092 & 0.091 & 0.091 & 0.096 &  & 0.035 & 0.035 & 0.035 & 0.035 \\
 & ASE &  & 0.100 & 0.099 & 0.099 & 0.100 &  & 0.029 & 0.029 & 0.029 & 0.029 \\[1ex]
600 & Mean &  & 0.842 & 0.840 & 0.840 & 0.841 &  & 0.252 & 0.252 & 0.252 & 0.252 \\
 & SSE &  & 0.065 & 0.065 & 0.065 & 0.066 &  & 0.024 & 0.024 & 0.024 & 0.024 \\
 & ASE &  & 0.071 & 0.071 & 0.071 & 0.071 &  & 0.021 & 0.020 & 0.020 & 0.020 \\[1ex]
1000 & Mean &  & 0.842 & 0.841 & 0.841 & 0.841 &  & 0.252 & 0.252 & 0.252 & 0.252 \\
 & SSE &  & 0.049 & 0.049 & 0.049 & 0.049 &  & 0.019 & 0.019 & 0.019 & 0.019 \\
 & ASE &  & 0.055 & 0.055 & 0.055 & 0.055 &  & 0.016 & 0.016 & 0.016 & 0.016 \\
\bottomrule
\end{tabular}}
\end{table}

\newpage

\begin{table}[th!]
\centering
\caption{Mean MAR values for different estimation approaches applied to simulated count time series from INGARCH$(1,1)$-CMEM, where linear (lin) or softplus (soft) response function $s_2(x)$ is used for data generation.
Refers to simulations from Table~\ref{tabDGP1Poi} vs.\ Table~\ref{tabDGP1Poi_sp2} (Poi-counting series), and from Table~\ref{tabDGP1Bin} vs.\ Table~\ref{tabDGP1Bin_sp2} (Bin-counting series).}
\label{tabDGP1_sp2_MAR}

\smallskip
\resizebox{\linewidth}{!}{
\begin{tabular}{lrcrrrrcrrrr}
\toprule
$n$ &  & Param. & PQ & NQ & EQ & 2W & Param. & PQ & NQ & EQ & 2W \\
\midrule
\multicolumn{12}{l}{Simulations from Table~\ref{tabDGP1Poi} vs.\ Table~\ref{tabDGP1Poi_sp2}} \\
\midrule
300 & lin & $\sigma^2=1$ & 5.885 & 5.892 & 5.893 & 5.896 & $\sigma^2=0.4$ & 4.040 & 4.041 & 4.042 & 4.041 \\
 & soft &  & 6.127 & 6.132 & 6.134 & 6.135 &  & 4.155 & 4.156 & 4.157 & 4.156 \\
600 & lin &  & 5.899 & 5.902 & 5.903 & 5.904 &  & 4.052 & 4.053 & 4.054 & 4.053 \\
 & soft &  & 6.151 & 6.153 & 6.153 & 6.153 &  & 4.170 & 4.171 & 4.171 & 4.171 \\
1000 & lin &  & 5.904 & 5.906 & 5.906 & 5.906 &  & 4.055 & 4.055 & 4.055 & 4.055 \\
 & soft &  & 6.157 & 6.157 & 6.157 & 6.157 &  & 4.176 & 4.176 & 4.176 & 4.176 \\
\midrule
\multicolumn{12}{l}{Simulations from Table~\ref{tabDGP1Bin} vs.\ Table~\ref{tabDGP1Bin_sp2}} \\
\midrule
300 & lin & $\sigma^2=1$ & 5.336 & 5.337 & 5.338 & 5.337 & $\sigma^2=0.4$ & 3.050 & 3.047 & 3.048 & 3.047 \\
 & soft &  & 5.561 & 5.559 & 5.560 & 5.563 &  & 3.146 & 3.144 & 3.144 & 3.144 \\
600 & lin &  & 5.322 & 5.320 & 5.320 & 5.320 &  & 3.029 & 3.027 & 3.027 & 3.027 \\
 & soft &  & 5.546 & 5.540 & 5.541 & 5.542 &  & 3.127 & 3.125 & 3.125 & 3.125 \\
1000 & lin &  & 5.309 & 5.306 & 5.307 & 5.307 &  & 3.018 & 3.017 & 3.017 & 3.017 \\
 & soft &  & 5.532 & 5.527 & 5.527 & 5.527 &  & 3.115 & 3.114 & 3.114 & 3.114 \\
\bottomrule
\end{tabular}}
\end{table}

\newpage

\begin{table}[h!]
\centering
\caption{Mean MSPR values for different estimation approaches applied to simulated count time series from INGARCH$(1,1)$-CMEM, where linear (lin) or softplus (soft) response function $s_2(x)$ is used for data generation.
Refers to simulations from Table~\ref{tabDGP1Poi} vs.\ Table~\ref{tabDGP1Poi_sp2} (Poi-counting series), and from Table~\ref{tabDGP1Bin} vs.\ Table~\ref{tabDGP1Bin_sp2} (Bin-counting series).}
\label{tabDGP1_sp2_MSPR}

\smallskip
\resizebox{\linewidth}{!}{
\begin{tabular}{lrcrrrrcrrrr}
\toprule
$n$ &  & Param. & PQ & NQ & EQ & 2W & Param. & PQ & NQ & EQ & 2W \\
\midrule
\multicolumn{12}{l}{Simulations from Table~\ref{tabDGP1Poi} vs.\ Table~\ref{tabDGP1Poi_sp2}} \\
\midrule
300 & lin & $\sigma^2=1$ & 1.000 & 1.000 & 1.000 & 1.000 & $\sigma^2=0.4$ & 1.000 & 1.000 & 1.000 & 1.000 \\
 & soft &  & 0.999 & 1.000 & 1.000 & 1.000 &  & 0.999 & 1.000 & 1.000 & 1.000 \\
600 & lin &  & 1.000 & 1.000 & 1.000 & 1.000 &  & 1.000 & 1.000 & 1.000 & 1.000 \\
 & soft &  & 1.000 & 1.000 & 1.000 & 1.000 &  & 0.999 & 1.000 & 1.000 & 1.000 \\
1000 & lin &  & 1.000 & 1.000 & 1.000 & 1.000 &  & 1.000 & 1.000 & 1.000 & 1.000 \\
 & soft &  & 0.999 & 1.000 & 1.000 & 1.000 &  & 1.000 & 1.000 & 1.000 & 1.000 \\
\midrule
\multicolumn{12}{l}{Simulations from Table~\ref{tabDGP1Bin} vs.\ Table~\ref{tabDGP1Bin_sp2}} \\
\midrule
300 & lin & $\sigma^2=1$ & 1.000 & 1.000 & 1.000 & 1.000 & $\sigma^2=0.4$ & 1.000 & 1.000 & 1.000 & 1.000 \\
 & soft &  & 1.003 & 1.003 & 1.003 & 1.003 &  & 1.011 & 1.011 & 1.011 & 1.011 \\
600 & lin &  & 1.000 & 1.000 & 1.000 & 1.000 &  & 1.000 & 1.000 & 1.000 & 1.000 \\
 & soft &  & 1.003 & 1.003 & 1.003 & 1.003 &  & 1.011 & 1.012 & 1.012 & 1.012 \\
1000 & lin &  & 1.000 & 1.000 & 1.000 & 1.000 &  & 1.000 & 1.000 & 1.000 & 1.000 \\
 & soft &  & 1.003 & 1.003 & 1.003 & 1.003 &  & 1.011 & 1.012 & 1.012 & 1.012 \\
\bottomrule
\end{tabular}}
\end{table}

\end{document}